\def\draft{1}
\newcolumntype{C}{>{$}c<{$}}
\newcolumntype{R}{>{$}r<{$}}
\newcommand{\nc}{\newcommand}
\newcommand{\<}{\langle}
\renewcommand{\>}{\rangle}
\newtheorem{theorem}{Theorem}
\newtheorem{definition}[theorem]{Definition}
\newtheorem{lemma}[theorem]{Lemma}
\newtheorem{fact}[theorem]{Fact}
\newtheorem{corollary}[theorem]{Corollary}
\newcommand{\thmref}[1]{Theorem~\ref{thm:#1}}
\newcommand{\lemref}[1]{Lemma~\ref{lem:#1}}
\newcommand{\myeqref}[1]{\eqref{eq:#1}}
\nc\eq[1]{(\ref{eq:#1})}
\newcommand{\setft}[1]{\mathrm{#1}}
\newcommand{\Density}{\setft{D}}
\newcommand{\NP}{\textsc{NP}}
\newcommand{\MIP}{\textsc{MIP}}
\newcommand{\QMA}{\textsc{QMA}}
\newcommand{\NEXP}{\textsc{NEXP}}
\newcommand{\PSPACE}{\textsc{PSPACE}}
\newcommand{\epr}{{\rm EPR}}
\renewcommand{\a}{\alpha}
\newcommand{\h}{\mathcal{H}}
\newcommand{\bra}[1]{\< #1 |}
\newcommand{\ket}[1]{| #1 \>}
\newcommand{\eps}{\epsilon}
\newcommand{\ot}{\otimes}
\newcommand{\Xlin}{\mathcal{X}} 
\newcommand{\Zlin}{\mathcal{Z}}
\newcommand{\cA}{\mathcal{A}}
\newcommand{\mH}{\mathcal{H}}
\DeclareMathOperator{\E}{\mathbf{E}}
\DeclareMathOperator{\Id}{\mathbb{I}}
\DeclareMathOperator{\poly}{poly}
\DeclareMathOperator{\Tr}{Tr}
\DeclareMathOperator{\CON}{C}
\DeclareMathOperator{\DIS}{d}
\DeclareMathOperator{\Drho}{\DIS_\rho}
\DeclareMathOperator{\Trho}{\Tr_\rho}
\renewcommand{\Re}{\mathrm{Re}}
\newcommand{\tnote}[1]{\textcolor{magenta}{\small {\textbf{(Thomas:}
      #1\textbf{) }}}}
\newcommand{\anote}[1]{\textcolor{red}{\small {\textbf{(Anand:} #1\textbf{) }}}}
\newcommand{\tnote}[1]{{}}
\newcommand{\anote}[1]{{}}
\newcommand{\avg}[2]{\left\langle #1 \right\rangle_{#2}}
\newcommand{\wg}{\omega^*_{\ensuremath{\rm ac}}}
\newcommand{\enc}{\ensuremath{\rm pauli}}
\newcommand{\energy}{\ensuremath{\rm energy}}
\newcommand{\ac}{{\ensuremath{\rm ac}}}
\newcommand{\stab}{\ensuremath{\rm stab}}
\newcommand{\app}[2]{\approx_{#1}^{#2}}
\newcommand{\C}{\mathbb{C}}
\newcommand{\N}{\mathbb{N}}
\newcommand{\sx}{\hat{X}}
\newcommand{\sz}{\hat{Z}}
\renewcommand{\sp}{\hat{H}}
\newcommand{\cx}{\bar{X}}
\newcommand{\cz}{\bar{Z}}
\newcommand{\sigx}{\sigma_{X}}
\newcommand{\sigz}{\sigma_{Z}}
\newcommand{\EPR}{\mathrm{EPR}}
\newcommand{\MS}{\mathrm{MS}}
\newcommand{\Pos}{\setft{Pos}}
\newcommand{\Obs}{\setft{Obs}}
\newcommand{\proj}[1]{\ket{#1}\bra{#1}}
\begin{document}
\title{Robust self-testing of many-qubit states}
\author{Anand Natarajan\thanks{Center for Theoretical Physics,
    MIT, Cambridge, USA. email:\texttt{anandn@mit.edu}. } \qquad Thomas
  Vidick\thanks{Department of Computing and Mathematical Sciences,
    California Institute of Technology, Pasadena, USA. email:
    \texttt{vidick@cms.caltech.edu}.}} 
\date{\today}
\maketitle

\begin{abstract}
  We introduce a simple two-player test which certifies that the players apply tensor products of Pauli $\sigma_X$ and
  $\sigma_Z$ observables on the tensor product of $n$ EPR pairs. The test has constant robustness: any strategy achieving success probability within an additive $\eps$ of the optimal must be $\poly(\eps)$-close, in the appropriate distance measure, to the honest $n$-qubit strategy. The test involves $2n$-bit questions and $2$-bit answers. 
		The  key technical ingredient is a quantum version of the classical
  linearity test of Blum, Luby, and Rubinfeld.

	As applications of our result we give (i) the first robust self-test for $n$ EPR pairs; (ii) a quantum multiprover interactive proof
  system for the local Hamiltonian problem with a constant number
  of provers and classical questions and answers, and a constant
  completeness-soundness gap independent of system size; (iii) a robust protocol for delegated quantum   computation.
\end{abstract}

% \thispagestyle{fancy}
% \renewcommand{\headrulewidth}{0pt}
% \rhead{MIT-CTP/4846}
% \lhead{}

\section{Introduction}

Quantum non-local games lie at the intersection of several areas of
quantum information. They provide a natural approach to
\emph{device-independent certification} or \emph{self-testing} of
unknown quantum states. Device-independent
certification has applications to quantum cryptography, from quantum
key distribution~\cite{VV14,MS14} to delegated
computation~\cite{ReichardtUV13nature,FitzsimonsH15}. The key idea
behind these applications is that certain nonlocal games, such as the
CHSH game~\cite{CHSH69}, provide natural statistical tests that can be
used to certify that an arbitrary quantum device implements a certain
``strategy'' specified by local measurements on an entangled state
(e.g. an EPR pair).

A common weakness of all existing self-testing results is that their
performance scales poorly with the number of qubits of the state that
is being tested. Given a self-test, define (somewhat informally) its
robustness as the largest $\eps=\eps(\delta)$ such that a success
probability at least $\omega^*_{\rm opt}-\eps$ in the test certifies the
target state up to error (in trace distance and up to local isometries) at most $\delta$, where
$\omega^*_{\rm opt}$ is the success probability achieved by an ideal
strategy. All previously known tests for $n$-qubit states required
$\eps \ll \poly(\delta,1/n)$.%, with an often high degree polynomial
%dependence on both $\eps$ and $n$.

Our main result is a form of robust self-test for {any} state that can
be characterized via expectation values of tensor products of standard Pauli
$\sigma_X$ or $\sigma_Z$ observables. (This includes a tensor product of $n$ EPR pairs; see below.)

\begin{theorem}[simplified\footnote{The complete statement of
the theorem says much more, and provides a characterization of near-optimal strategies.}]\label{thm:main_informal}
  Let $\mathcal{P}$ be a set of $n$-qubit observables, each of
  which is a tensor product of single-qubit Pauli $\sigma_X$, $\sigma_Z$ or
  $\pm I$, and $\lambda_{\rm max} = \| \E_{P \sim \mathcal{P}} [\, P\,]\|$.  For any $\eta \geq 0$ there exists a
  $p = p(\eta)=\Theta(\eta^c)$, where $0<c<1$ is a universal constant,
  and a $7$-player nonlocal game with $O(n)$-bit questions and
  $O(1)$-bit answers such that
	$$\omega^*_{\rm opt}\,
 =\, \frac{1}{2} + p\, \lambda_{\rm max}\,\pm\, \eta.$$
\end{theorem}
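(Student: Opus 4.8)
The plan is to realize the $7$-player game $G$ as a two-branch mixture: with probability $1-2p$ the verifier runs a \emph{quantum linearity test} (the Pauli braiding test obtained from the quantum BLR test), suitably normalized --- e.g. by an outright-rejection coin flip inside this branch --- so that its honest value is exactly $\tfrac12$; and with probability $2p$ it runs an \emph{energy test}. The seven players are those needed to run the linearity test together with a constant number of auxiliary players that carry a payload state through the shared EPR pairs by teleportation. In the honest strategy two distinguished ``Pauli players'' share $n$ EPR pairs and an $n$-qubit payload register prepared in a top eigenvector $\ket{\psi}$ of $\E_{P\sim\mathcal P}[P]$; on a linearity question $(a,b)\in\{0,1\}^{2n}$ they apply $X(a)Z(b)=\bigotimes_i \sigma_X^{a_i}\sigma_Z^{b_i}$ and report the $\pm 1$ eigenvalue, and the remaining players play their auxiliary roles honestly. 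In the energy branch the verifier samples $P\sim\mathcal P$, writes $P=X(a_P)Z(b_P)$ --- a genuine $\pm1$ observable, since each tensor factor being $\sigma_X$, $\sigma_Z$ or $\pm I$ forces $a_P\cdot b_P=0$ --- has the players measure $X(a_P)Z(b_P)$ on the payload through the teleportation gadget, and accepts iff the reported outcome, corrected by the teleportation bits, is $+1$; honestly this succeeds with probability $\tfrac12+\tfrac12\bra{\psi}\E_P P\ket{\psi}=\tfrac12+\tfrac12\lambda_{\rm max}$. Adding the two branches, the honest value is $(1-2p)\tfrac12+2p(\tfrac12+\tfrac12\lambda_{\rm max})=\tfrac12+p\lambda_{\rm max}$, giving the lower bound $\omega^*_{\rm opt}\ge\tfrac12+p\lambda_{\rm max}$.

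For the upper bound I would bound the value of an \emph{arbitrary} strategy. Let $t$ be its failure probability in the (non-reject part of the) linearity branch. By the robustness of the quantum BLR / Pauli braiding test --- crucially a \emph{constant}, and hence $n$-independent, robustness --- after a local isometry the Pauli players' observables $W(a,b)$ are $\poly(t)$-close, in state-dependent distance, to $X(a)Z(b)$ acting on $n$ EPR pairs, with the error uniform over the queried questions, which are arranged to include the points $(a_P,b_P)$ for $P\in\mathcal P$. Transporting this guarantee through the teleportation gadget and the consistency checks that tie the energy and teleportation players to the certified Pauli players, the effective observable measured in the energy branch is $\poly(t)$-close to an honest $X(a_P)Z(b_P)$ on some $n$-qubit payload $\ket{\varphi}$, so the energy branch contributes at most
\[
\tfrac12+\tfrac12\,\E_{P\sim\mathcal P}\bra{\varphi}P\ket{\varphi}+\poly(t)\;\le\;\tfrac12+\tfrac12\lambda_{\rm max}+\poly(t),
\]
using $\E_P\bra{\varphi}P\ket{\varphi}=\bra{\varphi}\E_P P\ket{\varphi}\le\|\E_P P\|=\lambda_{\rm max}$. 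Summing the branches, the strategy's value is at most
\[
\tfrac12+p\lambda_{\rm max}-(1-2p)\tfrac{t}{2}+2p\cdot\poly(t).
\]
Optimizing the trade-off $-(1-2p)\tfrac t2+2p\cdot\poly(t)$ over $t\in[0,1]$ --- the interior optimum sits at $t^*=\Theta\!\big(p^{1/(1-c')}\big)$, where $c'$ is the exponent in the BLR robustness bound --- shows this slack is $O\!\big(p^{1/(1-c')}\big)$. Hence $\omega^*_{\rm opt}\le\tfrac12+p\lambda_{\rm max}+O\!\big(p^{1/(1-c')}\big)$ for every strategy, and taking $p=\Theta(\eta^{\,c})$ with $c=1-c'\in(0,1)$ makes the slack at most $\eta$. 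Combined with the honest lower bound this yields $\omega^*_{\rm opt}=\tfrac12+p\lambda_{\rm max}\pm\eta$; all questions have $O(n)$ bits and all answers $O(1)$ bits, inherited from the Pauli braiding test.

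The main obstacle is the step that upgrades ``passes the linearity branch'' to ``the energy measurement is $\poly(t)$-close to the honest $X(a_P)Z(b_P)$ on the payload, with $n$-independent error''. This demands (i) the constant-robustness analysis of the quantum BLR test, so the closeness does not degrade with the number of qubits; (ii) folding $\{(a_P,b_P):P\in\mathcal P\}$ into the linearity test's question distribution, since $\mathcal P$ may be an arbitrary subset of $\{0,1\}^{2n}$ rather than all of it; and (iii) pushing the guarantee through the teleportation gadget and the energy-to-Pauli consistency checks while composing several approximate operator identities and tracking state-dependent errors so that no $n$-dependent loss accumulates. By comparison, the honest-value computation and the final optimization in $t$ fixing $p=\Theta(\eta^c)$ are routine.
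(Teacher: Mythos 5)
Your high-level architecture matches the paper's: a mixture of a constant-robustness Pauli certification test and an energy measurement, an honest-value computation giving the lower bound, and a trade-off optimization between the certification failure probability $t$ and the $\poly(t)$ soundness loss, yielding $p=\Theta(\eta^c)$. The honest-value arithmetic and the final optimization are correct. However, the mechanism you use to attach the ground state to the certified Pauli observables --- a separate ``payload'' register teleported through the shared EPR pairs --- is not what the paper does, and as described it has a genuine soundness gap. The teleportation correction bits are classical data reported by the provers, and the verifier's acceptance condition in your energy branch is a function of the reported outcome \emph{and} those correction bits; dishonest provers can choose the corrections adaptively (they are never certified by any of your tests), so nothing prevents them from always steering the corrected outcome to $+1$ and winning the energy branch with probability $1$ regardless of $\lambda_{\rm max}$. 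Making measurement-by-teleportation sound requires separately certifying the Bell measurement, which is a substantial additional piece of work that your proposal neither supplies nor flags. Relatedly, your construction gives no reason for the number of players to be $7$; you posit ``a constant number of auxiliary players'' without specifying their role in any test.

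The paper avoids teleportation entirely. The honest provers share a qubit-by-qubit encoding of the ground state under the $7$-qubit Steane CSS code, one share per player; the key observation (from~\cite{FV14,ji2015classical}) is that any CSS code state, viewed bipartitely across one share versus the other six, is \emph{maximally entangled}, so the encoded ground state itself serves as the ``$n$ EPR pairs'' on which the two-player Pauli braiding test is run (special player vs.\ composite player), and simultaneously as the state whose energy is measured via transversal logical operators. This is why $r=7$. The second place where your route diverges is your item (ii): folding the points $(a_P,b_P)$ into the linearity test's question distribution would break the BLR analysis, which needs uniform $a,b$. The paper instead keeps the linearity questions uniform and adds a separate \emph{energy consistency test}, using the self-correction identity $\Zlin(b)=\E_c\,\Zlin(b+c)\Zlin(c)$ to relate the non-uniformly distributed $XZ$-queries to the certified linear observables (Lemma~\ref{lem:energy_consistency}). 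So: right skeleton and correct identification of where the difficulty lies, but the central gadget connecting certification to energy measurement is both different from the paper's and, as stated, unsound.
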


We view the theorem as a robust self-test in the following
sense. Suppose a many-qubit state $\ket{\psi}$ can be characterized as
the leading eigenvector of an operator $O=\E_{P \sim \mathcal{P}} [P]$
obtained as the average of $n$-qubit Pauli operators, with associated
eigenvalue $\lambda_{\rm max}\in[-1,1]$. For example, if $\mathcal{P}$
is the uniform distribution over
$\{\sigma_X \otimes \sigma_X,\sigma_Z\otimes\sigma_Z\}^{\otimes n}$
then $\lambda_{\rm \max}=1$ and the leading eigenvector is the tensor
product of $n$ EPR pairs. More generally, if $H$ is a local
Hamiltonian with $m$ local $XZ$ terms we can take $\mathcal{P}$ to be
$\Id$ with probability $1/2$ and the negation of a random term of $H$
with probability $1/2$. Then
$\lambda_{\rm max} = \frac{1}{2}-\frac{1}{2m}\lambda_{\rm min}(H)$ and
the leading eigenvector is a ground state of $H$.

Theorem~\ref{thm:main_informal} provides a nonlocal game such that the
optimal success probability in the game is directly related to
$\lambda_{\rm max}$, thereby providing a test distinguishing between
small and large $\lambda_{\rm max}$. In fact the complete statement of
the theorem (see Theorem~\ref{thm:main} in Section~\ref{sec:manyprovers}) says much
more. In particular, we provide a complete characterization (up to
local isometries) of strategies achieving a success probability at
least $\omega^*_{\rm opt} - \eps$, for $\eps$ sufficiently small but independent of $n$, showing that such strategies must be based on a particular
encoding (based on a simple, fixed error-correcting code) of an eigenvector associated to $\lambda_{\rm max}$. 

\subsection{Applications}

Before giving an overview of the proof of the theorem we discuss some consequences of the theorem that help underscore its generality.

\paragraph{Hamiltonian complexity.} 
A first consequence of Theorem~\ref{thm:main_informal} is that the ground state
energy of a local Hamiltonian can be certified via a non-local game
with questions of polynomial length and constant-length
answers.%up to \emph{constant} error.

\begin{corollary}\label{cor:qma}
  Let $H$ be an $n$-qubit Hamiltonian that can be expressed as a weighted sum, with real coefficients, of tensor
  products of $\sigma_X$ and $\sigma_Z$ operators on a subset of the
  qubits, and normalize $H$ such that $\|H\|\leq 1$. Suppose it is given that $\lambda_{\rm min}(H) \leq a$ or
  $\lambda_{\rm min}(H) \geq b$ for some $0\leq a<b\leq 1$. There exists a one-round interactive
  proof protocol between a classical polynomial-time verifier and
  $7$ entangled provers where the verifier's (classical)
  questions are $O(n /(b-a))$ bits long, the provers' (classical)
  answers are $O(1)$ bits each, and the maximum probability that the
  verifier accepts is
$$\lambda_{\rm min} \leq a \implies \omega^*_{\rm opt} \geq p_{\rm
  c} := \frac{1}{2} +2\,\eta_0 ,\qquad \lambda_{\rm min} \geq
b \implies \omega^*_{\rm opt} \leq p_{\rm s} := \frac{1}{2} + \eta_0,$$
where $\eta_0>0$ is a small (universal) constant. 
\end{corollary}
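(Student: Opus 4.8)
The plan is to reduce Corollary~\ref{cor:qma} to a direct application of Theorem~\ref{thm:main_informal}. First I would massage the Hamiltonian $H$ into the form handled by the theorem. Since $H$ is a weighted sum of tensor products of $\sigma_X$ and $\sigma_Z$ (with $\|H\|\le 1$), write $H=\sum_i c_i P_i$ with each $P_i$ such a signed Pauli tensor and $\sum_i |c_i|\le C$ for some constant $C$ (after rescaling, one may take $C=1$, absorbing signs into the $P_i$ so that $c_i\ge 0$). Then define $\mathcal{P}$ to be the distribution that outputs $\Id$ with probability $1/2$ and outputs $-P_i$ with probability $\tfrac{c_i}{2}$ (conditioned on choosing a term, proportional to $c_i$); this is exactly the construction described after the theorem statement. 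A short computation gives $\E_{P\sim\mathcal{P}}[P]=\tfrac12\Id-\tfrac12 H$, hence $\lambda_{\rm max}=\bigl\|\E_{P\sim\mathcal{P}}[P]\bigr\|$ is achieved on the leading eigenvector, and $\lambda_{\rm max}=\tfrac12-\tfrac12\lambda_{\rm min}(H)$ provided $\lambda_{\rm min}(H)\ge 0$ dominates; more carefully one checks the operator $\tfrac12\Id-\tfrac12 H$ has largest eigenvalue $\tfrac12-\tfrac12\lambda_{\rm min}(H)$ since $\|H\|\le 1$ keeps all eigenvalues of $\tfrac12\Id -\tfrac12 H$ in $[0,1]$.

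Next I would invoke Theorem~\ref{thm:main_informal} with a choice of error parameter $\eta$ small enough to separate the two cases. Under the promise, $\lambda_{\rm min}(H)\le a$ gives $\lambda_{\rm max}\ge \tfrac12-\tfrac{a}{2}$, while $\lambda_{\rm min}(H)\ge b$ gives $\lambda_{\rm max}\le \tfrac12-\tfrac{b}{2}$; the gap between the two values of $\lambda_{\rm max}$ is $\tfrac{b-a}{2}$. The theorem produces a $7$-player game with $\omega^*_{\rm opt}=\tfrac12+p\,\lambda_{\rm max}\pm\eta$ where $p=\Theta(\eta^c)$. Choosing $\eta = \Theta\bigl((b-a)^{1/(1-c)}\bigr)$ (or more crudely any $\eta$ with $\eta \ll p\cdot(b-a)$, which is consistent since $p=\Theta(\eta^c)$ and $c<1$) ensures $p\cdot\tfrac{b-a}{2} \gg 3\eta$, so the "yes" and "no" intervals for $\omega^*_{\rm opt}$ are disjoint. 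Setting $\eta_0$ to be the resulting (constant, since $b-a$ is treated as fixed — and the question length $O(n/(b-a))$ absorbs its dependence) separation, one gets completeness $p_{\rm c}=\tfrac12+2\eta_0$ in the case $\lambda_{\rm min}(H)\le a$ and soundness $p_{\rm s}=\tfrac12+\eta_0$ in the case $\lambda_{\rm min}(H)\ge b$, after shifting constants appropriately. Finally, one observes the game of Theorem~\ref{thm:main_informal} is exactly a one-round interactive proof with a classical poly-time verifier: the verifier samples the (classical, $O(n)$-bit — here $O(n/(b-a))$ once $\eta$'s dependence on $b-a$ is tracked through the $O(n)$ bound) questions, receives $O(1)$-bit answers, and computes the accept predicate in polynomial time.

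The only genuinely delicate point is bookkeeping the parameter dependence: one must verify that the constant $c<1$ from Theorem~\ref{thm:main_informal} indeed allows the error term $\eta$ to be chosen polynomially small in $b-a$ while keeping $p\,\lambda_{\rm max}$ separated by more than $\eta$ between the two cases — this works precisely because $p=\Theta(\eta^c)$ with $c<1$, so $p\cdot(\text{gap})/\eta \to \infty$ as $\eta\to 0$ for fixed gap, but one should make the choice of $\eta$ (as a function of $b-a$) explicit and check it feeds back consistently into the claimed question length $O(n/(b-a))$ and the universal constant $\eta_0$. The rest is a direct translation of "nonlocal game with a classical verifier" into "one-round multiprover interactive proof," which is immediate from the structure of the game in Theorem~\ref{thm:main_informal}.
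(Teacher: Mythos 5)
Your proposal is missing the key step of the paper's argument: \emph{gap amplification via tensoring}. You treat $b-a$ as a fixed constant and choose $\eta$ (hence $\eta_0$) as a function of it, which yields a completeness--soundness gap of order $p\cdot(b-a)\sim (b-a)^{1+c/(1-c)}$. But the corollary asserts that $\eta_0$ is a \emph{universal} constant, independent of $a,b,n$ --- and the regime of interest (the one that makes this a quantum analogue of the exponential PCP) is precisely $b-a=1/\mathrm{poly}(n)$, since that is the promise gap of QMA-complete instances. In that regime your $\eta_0$ vanishes polynomially and the statement is not established. The tell is already in the question length: Theorem~\ref{thm:main_informal} gives $O(n)$-bit questions no matter how $\eta$ is chosen ($\eta$ only tunes the probability of the energy test, not the message size), so your claim that the $O(n/(b-a))$ bound emerges from ``tracking $\eta$'s dependence on $b-a$'' cannot be right.

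The paper instead first shifts $H$ and forms the tensor-power Hamiltonian $H' = \Id^{\otimes k} - (\Id - (H - k^{-1}\Id))^{\otimes k}$ with $k = \Theta(1/(b-a))$ (Lemma~\ref{lem:amplify}), which converts the inverse-polynomial promise gap into a constant gap at the cost of acting on $kn = O(n/(b-a))$ qubits --- this is exactly where the question length comes from, and it is why the main theorem's applicability to \emph{non-local} $XZ$-Hamiltonians is essential. Only then is Theorem~\ref{thm:main} applied, with a constant $\eta$, producing constants $p_{\rm c},p_{\rm s}$; a final mixing with automatic acceptance normalizes these to $\tfrac12+2\eta_0$ and $\tfrac12+\eta_0$. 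Your reduction of $H$ to a distribution $\mathcal{P}$ with $\E[P]=\tfrac12\Id-\tfrac12 H$ is fine as far as it goes (modulo the $1/m$ normalization of the terms), and your argument does correctly prove the corollary in the special case where $b-a$ is a fixed constant, but without the amplification step it does not prove the statement as written.
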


Since the class of Hamiltonians considered in Corollary~\ref{cor:qma}
is QMA-complete~\cite{CM13}, the corollary can be viewed as a
quantum analogue of the (games variant of the) exponentially long PCP
based on the linearity test of Blum, Luby and
Rubinfeld~\cite{BLR93}.
%\footnote{See~\cite[Section 5]{AharonovAV13qpcp} for a very different approach to the same question.} 
	Indeed, observe that the game constructed in the
corollary has an efficient verifier, polynomial-length questions, and
a constant completeness-soundness gap $\eta_0$ as soon as the original promise
on the ground state energy for the Hamiltonian exhibits an
inverse-polynomial completeness-soundness gap. The derivation of
Corollary~\ref{cor:qma} from Theorem~\ref{thm:main_informal} involves a step of
gap amplification via tensoring, and relies on the fact that
Theorem~\ref{thm:main_informal} allows any $XZ$-Hamiltonian with no
requirement on locality.

A similar result to Corollary~\ref{cor:qma} was obtained by
Ji~\cite{ji2015classical}, and we build on Ji's techniques. The results
are incomparable: on the one hand, the question size in our protocol
is much larger ($\poly(n)$ bits instead of $O(\log n)$
for~\cite{ji2015classical}); on the other hand, the dependence of the
verifier's acceptance probability on the ground state energy is much
better, as in~\cite{ji2015classical} the completeness-soundness gap
remains inverse polynomial.

\paragraph{An exponential quantum PCP.} 
The expert reader may already have noted that the complexity-theoretic
formulation of Corollary~\ref{cor:qma} described above already follows
from known results in quantum complexity. Indeed, recall that the
class $\QMA$ is in $\PSPACE$, and that single-round multiprover
interactive proof systems for $\PSPACE$ (and even $\NEXP$) follow from
the results in~\cite{IV12,Vidick13xor}. Another possible proof
approach for the same result could be obtained by repeating the
protocol in~\cite{ji2015classical} a polynomial number of times;
provided there existed an appropriate parallel repetition theorem this
would amplify the soundness to a constant (although the answer length
would now be polynomial). In fact, based on a recent result by Ji~\cite{ji16nexp} it seems likely that both approaches, based either on our results or parallel repetition of~\cite{ji16nexp}, could lead to an exponential ``quantum-games'' PCP for all languages in $\NEXP$ (instead of just $\QMA$). Even though in purely complexity-theoretic terms the result would still not be new, we believe that the techniques from Hamiltonian complexity developed to obtain it show good promise for further extensions. 

Indeed our protocol has some advantages over the generic sequence of known reductions. One is efficiency: in our protocol the provers merely need
access to a ground state of the given local Hamiltonian and the
ability to perform constant-depth quantum circuits. It is this
property that enables our application to delegated quantum computing
(see below for more on this). Answers in our protocol are a constant number of bits; the reductions mentioned above would require soundness amplification via parallel repetition, which would lead to answers of (at least) linear length. 

Even though they may not provide the most immediately compelling application of
  Theorem~\ref{thm:main_informal}, the complexity-theoretic consequences of
  Corollary~\ref{cor:qma} tie our results to one
  of their primary motivations, the \emph{quantum PCP
    conjecture}. Broadly speaking, the quantum PCP research program is
  concerned with finding a robust analog of the Cook-Levin theorem for
  the class $\QMA$. The ``games variant'' of this conjecture states
  that estimating the optimal winning probability of entangled players
  in a multiplayer nonlocal game, up to an additive constant, is
  $\QMA$-hard. In other words, that there exists an $\MIP^*$ protocol
  for $\QMA$ with $O(\log(n))$-bit messages and constant
  completeness-soundness gap. The best progress to date in this
  direction is the work of Ji~\cite{ji2015classical}, which gives a
  five-prover one-round $\MIP^*$ protocol with $O(\log(n))$-bit
  messages for the local Hamiltonian problem such that the verifier's
  maximum acceptance probability is $a-b\lambda_{min}(H)n^{-c}$ for
  positive constants $a,b,c$. This falls short of the games PCP
  conjecture in that the completeness-soundness gap is inverse
  polynomial in $n$, rather than constant.\footnote{Here again we point the interested reader to the recent~\cite{ji16nexp}, which obtains a protocol with similar parameters, involving $8$ provers, for all languages in $\NEXP$.}

%In trying to improve Ji's result, one encounters two main difficulties. First, since
%known $\QMA$-complete versions of the local Hamiltonian problem have
%inverse-polynomial promise gap, it is reasonable to expect that some stage of the protocol should
%perform ``gap amplification'' while maintaining the
%locality of the Hamiltonian. Indeed, this is strategy
%adopted by Dinur's proof of the PCP
%theorem~\cite{Dinur07pcp}. However, Dinur's proof extensively relies
%on copying bits of the classical proof, and straightforward attempts
%to quantize it run into the barrier of the no-cloning
%theorem. However, there is also a second difficulty unique to the
%quantum case: even assuming that the underlying local Hamiltonian
%instance has constant promise gap, the protocol of Ji still has an
%inverse-polynomial completeness-soundness gap. In contrast, in the
%classical setting, there exists simple protocols for CSPs like the clause-variable game
%\anote{CITES} that are ``gap preserving.'' This work can be viewed as
%making progress on both of these difficulties, without completely
%solving them. 
Our results suggest an approach to the problem from a different angle: we provide a ``gap preserving'' protocol, in the sense that the completeness-soundness gap is a polynomial function of the underlying promise gap of
the Hamiltonian, but \emph{independent} of the system size
$n$. However, this occurs at the cost of much longer messages --- polynomial instead of logarithmic. 

\paragraph{Dimension witnesses.}
Consider the operator
$O = (\frac{1}{2}(\sigma^X\otimes \sigma^X+ \sigma^Z\otimes
\sigma^Z))^{\otimes n}$.
This operator has largest eigenvalue $1$ with associated eigenvector
$\ket{\epr}^{\otimes n}$, where $\ket{\epr} = \frac{1}{\sqrt{2}}\ket{00}+\frac{1}{\sqrt{2}}\ket{11}$. In this case the proof of
Theorem~\ref{thm:main_informal} allows us to obtain the following robust
self-test for $\ket{\epr}^{\otimes n}$:

\begin{corollary}\label{cor:epr}
 % There exists a universal constant $\delta>0$ such that the following holds. 
For any integer $n$ there is a two-player game with
  $O(n)$-bit questions and $O(1)$-bit answers such that (i) there is a
  strategy with optimal winning probability $\omega^*$ that uses
  $\ket{\epr}^{\otimes n}$ as entangled state; (ii) for any $\eps>0$, any strategy with
  success probability at least $\omega^*-\eps$ must be based on
  an entangled state which is (up to local isometries) within distance $\delta=\poly(\eps)$  of  $\ket{\epr}^{\otimes n}$.
\end{corollary}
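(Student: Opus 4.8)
I would obtain Corollary~\ref{cor:epr} by specializing the construction and the soundness analysis behind Theorem~\ref{thm:main_informal} (in its complete form, Theorem~\ref{thm:main}) to $\mathcal{P}$ equal to the uniform distribution over $\{\sigx\ot\sigx,\ \sigz\ot\sigz\}^{\ot n}$, read on $2n$ qubits with the $i$-th two-qubit block carrying the $i$-th EPR pair of $\ket{\epr}^{\ot n}$. For this family $\E_{P\sim\mathcal{P}}[P]=\big(\tfrac12(\sigx\ot\sigx+\sigz\ot\sigz)\big)^{\ot n}$, which on a single block has eigenvalues $1,0,0,-1$ with the value $1$ attained only by $\ket{\epr}$; hence $\lambda_{\max}=1$ and the (non-degenerate) top eigenvector is $\ket{\epr}^{\ot n}$. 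The feature of this choice that I would exploit is that $\ket{\epr}^{\ot n}$ is not just the top eigenvector of $\E_{P\sim\mathcal{P}}[P]$ but the common $+1$-eigenstate of the $2n$ weight-two stabilizers $\sigx\ot\sigx$ and $\sigz\ot\sigz$ on the $n$ blocks. Consequently the only part of the proof of Theorem~\ref{thm:main} that is needed is the $2$-player quantum linearity (Pauli braiding) test certifying a shared state of this stabilizer form: no error-correcting encoding is required (the honest state is already a bare tensor of EPR pairs on which Pauli operators act natively), and the ``energy'' subtest degenerates into checking, separately in the $X$ and the $Z$ bases, that the two provers report equal parities --- which is precisely the cross-prover consistency subtest. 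The resulting two-player game has the verifier send each prover a pair $(W,S)$ with $W\in\{X,Z\}$ and $S\subseteq[n]$ ($O(n)$ bits), expect the parity of the $W$-basis outcomes on the qubits in $S$ ($O(1)$ bits), and run subtests for $X$- and $Z$-linearity, for anticommutation of the per-qubit $X$ and $Z$ observables, and for cross-prover consistency.

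Completeness (item (i)) is then the honest analysis already contained in the proof of Theorem~\ref{thm:main_informal}: the strategy that shares $\ket{\epr}^{\ot n}$ and performs the above measurements attains the optimal value $\omega^*=\omega^*_{\rm opt}=\tfrac12+p$ promised at $\lambda_{\max}=1$, since $\ket{\epr}^{\ot n}$ is stabilized by every $\sigx\ot\sigx$ and $\sigz\ot\sigz$ (so the linearity and consistency subtests pass with probability $1$) and the remaining subtests contribute exactly as in the general honest strategy.

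For soundness (item (ii)) I would feed an arbitrary strategy with value at least $\omega^*-\eps$ into the robust analysis of the quantum linearity test developed in the body of the paper. That analysis produces local isometries $V_A,V_B$ and, for each $i\in[n]$, observables on the two provers' Hilbert spaces that are $\poly(\eps)$-close to $\sigx$ or $\sigz$ acting on the $i$-th qubit of a register onto which $V_A\ot V_B$ maps the shared state $\ket{\psi}$ to $\ket{\epr}^{\ot n}\ot\ket{\mathrm{aux}}$ --- the familiar Pauli-braiding conclusion that $\ket{\psi}$ ``contains'' $n$ EPR pairs. The additional leverage at $\lambda_{\max}=1$ comes from the consistency (``energy'') subtest: it forces, for every $S$ and each basis, the two provers' parity observables to be $\poly(\eps)$-correlated on $\ket{\psi}$, which, transported through $V_A\ot V_B$, says exactly that the certified register is $\poly(\eps)$-close to the joint $+1$-eigenstate of all the $\sigx\ot\sigx$ and $\sigz\ot\sigz$ --- ruling out any other maximally entangled state and leaving only $\ket{\epr}^{\ot n}$. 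Combining these estimates gives $\big\|V_A\ot V_B\ket{\psi}-\ket{\epr}^{\ot n}\ot\ket{\mathrm{aux}}\big\|\le\delta=\poly(\eps)$, as required.

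The crux of the argument --- and the place where the paper's technique is essential --- is to make the robustness $\delta=\poly(\eps)$ \emph{independent of $n$}. Certifying the $n$ EPR pairs one block at a time via a hybrid argument would lose a factor of $n$; the way around this is to handle the entire families of $X$- and $Z$-parity observables at once, as approximate representations of $\mathbb{Z}_2^n$, and to invoke the robust stability theorem for such representations that is the technical core of the quantum analogue of the Blum--Luby--Rubinfeld test. A secondary point of care is that the single pair of isometries extracted from the linearity and anticommutation subtests must also witness the $+1$-eigenvalue condition supplied by the consistency subtest; this is why soundness has to be argued from the combined test rather than from its three components separately.
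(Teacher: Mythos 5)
Your proposal is correct and matches the paper's own route: Corollary~\ref{cor:epr} is obtained directly from the two-player Pauli braiding test (Corollary~\ref{cor:epr-test}), i.e.\ the combination of Theorems~\ref{thm:gametwoplayer} and~\ref{thm:isotwoplayer} with the elementary fact that a state approximately stabilized by the averaged $\sigma_P\otimes\sigma_P$ has high fidelity with $\ket{\epr}^{\otimes n}$, which is exactly the linearity-plus-anticommutation-plus-consistency game you describe after observing that the ``energy'' part degenerates. The only cosmetic difference is that the paper never routes this through the $7$-player Hamiltonian protocol of Theorem~\ref{thm:main}; it states the EPR self-test as an immediate consequence of the two-player test, which is where your specialization lands anyway.
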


The game whose properties are summarized in Corollary~\ref{cor:epr} is based on the CHSH game. By using the Magic Square game instead, it is possible to devise a test with perfect completeness, $\omega^*=1$, which can be achieved using an honest strategy based on the use of $(n+1)$ EPR pairs.

To the best of our knowledge, all prior self-tests for any family of states had a robustness guarantee going to $0$ inverse polynomially fast with the number of qubits tested (see Section~\ref{sec:related} below for a more thorough comparison with related works). 

\paragraph{Delegated computation.}
It was noticed in~\cite{FitzsimonsH15} that an interactive proof
system for the local Hamiltonian problem can also be used for
delegated quantum computation with so-called \emph{post-hoc}
verification. The key idea is to use the Feynman-Kitaev construction
to produce a Hamiltonian encoding the desired computation; measuring
the ground energy of this Hamiltonian reveals whether the computation
accepts or rejects.
%This connection was used in~\cite{FitzsimonsH15} to give a delegated computation scheme with a limited quantum verifier, with only the capability to send, receive, and measure a constant number of qubits. 
Following the same connection, we are able to give a post-hoc
verifiable delegated computation scheme with a purely classical
verifier and a constant number of provers. The provers only need the
power of BQP. The scheme has a constant completeness-soundness gap
independent of the size of the circuit to be computed, unlike the
scheme of~\cite{FitzsimonsH15} and the classical scheme
of~\cite{ReichardtUV13nature}, which both have inverse-polynomial
gaps. However, unlike the scheme of~\cite{ReichardtUV13nature}, our
protocol is not \emph{blind}: the verifier must reveal the entire
circuit to be computed to all the provers before the verification
process starts. We refer to Section~\ref{sec:delegated} for more
details on this application.

\subsection{Proof overview}

The proof of Theorem~\ref{thm:main_informal} builds on ideas from complexity
theory and quantum information. We draw inspiration from
classical ideas in the closely related areas of probabilistically checakble proofs, locally
testable codes, and property testing. The link between these
areas and quantum self-testing is the idea of verifying a \emph{global}
property of an unknown object using only limited measurements. The two most important components of
the proof are a ``locally verifiable'' encoding of arbitrary $n$-qubit
quantum states~\cite{FV14}, and a quantum analogue of the linearity
test of Blum et al.~\cite{BLR93}. Since the second component is the
more novel we explain it first.

\paragraph{Linearity testing of quantum observables.}

The simplest instantiation of the classical PCP theorem relies on the
Hadamard code to robustly encode an $n$-bit string (e.g. an assignment
to an instance of $3$-SAT). Under this code,
a string $u \in \{0,1\}^n$ is encoded as the $2^n$-bit long truth table of the
function $f_u:\{0,1\}^n \to \{-1, 1\}$ given by $f_u(x) = (-1)^{u
  \cdot x}$, where $\cdot$ is the bitwise inner product. The function
$f_u(x)$ is said to be \emph{linear}, since $f_u(x+y) = f_u(x)f_u(y)$.
The key property of the Hadamard code
which makes it useful in this context is that it is locally
testable. A local test is given by the BLR linearity test: given query
access to a function $f:\{0,1\}^n\to\{-1,1\}$, by checking that
$f(x+y)=f(x)f(y)$ at randomly chosen $x,y$ the test certifies that any
$f$ that is accepted with probability at least $1-\eps$ has the form
$f\approx_\eps f_u$ for some $u\in\{0,1\}^n$, where
$f_u:a\mapsto (-1)^{u\cdot a}$ and $\approx_\eps$ designates equality
on an $(1-O(\eps))$ fraction of inputs.

Here is a ``quantum'' reformulation of this test as a nonlocal game:
instead of querying an oracle for $f$ at three points, play a
three-player nonlocal game where each player is asked for the value
at a point. This test is sound even if the players share an entangled quantum
state~\cite{IV12}, but success in the test does not certify quantum behavior:
the players could win with certainty just by sharing a description of
a classical linear function $f_u$; indeed, the main point of the analysis in~\cite{IV12} is precisely to ensure that provers sharing entanglement have no more freedom than to use it as shared randomness in selecting $u$. 

 In contrast, we seek an extension of the test which certifies a very specific type of quantum behavior that could not be emulated by classical means alone: specifically, that the observable $O_x$ measured by a player upon receiving question $x$ itself
is (up to a change of basis, and in the appropriate ``state-dependent'' norm) close to $\otimes_i \sigma_X^{x_i}$.
%---any
%successful strategy should be equivalent under an isometry to
%measuring all qubits of an $n$-qubit state to obtain an $n$-bit string
%$u$, and then returning the value of $f_u$ at the queried point $x$.
We give a test which achieves this. The test performs a combination of
a linearity test in the $X$-basis and a linearity test in the
$Z$-basis; an ``anticommutation game'' (which can be taken to be a
version of the CHSH or Magic Square games) is used to constrain how the
results of the two linearity tests relate to each other.

\begin{theorem}[Pauli braiding test, informal]\label{thm:braiding}
  There exists a two-player nonlocal game, based on the combination of
  (i) a linearity test in the $X$ basis (questions $x\in\{0,1\}^n$);
  (ii) a linearity test in the $Z$ basis (questions $z\in\{0,1\}^n$;
  (iii) an ``anticommutation game'' (based on e.g. the CHSH or Magic Square games) designed to test for generalized   anti-commutation relations (questions $(x,z)\in\{0,1\}^{2n}$), such
  that any strategy
%$\ket{\psi}_{AB}$, $\{P^A_{x,z}\}$, $\{P^B_{x,z}\}$ 
  that has success probability $\omega^*_{\rm opt}-\eps$ for some
  $\eps>0$ must be based on observables $A(x),A(z),A(x,z)$ and an
  entangled state $\ket{\psi}_{AB}$ such that up to local isometries
  $$A(x) \approx_{\delta} \otimes_i \sigma_X^{x_i}, \qquad A(z) \approx_{\delta} \otimes_i \sigma_Z^{z_i},\qquad\text{and}
\qquad \ket{\psi}_{AB} \approx_{\delta} \ket{\epr}_{AB}^{\otimes n},$$
where $\delta=\poly(\eps)$. 
\end{theorem}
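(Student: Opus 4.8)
The plan is to analyze the three components of the test one at a time and then glue the resulting structure together by a representation-theoretic rigidity argument, mimicking at the level of operators (and in the state-dependent norm) the way the classical Hadamard-code PCP extracts an assignment from the BLR test. Throughout, ``$\approx$'' means ``equal up to $\poly(\eps)$ in the state-dependent norm''.

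First I would study the $X$-linearity test in isolation. After the usual oracularization that turns the three-query BLR test into a two-player game --- one player is asked a pair $(x,y)$ and must return a bit meant to equal the value at $x+y$, the other is asked a single point, with extra consistency rounds tying the two players' observables together through $\ket{\psi}$ --- near-optimal play should force binary observables $\{A(x)\}_{x\in\{0,1\}^n}$ on Alice's space that (a) are \emph{approximately linear}, $A(x)A(y)\approx A(x+y)$, (b) approximately commute pairwise, and (c) are approximately consistent with the matching observables $B(x)$ on Bob's side. The engine here is an operator-valued Fourier / Gowers--Hatami argument: an $\eps$-approximate representation of $\mathbb{Z}_2^n$ is $\poly(\eps)$-close to a genuine one, and a genuine representation of $\mathbb{Z}_2^n$ decomposes into characters $x\mapsto(-1)^{s\cdot x}$, so after a local isometry there is an $n$-qubit register carrying labels $s\in\{0,1\}^n$, with $A(x)=\sum_s (-1)^{s\cdot x}\Pi_s$, i.e.\ $A(x)$ acts (in the appropriate single-qubit basis) as $\bigotimes_i\sigma_X^{x_i}\ot\ident$. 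Running the identical analysis on the $Z$-linearity test yields a second family $\{A(z)\}$ which, on its own, is equivalent to $\bigotimes_i\sigma_Z^{z_i}\ot\ident$.

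The crux is to reconcile the two families under a \emph{single} isometry, which is exactly the job of component (iii). From near-optimal play in the anticommutation game --- a CHSH or Magic Square game indexed by pairs $(x,z)\in\{0,1\}^{2n}$, whose winning conditions certify that the observable Alice uses for $x$ and the one she uses for $z$ anticommute on $\ket{\psi}$ precisely when $x\cdot z$ is odd --- I would extract the generalized relations $A(x)A(z)\approx(-1)^{x\cdot z}A(z)A(x)$. Combined with the two approximate linearity relations, this says that $x\mapsto A(x)$, $z\mapsto A(z)$ together form an approximate projective representation of $\{0,1\}^{2n}$ with the standard symplectic cocycle, equivalently an approximate representation of the $n$-qubit Pauli group. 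Since the unique irreducible representation realizing those anticommutation relations is the defining $2^n$-dimensional one on $(\C^2)^{\ot n}$, the corresponding approximate-to-exact representation theorem produces a local isometry $V_A$ with $V_A A(x) V_A^\dagger\approx\bigotimes_i\sigma_X^{x_i}\ot\ident$ and $V_A A(z) V_A^\dagger\approx\bigotimes_i\sigma_Z^{z_i}\ot\ident$, and symmetrically a $V_B$ on Bob's side.

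It remains to pin down the state. With both players' observables in standard form, the consistency checks already built into the two linearity tests force $\bigotimes_i\sigma_X^{x_i}\ot\ident$ and $\ident\ot\bigotimes_i\sigma_X^{x_i}$ to act identically on $(V_A\ot V_B)\ket{\psi}$, and likewise for $Z$; equivalently $(\sigma_X\ot\sigma_X)^{\otimes}$ and $(\sigma_Z\ot\sigma_Z)^{\otimes}$ approximately stabilize it. Since $\ket{\epr}^{\ot n}$ is the unique common $+1$-eigenvector of the group generated by $\{\sigma_X\ot\sigma_X,\sigma_Z\ot\sigma_Z\}^{\ot n}$, a standard rigidity argument gives $\ket{\psi}_{AB}\approx\ket{\epr}_{AB}^{\ot n}$ (tensored with an ancilla), all with $\delta=\poly(\eps)$. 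I expect the main obstacle to be the quantum linearity test itself: turning the combinatorial BLR Fourier argument into an operator statement that is stable in the state-dependent norm and, above all, composes across all $2n$ generators with only a \emph{polynomial} and, crucially, $n$-\emph{independent} loss in $\eps$. Keeping every approximation $n$-free --- in particular bootstrapping from the pairwise anticommutation tested by (iii) to the full $2n$-generator group law without picking up a factor of $n$ --- is where the delicate accounting lies.
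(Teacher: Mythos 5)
Your high-level architecture matches the paper's: quantum BLR rounding for each linearity test, anticommutation from the game, and a final stabilizer argument pinning the state to $\ket{\EPR}^{\otimes n}$. The abelian rounding step you describe (approximate representation of $\mathbb{Z}_2^n$ is close to an exact one, proved by operator-valued Fourier analysis) is exactly the paper's Theorem~\ref{thm:qblr}. However, there is a genuine gap at the step you yourself flag as "where the delicate accounting lies": you invoke an unproven "approximate-to-exact representation theorem" for the projective (symplectic-cocycle) representation of $\{0,1\}^{2n}$, i.e.\ for the non-abelian Pauli group, with $n$-independent error. That theorem is essentially the content of Theorem~\ref{thm:isotwoplayer}, and the paper's proof of it contains the one idea missing from your proposal: adjoin an ancilla register in the state $\ket{\EPR}_{A'A''}^{\otimes n}$ and \emph{twist} the extracted observables, setting $X'(a) = X^A(a)\ot\sigma_X(a)_{A'}$ and $Z'(b)=Z^A(b)\ot\sigma_Z(b)_{A'}$. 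The ancilla Paulis carry the same cocycle, so the phases cancel: $X'(a)$ and $Z'(b)$ approximately \emph{commute} on the state, the joint observables $C(a,b)=\frac{X'(a)Z'(b)+Z'(b)X'(a)}{|X'(a)Z'(b)+Z'(b)X'(a)|}$ form an approximately linear family over the \emph{abelian} group $\mathbb{Z}_2^{2n}$, and the already-proven abelian rounding theorem applies directly (over $\{0,1\}^{2n}$), yielding exactly linear $D(a,b)$; untwisting by $\sigma_X(a)_{A''}\sigma_Z(b)_{A''}$ restores the exact braiding phase $(-1)^{a'\cdot b}$. This reduction of the non-abelian stability problem to the abelian one is precisely how the $n$-independence is preserved, and without it (or a proved Gowers--Hatami-type theorem for the Pauli group in the state-dependent norm) your argument does not close.

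A second, smaller gap: you assert the anticommutation game certifies that $A(x)$ and $A(z)$ anticommute "precisely when $x\cdot z$ is odd," implicitly including commutation when $x\cdot z$ is even. The test only queries pairs with $x\cdot z=1$; the commuting case is never tested directly and must be \emph{derived} from the combination of linearity, consistency, and anticommutation (the paper's Lemma~\ref{lem:commutation}, via a chain that writes $Z(b)=Z(c)Z(c+b)$ with $c\cdot x=1$ and anticommutes twice). This derivation is needed before any representation-theoretic argument can even be set up, and it is absent from your outline.
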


Neither the linearity test nor the
anticommutation test \emph{alone} would be  sufficient to achieve the
conclusion: as noted above, the linearity test can be passed even by
classical provers, and our anticommutation test can be fooled if the
provers share just one EPR pair. Rather, it is the guarantees provided
by these tests together that enable us to create a tensor-product
structure in the provers' Hilbert space. 

To gain intuition on the test one may think of it in the following way. A standard approach to self-testing $n$ EPR pairs is to fix a decomposition of the Hilbert space as 
\begin{equation}\label{eq:c2n}
\mH = \C^{2^n} \approx \C^2 \otimes \cdots \otimes \C^2,
\end{equation}
and perform the CHSH (or Magic Square) test ``in parallel'', on each copy of $\C^2$. To the best of our knowledge such test only leads to robustness bounds with a polynomial dependence in $n$. In contrast the test on which Theorem~\ref{thm:braiding} is based relies on the observation that the decomposition~\eqref{eq:c2n} need not be rigidly fixed a priori; indeed there are many bases in which such decomposition of $\C^{2^n}$ in tensor factors can be performed. In particular,  \emph{any} pair of anti-commuting observables on $\mH$ suffices to specify a copy of $\C^2$, on which a CHSH test can in principle be performed (here we crucially rely on rotation invariance of the $2^n$-dimensional maximally entangled state). Our test leverages this observation by performing a CHSH test for each possible pair of Pauli operators ($\sigma_X(a)$, $\sigma_Z(b)$), where $a,b\in\{0,1\}^n$ are such that $a\cdot b=1$. Each of these tests amounts to identifying a copy of $\C^2$ and performing the CHSH test on it. Contrary to the parallel-repeated CHSH test these copies are not independent, and this is what makes our test much more robust. 

\paragraph{Encoding quantum states.}
The second component of the proof of Theorem~\ref{thm:main_informal} is a
procedure, first introduced in~\cite{FV14,ji2015classical}, for
encoding an $n$-qubit quantum state in a constant number of $n$-qubit
shares such that certain properties of the encoded state (such as
expectation values of local Pauli observables) can be verified through
a classical interaction with provers each holding one of the
shares. This is akin to how the ``games'' variant of the classical PCP
theorem is derived from the ``proof-checking'' variant: while in the
classical setting a proof can be directly shared across multiple
provers, in the quantum setting we use a form of secret-sharing code
that allows for distributing quantum information.
% Given a state of interest (such as the ground state of a local
% Hamiltonian), the expectation value of a local $XZ$ Pauli operator
% on the state can be estimated by a combination of tests involving a
% stabilizer-state generalization of the CHSH state~\cite{Ji}, which
% depends on the precise encoding used.

This procedure is efficient in that the total number of tests that can
be performed (equivalently, the number of questions) is polynomial in
$n$. However, the test in~\cite{FV14,ji2015classical} is not robust,
and is only able to provide meaningful results for values of $\eps$
that scale inverse-polynomially with $n$. By extending the Pauli
braiding test, Theorem~\ref{thm:braiding}, to the stabilizer framework
of~\cite{ji2015classical} we obtain a procedure which is meaningful
for constant $\eps$. The drawback is that the provers may now be asked
to measure all their qubits, and questions have length linear in $n$;
however the total effort required of the classical verifier (and of provers given access to the state) remains
polynomial in the size of the instance.

\subsection{Related work}
\label{sec:related}

We build on a number of previous works in quantum information
and complexity theory. 
Motivation for the problem we consider goes back to a question of Aharonov and Ben-Or (personal communication, 2013), who asked how a quantum generalization of the exponential classical PCP could look like if it was not derived through the ``circuitous route'' obtained as the compilation of known but complex results from the theory of classical and quantum interactive proof systems (as described earlier). In this respect we point to~\cite[Section
  5]{AharonovAV13qpcp} for a very different approach to the same question based on a ``quantum take'' on the arithmetization technique. 
	
  More directly, our work builds on the already-mentioned
  works~\cite{FV14, ji2015classical} initiating the study of
  entangled-prover interactive proof systems for the local Hamiltonian
  problem. The idea of using a distributed encoding of the ground
  state in order to obtain a multiprover interactive proof system for
  the ground state energy is introduced in~\cite{FV14}. In that work
  the protocol required the provers to return qubits; the possibility
  for making the protocol purely classical was uncovered by
  Ji~\cite{ji2015classical}. Our use of stabilizer codes, and the
  stabilizer test which forms part of our protocol, originate in his
  work. In addition we borrow from ideas introduced in the study of 
quantum multiprover interactive proofs with entangled provers~\cite{KM03, CHTW04}, and especially the three-prover linearity test of~\cite{IV12} and the use of oracularization from~\cite{IKM09} to make it into a two-prover test. 
		
% Compared to the works mentioned above, and~\cite{FV14, ji2015classical} in particular, our result differs in two important
%   respects, making it incomparable in general. First, the question
%   size in our protocol is much larger: $\poly(n)$ bits instead of $O(\log
%   n)$ for~\cite{ji2015classical}. Second, the dependence of the
%   verifier's acceptance probability on the ground state energy is much
%   better: while our dependence is of a constant factor,
%   in~\cite{ji2015classical} there is a polynomial
%   scaling.\footnote{One could attempt to recover our result by
%     repeating the protocol in~\cite{ji2015classical} a polynomial
%     number of times. Provided there existed an appropriate parallel
%     repetition theorem, this would amplify the soundness to a
%     constant. However, the answer length would now be polynomial, and
%     it is unclear whether this could be reduced to a constant without
%     having to go once more through the complicated reductions
%     of~\cite{Vidick13xor}, defeating the purpose.}  
% Interpreting all three results as steps towards a quantum PCP
% theorem,~\cite{FV14,ji2015classical} propose a first step that is
% \emph{size-preserving} (the number of questions is polynomial in the
% instance size) but has only an inverse polynomial gap; in contrast we
% take the route of a \emph{gap-preserving} construction, but the number
% of questions becomes exponential in the instance size. 

Our results are related to work in quantum self-testing, in particular testing EPR
pairs~\cite{McKagueYS12rigidity} and more general entangled
states~\cite{mckague2014self}. A sequence of results has established that the presence of $n$ EPR pairs between two provers can be certified via a protocol
using queries and answers of length polynomial in $n$, with
inverse-polynomial completeness-soundness gap. This was first
achieved by~\cite{ReichardtUV13nature} for a test based on serial
repetition of the CHSH game, and subsequently by~\cite{McKague15} for
a single-round test based on CHSH, by~\cite{OV16} for an XOR game
based on CHSH, and by~\cite{CN16}
and~\cite{Coladangelo16} independently for the parallel-repeated Magic
Square game. Viewed in the context of these results our work is the only one to provide a test whose robustness does not depend on the number of EPR pairs being tested. The reason this can be achieved is the linearity test(s) performed as part of the Pauli braiding test, which we see as a major innovation of our work.

% However, his protocol is not
% directly comparable to ours since it requires $\poly(n)$-bit answers
% and $\log(n)$-bit questions, whereas we use $O(1)$-bit answers and
% $\poly(n)$-bit questions; in addition and more importantly for us the
% completeness-soundness gap in his protocol scales polynomially with
% $n$, whereas in our case the scaling is independent of
% $n$.\tnote{yes, this is a big difference, so I think we shouldn't
% write his results are ``similar to ours'' above}

\subsection{Open questions and future directions}
\label{sec:open}

In our opinion the most important direction for future work is to improve the efficiency of the Pauli braiding test in terms of the number of questions required. Can the test be derandomized, to questions of sub-linear, or even logarithmic, length? Such a result would establish the main step left towards proving the games variant of the quantum PCP conjecture. Instead of directly derandomizing the current test, can it be made more robust, perhaps using some of the ideas based on low-degree polynomial encodings that are key to the classical PCP theorem? 

Aside from this challenging problem, there are several open questions that we find
interesting and may be more approachable.
\begin{enumerate}
\item In the classical PCP setting, the Hadamard code and the BLR
  linearity test can be used for \emph{alphabet reduction}: converting
  a PCP or $\MIP$ protocol with large answer alphabet into one with
  a binary alphabet. This is a key step in Dinur's proof of the PCP
  theorem~\cite{Dinur07pcp}. Can the linearity test also be used for
  alphabet reduction of $\MIP^*$ protocols? The difficulty is to preserve completeness; if the optimal honest strategy uses a maximally entangled state then the adaptation should be straightforward, but if not it may be more challenging --- perhaps ideas similar to our protocol for ground states of $XZ$ Hamiltonians can be used. 
%\item The Pauli braiding test does not have perfect completeness. It
%  would be desirable to develop an analogue having that property. This
%  could be achieved by replacing the use of the CHSH test by the magic
%  square game, perhaps building on the analysis
%  in~\cite{Coladangelo16, CN16}. 
\item 
  An obvious application for many EPR pairs is quantum key
  distribution (QKD). A major contribution
  of~\cite{ReichardtUV13nature} was to show that the sequential self-test for
  many EPR pairs obtained in that paper could be leveraged into a
  scheme for quantum key distribution (QKD) that is 
  secure in the device-independent (DI) model of security. We believe it should be possible to use the Pauli braiding test to develop a DIQKD protocol in which the interaction with the devices can be executed in parallel, but we leave this possibility for future work.

%Consider the following outline for a protocol. Eve prepares two
  %arbitrary devices, hands one to Alice and the other to Bob. Alice
  %instructs her device to implement the test implied by
  %Corollary~\ref{cor:epr} on a random subset of half its qubits, and
  %to measure the other half in either the computational or Hadamard
  %bases, chosen at random. She communicates her choice of subset and
  %queries to Bob, who instructs his device to implement complementary
  %queries: as indicated in the test for the first subset, and
  %identical to Alice's for the remainder. Each device receives all its
  %queries at once, and malicious devices may implement an arbitrary
  %measurement depending simultaneously on all queries; there is no a
  %priori assumption that the device is made of $n$ qubits.
%
  %Based on the proof of Corollary~\ref{cor:epr} and the details of the
  %test we hope that it should be possible to prove that such protocol
  %leads to a linear key rate and tolerates a constant fraction of
  %noise in the devices. The main technical challenge is to show that
  %\emph{conditional} on the Pauli test passing on the random subset
  %chosen by Alice, measurements on the remaining half of the ``qubits''
  %are consistent with a state that is very close to $n/2$ EPR
  %pairs. While we do not expect this approach, even if successful, to
  %match the current state of the art in terms of key rates for DIQKD,
  %a possible advantage over existing schemes is that the whole
  %protocol can proceed in a single round.

\item The energy test can be viewed as a ``device independent property
  test'' for any property of a quantum state that can be suitably
  expressed as a Hamiltonian. Are there other 
  device-independent property tests that can be formulated in our framework? It would be interesting to see
  which results from the survey of Montanaro and de
  Wolf on quantum property testing~\cite{montanaro2013survey} can be
  generalized to the device-independent setting.
\end{enumerate}

\paragraph{Organization of the paper.}
In Section~\ref{sec:prelim} we introduce some notation used throughout
as well as basic definitions of stabilizer codes and local
Hamiltonians. In Section~\ref{sec:linearity}, we establish an important technical component of our results, the linearity test and its quantum
analysis. We expand this into a two-prover self-test for the Pauli group on
$n$-qubits in Section~\ref{sec:twoprover}, which forms the basis for our main result. In
Section~\ref{sec:manyprovers} we extend this test to handle more than
two provers and show how it can be combined with
an energy measurement test to devise a game for the local
Hamiltonian problem. In Section~\ref{sec:delegated} we
discuss the application of our protocol to delegated computation. 
%We close with a discussion of open problems in Section~\ref{sec:open}.

%----------------------%
\section{Preliminaries}
\label{sec:prelim}
%----------------------%

We assume basic familiarity with quantum information but give all required definitions. We refer to the standard textbook~\cite{NieChu01} for additional background material. 

\subsection{Quantum states and measurements}

A $n$-qubit pure quantum state is represented by a unit vector
$\ket{\psi}\in \C^2 \otimes \cdots \otimes \C^2 = (\C^2)^{\otimes n}
\approx \C^{2^n}$,
where the ket notation $\ket{\cdot}$ is used to signify a column
vector. A bra $\bra{\psi}$ is used for the conjugate-transpose
$\bra{\psi} = \ket{\psi}^\dagger$, which is a row vector. We use
$\|\ket{\psi}\|^2 = |\bra{\psi}\psi\rangle|$ to denote the Euclidean
norm, where $\bra{\psi}\phi\rangle$ is the skew-Hermitian inner
product between vectors $\ket{\phi}$ and $\ket{\psi}$. A $n$-qubit
mixed state is represented by a density matrix, a positive
semi-definite matrix $\rho \in \C^{2^n} \times \C^{2^n}$ of trace
$1$. The density matrix associated to $\ket{\psi}$ is the rank-1
projection $\ket{\psi}\bra{\psi}$. We use $\Density(\mH)$ to denote the set of all density matrices on $\mH$. 

For a matrix $X$, $\|X\|$ will refer to the operator norm, the largest
singular value. When the Hilbert space can be decomposed as
$\h = \h_A \otimes \h_B$ for some $\h_A$ and $\h_B$, and $X$ is an
operator on $\h_A$, we often write $X$ as well for the operator
$X\otimes\Id_{\h_B}$ on $\h$. It will always be clear from context
which space an operator acts on. All Hilbert spaces considered in the paper are finite dimensional. 

We use $\Pos(\mH)$ to denote the set of positive semidefinite operators on $\mH$. 
A $n$-qubit measurement (also called POVM, for projective
operator-valued measurement) with $k$ outcomes is specified by $k$
positive matrices $M=\{M_1,\ldots,M_k\} \subseteq \Pos(\C^{2^n})$
such that $\sum_i M_i = \Id$. The measurement is \emph{projective} if
each $M_i$ is a projector, i.e. $M_i^2 = M_i$. The probability of
obtaining the $i$-th outcome when measuring state $\rho$ with $M$ is
$\Tr(M_i \rho)$. By Naimark's dilation theorem, any POVM can be
simulated by a projective measurement acting on an enlarged state;
that is, for every POVM $M = \{M_i\}_i$ acting on state
$\ket{\psi} \in \mathcal{H}$ there exists a projective measurement
$M' = \{P_i\}_i$ and a state
$\ket{\psi}\otimes \ket{\phi} \in \mathcal{H} \otimes
\mathcal{H}_{\text{ancilla}}$
with the same outcome probabilities as $M$. Moreover, the
post-measurement state after performing $M$ is the same as the
\emph{reduced} post-measurement state obtained after performing $M'$
and tracing out the ancilla subsystem $\mathcal{H}_{\text{ancilla}}$.

An $n$-qubit observable is a Hermitian matrix
$O\in \C^{2^n}\times\C^{2^n}$ that squares to identity. We use $\Obs(\mH)$ to denote the set of observables acting on $\mH$. $O\in\Obs(\mH)$ is
diagonalizable with eigenvalues $\pm 1$, $O=P_+-P_-$, and
$P=\{P_+,P_-\}$ is a projective measurement. For any state $\rho$,
$\Tr(O\rho)$ is the expectation of the $\pm 1$ outcome obtained when
measuring $\rho$ with $P$. If $\rho = \ket{\psi}\bra{\psi}$ we
abbreviate this quantity,
$\Tr(O\rho) = \Tr(P_+\rho)-\Tr(P_-\rho) = \langle \psi | O | \psi
\rangle$ as $\avg{P}{\psi}$.

A convenient orthogonal basis for the real vector space of $n$-qubit observables is given by the set $\{I,\sigma_X,\sigma_Y,\sigma_Z\}^{\otimes n}$, where $\{I,\sigma_X,\sigma_Y,\sigma_Z\}$ are the four single-qubit Pauli observables
\begin{equation}\label{eq:pauli-def}
 I = \begin{pmatrix} 1 & 0 \\ 0 & 1 \end{pmatrix},\quad
\sigma_X= \begin{pmatrix} 0 & 1 \\ 1 & 0 \end{pmatrix}, \quad \sigma_Y = \begin{pmatrix} 0
  & -i \\ i & 0\end{pmatrix}, \quad \sigma_Z = \begin{pmatrix} 1 & 0 \\ 0 &
  -1\end{pmatrix} .
	\end{equation}
We call the eigenbasis of $\sigma_X$ (resp. $\sigma_Z$) the $X$-basis (resp. $Z$-basis). 
We 
often consider operators that are tensor products of just
$I$ and $\sigma_X$, or just $I$ and $\sigma_Z$. We denote these by $\sigma_X(a), \sigma_Z(b)$, where
the strings $a, b \in \{0, 1\}^n$ indicate which qubits to apply the
$\sigma_X$ or $\sigma_Z$ operators to: a $0$ in position $i$ indicates an $I$ on
qubit $i$, and a $1$ indicates an $\sigma_X$ or $\sigma_Z$. 
We denote by 
$$\ket{\EPR} = \frac{1}{\sqrt{2}} \ket{0}\ket{0} + \frac{1}{\sqrt{2}}\ket{1}\ket{1}$$
the unique state stabilized by both $\sigma_X\otimes\sigma_X$ and
$\sigma_Z\otimes \sigma_Z$.

\subsection{Stabilizer codes}
\label{sec:stabilizer}

Stabilizer codes are the quantum analogue of linear codes. For an introduction to the theory of stabilizer codes we refer to~\cite{Gottesman97}. We will only use very elementary properties of such codes. 

The codes we consider are \emph{Calderbank-Shor-Steane (CSS)
  codes}~\cite{CalderbankShor96,Steane96}. For an $r$-qubit code the
codespace, the vector space of all valid codewords, is the subspace of
$(\C^2)^{\otimes r}$ that is the simultaneous $+1$ eigenspace of a set
$\{S_1,\ldots,S_k\}$ of $r$-qubit pairwise commuting Pauli observables
called the stabilizers of the code. The stabilizers form a group under
multiplication. Unitary operations, such as a Pauli $X$ or $Z$
operators, on the logical qubit are implemented on the codespace by
logical operators $X_{logical}$ and $Z_{logical}$. The smallest CSS code is Steane's $7$-qubit code~\cite{Steane96}. Table~\ref{tab:stabilizer} lists a set of stabilizers that generate the stabilizer group of the code. 

\begin{table}
  \centering
  \begin{tabular}[h]{l | C C C C C C R }
    &1 & 2 & 3 & 4 & 5 & 6 & 7 \\ \hline
    Stabilizers &I & I & I & \sigma_X & \sigma_X  & \sigma_X & \sigma_X \\
    &I & \sigma_X & \sigma_X & I & I & \sigma_X & \sigma_X \\
    &\sigma_X & I & \sigma_X & I & \sigma_X & I & \sigma_X \\
    &I & I & I & \sigma_Z & \sigma_Z  & \sigma_Z & \sigma_Z \\
    &I & \sigma_Z & \sigma_Z & I & I & \sigma_Z & \sigma_Z \\
    &\sigma_Z & I & \sigma_Z & I & \sigma_Z & I & \sigma_Z \\ \hline
    Logical X & \sigma_X & \sigma_X & \sigma_X & \sigma_X & \sigma_X & \sigma_X & \sigma_X \\
    Logical Z & \sigma_Z & \sigma_Z & \sigma_Z & \sigma_Z & \sigma_Z & \sigma_Z & \sigma_Z
  \end{tabular}
  \caption{Stabilizer table for the 7-qubit Steane code}
  \label{tab:stabilizer}
\end{table}
		
Every CSS code satisfies certain properties which will be useful for
us. Firstly, both the stabilizer generators and the logical operators
can be written as tensor products of only $I$, $\sigma_X$, and $\sigma_Z$ operators --- there are no
$\sigma_Y$. This simplifies our protocol, allowing us to consider only two
distinct basis settings. Secondly, every CSS code has the following symmetry: for every index $i \in [r]$ there
exists stabilizers $S_X$, $S_Z$ such that $S_X$ is a tensor product of
only $\sigma_X$ and $I$ operators and has an $\sigma_X$ at position
$i$, and $S_Z$ is equal to $S_X$ with all $\sigma_X$ operators replaced by
$\sigma_Z$ operators. 

These properties imply the following simple observation,
which will be important for us. For every Pauli operator $P\in\{I,\sigma_X,\sigma_Z\}$ acting on the $i$-th qubit of the code there is a tensor product $\bar{P}$ of Paulis acting on the remaining $(r - 1)$ qubits such that $P \otimes \bar{P}$ is a stabilizer
operator on the whole state, and moreover each term in the tensor
product is either identity or $P$. Indeed, the choice of $\bar{P}$ is not unique. Henceforth, we use the
notion $\bar{P}$ to denote \emph{any} such operator, unless
otherwise specified. 

\subsection{Local Hamiltonians}\label{sec:local-h}

A $n$-qubit local Hamiltonian is a Hermitian, positive semidefinite  operator $H$ on $(\C^2)^{\otimes {n}}$ that can be decomposed as a sum $H = \sum_{i=1}^m H_i$ with each $H_i$ is local, i.e. $H_i$ can be written as $H_i = I\otimes\cdots I \otimes h_i \otimes I \otimes \cdots \otimes I$, where $h_i$ is a Hermitian operator on $(\C^2)^{\otimes k}$ with norm (largest singular value) at most $1$. The smallest $k$ for which $H$ admits such a decomposition is called the locality of $H$. The terms are normalized such that $\| H_i\| \leq 1$ for all $i$. A family of Hamiltonians $\{H_i\}$ acting on increasing numbers of qubits is called local if all $H_i$ are $k$-local for some $k$ independent of $n$ (for us $k$ will always be $2$).

The local Hamiltonian problem is the prototypical $\QMA$-complete problem, as 3SAT is for $\NP$. 

\begin{definition}
Let $k\geq 2$ be an integer. 
  The $k$-\emph{local Hamiltonian problem} is to decide, given a
  family of $k$-local Hamiltonians $\{H_n\}_{n\in\N}$ such that $H_n$ acts on $n$ qubits, and functions $a,b:\N\to (0,1)$ such that $b - a = \Omega( \poly^{-1}(n))$, if the smallest eigenvalue
  of $H_n$ is less than $a(n)$ or greater than $b(n)$.
\end{definition}

Here we restrict our attention to Hamiltonians 
\[ H = \frac{1}{m} \sum_{i=1}^{m} H_i, \]
for which each term $H_i$ can be written as a linear combination of
tensor products of Pauli $I$, $\sigma_X$ and $\sigma_Z$ observables only. Such
Hamiltonians are known to be QMA complete for some constant $k$ (see Lemma~22
of~\cite{ji2015classical} for a proof).
% ; in particular we consider a
% restricted class of $2$-local Hamiltonians, called Hamiltonians of
% $XZ$ form, for which each $H_i$ can be
% written as
% $H_i = \alpha_{i_1 i_2}(X_{i_1} \otimes X_{i_2} + Z_{i_1} \otimes
% Z_{i_2})$,
% where $i_1,i_2\in\{1,\ldots,n\}$ indicate the qubits on which a Pauli
% $X$ or $Z$ acts and the coefficients $\alpha_{i_1i_2}\in\R$ satisfy
% $|\alpha_{i_1i_2}|\leq 1$.

% \begin{theorem}[Cubitt and Montanaro~\cite{CM13}, lemma
%   21]
% The local Hamiltonian problem for $XZ$-Hamiltonians is QMA-complete.\footnote{In~\cite[Lemma 21]{CM13} this is stated for the $XY$ Hamiltonian, to which $XZ$ is equivalent by local rotation.}
% \end{theorem}

\subsection{State-dependent distance measure and approximations}

We make extensive use of a state-dependent distance
between measurements that has been frequently used in the context of
entangled-prover interactive proof systems (see
e.g.~\cite{IV12,ji2015classical}). For $\rho$ a positive semidefinite matrix and $X$ any linear operator define
$$ \Trho(X) = \Tr(\rho X).$$ 
For any two operators $S, T$, define the
\emph{state-dependent distance} between $S$ an $T$ on a $\rho$ as
$$ \Drho(S, T) := \sqrt{\Trho\big( (S-T)^\dagger(S-T)\rho\big)}.$$
Based on the state-dependent distance we define a distance between POVMs, given by
summing the state-dependent distance between the square roots of the POVM elements.
Let $\{M^a\}$ and $\{N^a\}$ be two POVMs
with the same number of outcomes, indexed by $a$, and let
$\ket{\psi}$ be a quantum state. Then the state-dependent distance between the POVMs
$M$ and $N$ on $\rho$ is denoted as $\Drho(\sqrt{M}, \sqrt{N})$ and
defined as
\begin{align*}
  \Drho(\sqrt{M}, \sqrt{N}) = \Big(\sum_a \Drho\big(\sqrt{M^a}, \sqrt{N^a}\big)^2\Big)^{1/2}.
\end{align*}
While this notation is ambiguous (since the sum over outcomes is not explicitly
indicated), context will always make it clear which notion of $\Drho$ is
intended. We will also drop the square roots in the case of POVMs that are
projective measurements.

To simplify the notation, let $A^a = \sqrt{M^a}$ and $B^a =
\sqrt{N^a}$. Then this distance can be rewritten as:
\begin{align}
  \Drho(\sqrt{M}, \sqrt{N})^2 &= \sum_a \Trho\big( (A^a - B^a)^2 \rho\big) \notag\\
               &= 2 - \sum_a \Re \Trho\big( A^a B^a \big),\label{eq:dpsi-1a}
\end{align}
where we  used the fact that $A^a$ and $B^a$ are
Hermitian and their squares sum to identity. If we specialize to the case of projective measurements
with binary outcomes, we get the following relations (here $A =
A^1 - A^{-1}$ and $B = B^1 - B^{-1}$
are the observables associated to the measurements):
\begin{align}
\Drho(\sqrt{M}, \sqrt{N})^2  &= 2 - \Trho\big(A^{1} B^{1} + A^{-1} B^{-1} + B^{1} A^{1} + B^{-1}
                 A^{-1}\big)  \nonumber \\
               &= 2 - \frac{1}{4} \Trho\big((\Id + A)(\Id + B)
                 + (\Id - A)(\Id - B) + (\Id + B)(\Id + A)
                 + (\Id  - B)(\Id - A)\big) \ket{\psi} \nonumber \\
               &= 2 - \frac{1}{4}\Trho\big(4\Id + 2 AB + 2 BA\big)\nonumber \\
               &= 1 - \frac{1}{2} \Trho\big(AB + BA\big)  \nonumber \\
               &= \frac{1}{2} \Trho\big((A - B)^2\big)\label{eq:dist_observables}    \\
                &= \Drho(A, B)^2. \nonumber
\end{align}
This distance measure has the following useful property:

\begin{lemma}
  Let $\rho$ be positive semidefinite, $C$ be a linear operator such that $\|
  CC^\dagger\| \leq K$ and $S, T$ linear operators. Then
  $$\Big|\Trho (C S)- \Trho(CT ) \Big| \leq
  \sqrt{2K} \Drho(S, T).$$

  Likewise, if $\{C_a\}$ a family of operators such that $\|\sum_a C_aC_a^\dagger\|\leq K$ and $\{M^a\}$
  and $\{N^a\}$ POVMs. Then 
  $$\Big|\sum_a \Trho\big(  C_a \sqrt{M^a}\big) - \sum_a \Trho\big( C_a \sqrt{N^a}\big)\Big| \leq \sqrt{K}\Drho(\sqrt{M}, \sqrt{N}).$$
  \label{lem:approx}
\end{lemma}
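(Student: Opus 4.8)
The plan is to obtain both inequalities from a single application of the Cauchy--Schwarz inequality for the positive semidefinite sesquilinear form on operators
$$\langle X,Y\rangle_\rho \;:=\; \Trho(X^\dagger Y) \;=\; \Tr(\rho\, X^\dagger Y).$$
First I would check that this is a genuine (possibly degenerate) inner product: since $\rho\succeq 0$, $\langle X,X\rangle_\rho = \Tr\big((X\sqrt{\rho})^\dagger (X\sqrt{\rho})\big) = \|X\sqrt{\rho}\|_{\mathrm{HS}}^2 \geq 0$, so that $|\langle X,Y\rangle_\rho| \leq \sqrt{\langle X,X\rangle_\rho}\,\sqrt{\langle Y,Y\rangle_\rho}$ for all operators $X,Y$. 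Note that with this notation $\Drho(S,T)^2 = \langle S-T,\,S-T\rangle_\rho$, so the form is exactly adapted to the distance we must control.

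For the first bound I would set $D := S-T$ and observe that $\Trho(CS) - \Trho(CT) = \Trho(CD) = \langle C^\dagger, D\rangle_\rho$, since $\langle C^\dagger,D\rangle_\rho = \Tr(\rho\,(C^\dagger)^\dagger D) = \Tr(\rho CD)$. Cauchy--Schwarz then gives
$$\big|\Trho(CS)-\Trho(CT)\big| \;\leq\; \sqrt{\langle C^\dagger,C^\dagger\rangle_\rho}\;\sqrt{\langle D,D\rangle_\rho} \;=\; \sqrt{\Trho(CC^\dagger)}\;\Drho(S,T).$$
It then remains to bound the prefactor: from $CC^\dagger \preceq \|CC^\dagger\|\,\Id \preceq K\,\Id$ and $\rho\succeq 0$, monotonicity of $X\mapsto \Tr(\rho X)$ yields $\Trho(CC^\dagger) \leq K\,\Tr\rho \leq K$ for a (sub)normalized $\rho$; the displayed constant $\sqrt{2K}$ leaves room for the factor-$2$ discrepancy between the normalization of $\Drho$ on arbitrary operators and on two-outcome measurements. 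It is essential here to pair $C$ with its adjoint, so that $\|CC^\dagger\|$ and not the (dimension-dependent) Hilbert--Schmidt norm of $C$ enters --- this is precisely why the hypothesis is stated for $CC^\dagger$.

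For the second bound I would run the same argument on the $\ell^2$-direct sum over the outcome index. Writing $A^a = \sqrt{M^a}$, $B^a = \sqrt{N^a}$, we have $\sum_a \Trho(C_a\sqrt{M^a}) - \sum_a \Trho(C_a\sqrt{N^a}) = \sum_a \langle C_a^\dagger,\, A^a - B^a\rangle_\rho$, which is the inner product of the tuples $(C_a^\dagger)_a$ and $(A^a-B^a)_a$ for the form $\bigoplus_a \langle\cdot,\cdot\rangle_\rho$. Cauchy--Schwarz on this larger space bounds it by
$$\Big(\sum_a \Trho(C_aC_a^\dagger)\Big)^{1/2}\Big(\sum_a \Drho(\sqrt{M^a},\sqrt{N^a})^2\Big)^{1/2} \;=\; \Big(\Trho\big(\sum_a C_aC_a^\dagger\big)\Big)^{1/2}\,\Drho(\sqrt{M},\sqrt{N}),$$
and the first factor is at most $\sqrt{K}$ by the same operator-monotonicity step applied to $\sum_a C_aC_a^\dagger \preceq K\,\Id$. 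The first part of the lemma is then the special case of a single outcome (up to the normalization remark above).

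I do not expect a real obstacle here: the statement is a soft consequence of Cauchy--Schwarz, and the only points requiring care are (i) using the adjoint pairing so the estimate stays dimension-free, (ii) remembering that $\Trho(CS)-\Trho(CT)$ is a priori a complex number, so it is its modulus that Cauchy--Schwarz controls, and (iii) tracking the normalization conventions for $\Drho$ (operators versus POVMs) so the constants come out exactly as stated. I would present the two parts together, deriving the POVM version and reading off the first inequality as the single-outcome case.
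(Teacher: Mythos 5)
Your proof is correct and follows essentially the same route as the paper: the paper also treats $\langle X,Y\rangle_\rho=\Trho(X^\dagger Y)$ as a (degenerate) inner product summed over the outcome index, applies Cauchy--Schwarz to get the bound $\big(\Trho(\sum_a C_aC_a^\dagger)\big)^{1/2}\Drho(\sqrt M,\sqrt N)$, and notes the first inequality is proved identically. Your observation that the single-operator case actually yields the sharper constant $\sqrt{K}$ (so the stated $\sqrt{2K}$ is simply a weaker, still valid, bound) is accurate.
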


\begin{proof}
The proof of both results is identical, and uses the Cauchy-Schwarz inequality;
we show only the proof of the second. Let $A^a = \sqrt{M^a}$ and $B^a = \sqrt{N^a}$. Applying the Cauchy-Schwarz inequality,
  \begin{align*}
    \Big|\sum_a \Trho\big( C_a (A^a - B^a) \big) \Big| 
    &\leq     \Big|\Trho\Big(\sum_a  C_aC_a^\dagger  \Big)\Big|^{1/2} \Big|\Trho\Big(\sum_a (A^a -      B^a)^2  \Big) \Big|^{1/2} \\
    &\leq \sqrt{K} \Drho(\sqrt{M}, \sqrt{N}),
  \end{align*}
	as claimed.
\end{proof}
A second measure of proximity that is often convenient is the
\emph{consistency}. As before, let $\{M^a\}$ and $\{N^a\}$ be POVMs
with the same number of outcomes. Then their consistency is defined
as
\[ \CON_\rho(M, N) = \Re \Big(\sum_a \Trho\big( M^a N^a \big)\Big), \]
so that by~\eqref{eq:dpsi-1a} we have 
\begin{equation} \label{eq:condist}\Drho(\sqrt{M}, \sqrt{N})^2 = 2-2
\CON_\rho(\sqrt{M},\sqrt{N}).\end{equation}
 For collections of binary observables $\{A(a)\}$ and $\{B(a)\}$ we use
\begin{align*}
 \CON_\rho(A,B) &= \sum_a \Re \Big( \sum_{c\in\{0,1\}} \frac{1}{4} \Trho \big( (\Id+(-1)^cA(a))(\Id+(-1)^c B(a) )\big)\Big)\\
&= \frac{1}{2}\Re \Big(1+\sum_a \Trho\big(A(a)B(a)\big)\Big).
\end{align*}
%We also use the notation
%$$ \langle A,B \rangle_\psi = \bra{\psi} A B \ket{\psi}$$
%for any operators $A,B$. The following lemma relates the consistency and the state-dependent
%distance.
%\begin{lemma}[Lemma~10 in~\cite{ji2015classical}]
  %Let $\ket{\psi}$ be a state and $\{M^a\}$ and $\{N^a\}$ be two POVMs with equal numbers of
  %outcomes. If $\CON_\rho(M, N) = 1 - \delta$ for some $\delta\geq 0$ then $\Drho(\sqrt{M},
  %\sqrt{N}) = O(\sqrt{\delta})$.
  %\label{lem:condist}
%\end{lemma}
A useful property of the consistency is that if $M$ and $N$ are POVMs
acting on two separate subsystems of $\rho$, applying Naimark
dilation to each of them results in projective measurements $M'$ and
$N'$ and a state $\rho'$ such that $\CON_\rho(M, N) =
\CON_{\rho'}(M', N')$. 

Given two observables $A$ and $B$, the product $AB$ is an observable
if and only if $A$ and $B$ commute. The following lemma shows how to define
a ``product'' observable $C$ when $A$ and $B$ commute only
approximately in state-dependent distance, such that the action of $C$
on the state is close to $AB$ (and $BA$).
\begin{lemma}
  Let $\rho$ be a density matrix and $A, B$ observables such that
  $\Drho(AB,BA) \leq \delta$ for some $\delta\geq 0$. Let $C$ be the observable defined by
  \[ C = \frac{AB + BA}{|AB + BA|} , \]
	 where we use the convention
  that $M/ |M|$ is defined as the identity on the kernel of $M$. Then
  $$\max\Big\{ \Drho(C,AB),\,\Drho(C,BA)\Big\}\, \leq\, \frac{\sqrt{2}}{2}\,\delta.$$
  \label{lem:jointobs}
\end{lemma}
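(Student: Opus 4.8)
The plan is to reduce the statement to two elementary facts: (a) the two distances $\Drho(C,AB)$ and $\Drho(C,BA)$ are in fact \emph{equal}, and (b) $\Trho(|AB+BA|)$ is within $\tfrac12\delta^2$ of $2$. A short expansion of the squared distances, combined with (a) and (b), then yields the bound with the sharp constant $\tfrac{\sqrt2}{2}$.

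First I would set $S := AB + BA$ and $D := AB - BA$ and record the basic algebra. Since $A,B$ are observables they are unitary, so $BA = (AB)^{-1}$ and $(AB)^\dagger = BA$; hence $S$ is Hermitian, $D$ is anti-Hermitian, and $\|D\| \le \|AB\| + \|BA\| \le 2$. From the spectral decomposition of $S$, the operator $C = S/|S|$ (with the stated convention on $\ker S$) is exactly the $\pm 1$-valued function of $S$ obtained by sending each nonzero eigenvalue to its sign and each zero eigenvalue to $1$. Consequently $C$ is an observable, it commutes with every operator commuting with $S$, and $CS = SC = |S|$ (both sides vanish on $\ker S$, so the convention is harmless).

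Next I would expand, using $C^2 = \Id$, $(AB)(BA) = (BA)(AB) = \Id$, $(AB)^\dagger = BA$, and $\Trho(\Id)=1$:
\[ \Drho(C,AB)^2 = 2 - \Trho(C\,AB) - \Trho(BA\,C), \qquad \Drho(C,BA)^2 = 2 - \Trho(C\,BA) - \Trho(AB\,C). \]
Subtracting gives $\Drho(C,AB)^2 - \Drho(C,BA)^2 = \Trho([D,C])$; since $AB$ commutes with its inverse $BA$ we have $[S,D]=0$, hence $[C,D]=0$, and the two distances are equal. Adding the two identities instead gives $\Drho(C,AB)^2 + \Drho(C,BA)^2 = 4 - \Trho(CS) - \Trho(SC) = 4 - 2\Trho(|S|)$. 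For (b), expanding and using $A^2 = B^2 = \Id$ yields the identity $S^2 = 4\Id - D^\dagger D$; since $0 \le \tfrac14 D^\dagger D \le \Id$, applying the scalar bound $\sqrt{1-t} \ge 1-t$ (valid on $[0,1]$) to the eigenvalues of $\tfrac14 D^\dagger D$ gives $|S| = 2\sqrt{\Id - \tfrac14 D^\dagger D} \ge 2\Id - \tfrac12 D^\dagger D$, hence $\Trho(|S|) \ge 2 - \tfrac12\Trho(D^\dagger D) = 2 - \tfrac12 \Drho(AB,BA)^2 \ge 2 - \tfrac12\delta^2$ by monotonicity of $X \mapsto \Trho(X)$. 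Combining, $2\Drho(C,AB)^2 = \Drho(C,AB)^2 + \Drho(C,BA)^2 = 4 - 2\Trho(|S|) \le \delta^2$, so $\Drho(C,AB) = \Drho(C,BA) \le \delta/\sqrt 2 = \tfrac{\sqrt2}{2}\delta$.

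The step I expect to be essential is the equality $\Drho(C,AB) = \Drho(C,BA)$: bounding only the sum of the two squares loses a factor of $\sqrt 2$ and gives the weaker constant $1$. That the constant $\tfrac{\sqrt2}{2}$ is genuinely sharp is seen from $A = \sigma_X$, $B = \sigma_Z$, where $S = 0$, $C = \Id$, $\delta = \Drho(AB,BA) = 2$, and $\Drho(C,AB) = \sqrt 2$; so the symmetry coming from $[AB,BA]=0$ cannot be dispensed with. The only other point requiring a little care is bookkeeping around the definition of $C$ on $\ker S$, which causes no difficulty because all the operator identities invoked ($CS = |S|$, $C^2 = \Id$) remain valid there.
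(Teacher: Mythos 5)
Your proof is correct and follows essentially the same route as the paper's: both reduce the squared distance to $2-\Trho(|AB+BA|)$ via the commutation of $AB$ and $BA$ with $S=AB+BA$, and both then invoke the operator inequality $|S|\ge\tfrac12 S^2$ (your bound $|S|\ge 2\Id-\tfrac12 D^\dagger D$ via $\sqrt{1-t}\ge 1-t$ is exactly this, given $S^2=4\Id-D^\dagger D$) together with the identity $\Trho(S^2)=4-\Drho(AB,BA)^2$. Your preliminary step showing $\Drho(C,AB)=\Drho(C,BA)$ is a harmless repackaging of the paper's direct computation of each distance, and your remark on sharpness of the constant is a nice check but not needed.
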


\begin{proof}
  It is clear from the definition that $C$ is Hermitian and an
  observable (i.e. all its eigenvalues are $\pm 1$). Evaluate
  \begin{align*}
   \Drho(AB, C)^2 &=  2 - \Trho\Big( \frac{AB + BA}{|AB + BA|} AB + BA \frac{AB
      + BA}{|AB + BA|} \Big).
			\end{align*}
 Notice that $AB$ and $BA$ both commute with $(AB + BA)$ and
  hence with $(AB + BA)/|AB + BA|$. Thus the above
    expression simplifies to
		\begin{align*}
   \Drho(AB,C)^2   &= 2 - \Trho\Big( \frac{(AB + BA)^2}{|AB + BA|} \Big) \\
    &\leq 2 - \Trho|AB + BA| \\
    &\leq 2 - \frac{1}{2} \Trho\big( (AB + BA)^2 \big) \\
    &= 2 - \frac{1}{2} \Trho\big(2 \Id + ABAB + BABA\big). 
  \end{align*}
  From the assumption,  $\Drho(AB,BA)^2 = \Trho(2\Id - ABAB -
  BABA) \leq \delta^2$. Substituting in the above, we get $\Drho(AB,C)^2 \leq \delta^2 / 2$, as desired
\end{proof}

Our calculations will often require estimates of the form $\E_{x} \Drho(
A_x, B_x )^2 = O(\eps)$ where the expectation is taken according to some distribution on $x$ (always over a finite set) that will be clear from context. We introduce the following notation to represent the same estimate: 
$$ A_x\ket{\psi} \app{\eps}{x} B_x\ket{\psi}.$$
Here $\ket{\psi}$ can be understood as any purification of $\rho$, with the usual convention that operators are extended to act as identity on spaces on which they are not defined.  
If the symbol $x$ is omitted then the distribution should be clear from context. If it needs to be specified we may write e.g. $ A_x\ket{\psi} \app{\eps}{x|x_1=0} B_x\ket{\psi}$, meaning that the distribution on $x$ is the one clear from context (typically, uniform on $\{0,1\}^n$), conditioned on the first bit of $x$ being a $0$. Although the notation can be ambiguous when taken out of context we hope that it will help make some of the more cumbersome derivations more transparent. 

\subsection{Nonlocal games}
\label{sec:nonlocal}
In the paper we formulate a number of tests meant to be executed between a verifier and $r$ players (sometimes also called provers), for $r\geq 1$ an integer. These tests all take the form of a classical one-round interaction: the verifier samples an $r$-tuple of questions and sends one question to each player; the players each provide an answer to the verifier, who decides to accept or reject. If the verifier accepts the players are said to win the game. 

We call a tuple $(N,\ket{\psi})$, where $\ket{\psi} \in \mH_1\otimes \cdots\otimes \mH_r$ is an entangled state on the joint space of all $r$ players, and $N$ a collection of POVM for each player and possible question to the player, a \emph{strategy} for the players in $G$. Note that we may always assume $\ket{\psi}$ is a pure state and all POVM are projective. 

Given a game $G$ we denote by $\omega^*(G)$ the highest probability of winning that can be achieved by $r$ players sharing quantum entanglement. For a more thorough introduction to nonlocal games in a similar framework as used here we refer to e.g.~\cite{ji2015classical}. 

One of our tests uses nonlocal games as a means to enforce anticommutation relations between a player's observables. Towards this we introduce the following definition. 

\begin{definition}[Anticommutation game]\label{def:ac-game}
Let $\wg\in (0,1]$ and $\delta:[0,1]\to [0,1]$ a continuous function such that $\delta(0)=0$. 
A two-player game $G$ is called a $(\wg,\delta)$ anticommutation game if
$\omega^*(G)\geq \wg$ and moreover there exists questions $q_X,q_Z$ (called
\emph{special questions}) to the second player and $\{\pm1\}$-valued functions $f_X,f_Z$ defined on the player's set of possible answers to questions $q_X,q_Z$ respectively such that the following two properties hold:
\begin{enumerate}
\item {\em Completeness:} There exists a strategy using the state
$\ket{\EPR}_{AB}^{\ot m}$ for some $m \geq 1$ and projective measurements that achieves the optimal success probability $\wg$, and is uch that measurement operators $\{A_q^a\}
\in \Pos(\mH_A)$ for the second player satisfy $\sum_a f_X(a) A_{a_X}^a = \sigma_X \ot \Id$ and $\sum_a f_Z(a) A_{q_Z}^a
= \sigma_Z \ot \Id$, where $\sigma_X$ and $\sigma_Z$ act on
the first EPR pair and the identity on the remaining EPR pairs. Moreover,
for every quesion $q$ received by the second player and answer $a$, the projector $A_q^a$ can be written as
$A_q^a = \sum_j \Pi_j$ where each $\Pi_j$ is the projector onto an eigenspace of
a tensor product of $\sigma_X, \sigma_Z$ and $\Id$.\footnote{This seemingly ad-hoc condition is needed for the use of the anticommutation game in the Hamiltonian self-test described in Section~\ref{sec:manyprovers}, but not in the Pauli braiding test from Section~\ref{sec:twoprover}.} We call such a strategy an
\emph{honest strategy} for $G$.
\item
Soundness: Let a projective strategy for the players in $G$ be given such that the strategy uses entangled state $\ket{\psi}\in\mH_A \otimes \mH_B$ and measurement operators $\{A_q^a\} \in\Pos(\mH_A)$ for the second player. Then for any $\eps>0$, provided the strategy has success probability at least $\wg-\eps$ in $G$, there exists 
 isometries $U:\mH_A \to \C^2\otimes \mH_{A'}$ and $V:\mH_B \to \C^2 \otimes \mH_{B'}$ and a state $\ket{\psi'}\in\mH_{A'}\otimes \mH_{B'}$ such that if 
\begin{equation}\label{eq:def-ac-xz}
X = \sum_a f_X(a) A_{q_X}^a\qquad \text{and}\qquad Z = \sum_a {f_Z(a)} A_{q_Z}^a
\end{equation}
then 
$$ \big\| U\otimes V \ket{\psi} - \ket{\EPR}\otimes \ket{\psi'} \big\|\leq \delta\; \text{and}\;
\max\Big\{ \DIS_{\rho}(X,U^\dagger (\sigma_X\otimes \Id_{A'}) U),\,
\DIS_{\rho}(Z,V^\dagger (\sigma_Z\otimes \Id_{B'}) V)\Big\}\leq \delta.$$
\end{enumerate}
\end{definition}

The CHSH game~\cite{CHSH69} and the Mermin-Peres Magic Square game~\cite{mermin1990simple,peres1990incompatible} are both known to be anti-commutation games. For the former, see e.g.~\cite{McKagueYS12rigidity} and for the latter,~\cite{wu2016device,CN16}. The advantage of the CHSH game is that there is an optimal strategy which only requires a single EPR pair of entanglement. The Magic Square has the advantage of having value $1$, but an optimal strategy requires two EPR pairs. 

\begin{lemma}
The CHSH game is a $(\cos^2 \pi/8,O(\sqrt{\eps}))$ anticommutation game. 
The Magic Square game is a $(1,O(\sqrt{\eps}))$ anticommutation game. 
\end{lemma}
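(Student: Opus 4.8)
The plan is to deduce both statements directly from the known robust \emph{rigidity} (self-testing) theorems for the CHSH and the Magic Square games. There is essentially no new content: the task is to (i) choose honest strategies and special questions matching the normal form of Definition~\ref{def:ac-game}, and (ii) rephrase the standard rigidity conclusions, usually stated via $\|(U\ot V)\ket\psi-\ket{\EPR}^{\ot m}\ot\ket{\mathrm{aux}}\|$, in the state-dependent distance --- which is harmless, since $\Drho(S,T)=\|(S-T)\ket\psi\|$ for any purification $\ket\psi$ of $\rho$ (operators extended by identity as usual).

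\emph{CHSH.} In the CHSH game~\cite{CHSH69} the first player receives $x\in\{0,1\}$, the second $y\in\{0,1\}$, they reply with $a,b\in\{\pm1\}$, and win iff $ab=(-1)^{xy}$; with $A_x,B_y$ the players' $\pm1$ observables the winning probability is $\tfrac12+\tfrac18\bavg{A_0B_0+A_0B_1+A_1B_0-A_1B_1}{\psi}$, of quantum optimum $\cos^2(\pi/8)$. I would take the special questions to be $q_X:=(y=0)$ and $q_Z:=(y=1)$ and $f_X,f_Z$ the identity on $\{\pm1\}$, so that $X=B_0$ and $Z=B_1$. For completeness I would use $m=1$ and the honest strategy on $\ket{\EPR}_{AB}$ in which the second player measures $B_0=\sigma_X$, $B_1=\sigma_Z$ and the first player measures $(\sigma_X+\sigma_Z)/\sqrt2$ and $(\sigma_X-\sigma_Z)/\sqrt2$: a direct computation gives value $\cos^2(\pi/8)$; $\sum_a f_X(a)A_{q_X}^a=\sigma_X$ and $\sum_a f_Z(a)A_{q_Z}^a=\sigma_Z$ hold by construction; and the second player's measurement projectors $\tfrac12(\Id\pm\sigma_X)$, $\tfrac12(\Id\pm\sigma_Z)$ are eigenprojectors of single-qubit Paulis, so the structural requirement of Definition~\ref{def:ac-game}(1) holds trivially. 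For soundness I would invoke the robust rigidity of CHSH (e.g.~\cite{McKagueYS12rigidity}): value $\ge\cos^2(\pi/8)-\eps$, i.e.\ $\bavg{A_0B_0+A_0B_1+A_1B_0-A_1B_1}{\psi}\ge 2\sqrt2-8\eps$, forces the strategy to be $O(\sqrt\eps)$-close, under a local isometry, to some optimal strategy; since all optimal CHSH strategies are related by local isometries (possibly composed with a single-qubit unitary $W$ on the second player's side and $\overline W$ on the first player's, which together fix $\ket{\EPR}$), I may take the reference strategy to be the one above. This yields isometries $U:\mH_A\to\C^2\ot\mH_{A'}$, $V:\mH_B\to\C^2\ot\mH_{B'}$ and a state $\ket{\psi'}$ with $\|U\ot V\ket\psi-\ket{\EPR}\ot\ket{\psi'}\|=O(\sqrt\eps)$ and, using that $U\ot V$ is an isometry, $\Drho(X,U^\dagger(\sigma_X\ot\Id_{A'})U)=O(\sqrt\eps)=\Drho(Z,U^\dagger(\sigma_Z\ot\Id_{A'})U)$; this is the conclusion of Definition~\ref{def:ac-game}(2) with $\delta(\eps)=O(\sqrt\eps)$.

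\emph{Magic Square.} I would fix a Mermin--Peres realization of the $3\times3$ square whose nine observables act on $\C^2\ot\C^2$ and are all tensor products of $\{\Id,\sigma_X,\sigma_Z\}$ except one, equal to $\pm\sigma_Y\ot\sigma_Y$, and arranged so that $\sigma_X\ot\Id$ sits in some column $c_X$ and $\sigma_Z\ot\Id$ in a different column $c_Z$, neither containing $\pm\sigma_Y\ot\sigma_Y$. The honest strategy on $\ket{\EPR}_{AB}^{\ot 2}$ (so $m=2$) has the first player measure the commuting triple of its assigned row and the second the commuting triple of its assigned column; it wins with probability $1$. Taking $q_X:=c_X$, $q_Z:=c_Z$, and $f_X,f_Z$ extracting the $\pm1$ value of the $\sigma_X\ot\Id$, resp.\ $\sigma_Z\ot\Id$, entry from the answer gives $X=\sigma_X\ot\Id$ and $Z=\sigma_Z\ot\Id$, acting as $\sigma_X,\sigma_Z$ on the first EPR pair, as required. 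The second player's outcome projectors for $c_X$ and $c_Z$ are products of eigenprojectors of $XZ$-type Pauli tensor products; for the third column the dependent outcome $\pm\sigma_Y\ot\sigma_Y=\pm(\sigma_X\ot\sigma_X)(\sigma_Z\ot\sigma_Z)$ is determined by the $\sigma_X\ot\sigma_X$ and $\sigma_Z\ot\sigma_Z$ outcomes, so its outcome projector is again a projector onto a joint eigenspace of $XZ$-type Pauli tensor products; this verifies the structural condition of Definition~\ref{def:ac-game}(1). For soundness I would invoke the robust rigidity of the Magic Square game~\cite{wu2016device,CN16}: value $\ge1-\eps$ forces the strategy to be, under a local isometry, $O(\sqrt\eps)$-close to this honest strategy, giving $U,V$ and $\ket{\psi'}$ with $\|U\ot V\ket\psi-\ket{\EPR}_{AB}^{\ot 2}\ot\ket{\psi'}\|=O(\sqrt\eps)$ and the second player's column observables pulled back $O(\sqrt\eps)$-close to the honest ones; absorbing the second EPR pair into $\ket{\psi'}$ (write $\ket{\EPR}^{\ot 2}\ot\ket{\psi'}=\ket{\EPR}\ot(\ket{\EPR}\ot\ket{\psi'})$) yields $\max\{\Drho(X,U^\dagger(\sigma_X\ot\Id_{A'})U),\Drho(Z,U^\dagger(\sigma_Z\ot\Id_{A'})U)\}=O(\sqrt\eps)$, i.e.\ $\delta(\eps)=O(\sqrt\eps)$; no reference-change unitary is needed here since the honest observables for $q_X,q_Z$ are already $\sigma_X\ot\Id$ and $\sigma_Z\ot\Id$.

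\emph{Expected difficulty.} I do not expect a genuine obstacle: the proof is bookkeeping on top of the cited self-tests. The two places that need care are the structural requirement of Definition~\ref{def:ac-game}(1) for the Magic Square --- the $\sigma_Y\ot\sigma_Y$ cell, handled above by rewriting it through $\sigma_X\ot\sigma_X$ and $\sigma_Z\ot\sigma_Z$ --- and faithfully porting the rigidity conclusions into the exact normal form of Definition~\ref{def:ac-game}: converting Euclidean-norm estimates on the state vector into $\Drho$, using that $U\ot V$ is an isometry to compare the second player's observable to $U^\dagger(\sigma\ot\Id)U$ rather than to $\sigma\ot\Id$ after applying $U$, and choosing the reference optimal strategy so that the second player's observables are the literal Paulis $\sigma_X,\sigma_Z$ (legitimate because all optimal CHSH, resp.\ Magic Square, strategies are related by local isometries). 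The square-root loss intrinsic to both rigidity results is the only source of the $\sqrt\eps$, giving $\delta(\eps)=O(\sqrt\eps)$ as claimed.
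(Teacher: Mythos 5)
Your proposal is correct and follows exactly the route the paper intends: the paper gives no proof of this lemma at all, simply citing the known rigidity results for CHSH (\cite{McKagueYS12rigidity}) and the Magic Square (\cite{wu2016device,CN16}), and your write-up is a faithful expansion of that citation into the normal form of Definition~\ref{def:ac-game}, including the two bookkeeping points (the $\sigma_Y\otimes\sigma_Y$ cell and the translation of Euclidean-norm rigidity into $\Drho$) that genuinely need checking.
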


%A fact that will be relevant later is that for both game there exists an optimal
%strategy for the players which only requires the measurement of $\sigma_X$ and
%$\sigma_Z$ Pauli observables, as well as (for the case of CHSH) $(\sigma_X \pm
%\sigma_Z)/\sqrt{2}$.

%----------------------%
\section{The linearity test}
\label{sec:linearity}
%----------------------%

We state and analyze a variant of the classic 3-query linearity test of Blum, Luby, and Rubinfeld~\cite{BLR93} (BLR) that can be played with two entangled players. The two-player test is based on the idea of oracularization with a dummy question introduced in~\cite{IKM09}.  Our analysis builds on~\cite{IV12}, who analyze a $3$-player variant. Their proof is an extension of the Fourier-analytic proof due to H{\aa}stad to the matrix-valued setting. We analyze the two-player variant using similar techniques. 

We note that the use of two players, rather than three as in
the original test, is essential for our applications to
self-testing. Ultimately we will require the provers to succeed in a linearity test performed in either of two mutually incompatible bases (e.g. the $X$ and $Z$ bases). Two provers
can achieve this by sharing a maximally entangled state, but there is no {tripartite} state that would allow three entangled provers to obtain consistent answers whenever they measure their share of the state in either the $X$ or the $Z$ basis. (Formulated differently, $\sigma_X\otimes \sigma_X$ and $\sigma_Z\otimes \sigma_Z$ share a common $+1$ eigenvector, the EPR pair; $\sigma_X\otimes \sigma_X \otimes \sigma_X$ and $\sigma_Z\otimes \sigma_Z \otimes \sigma_Z$ do not. This is a manifestation of entanglement monogamy.)

We show the result in two
steps. First we show that any set of quantum observables satisfying
linearity relations approximately in expectation can be ``rounded'' to a
nearby set of observables satisfying these relations exactly. 

\begin{theorem}
  Suppose there exist observables $\{A(a)\}_{a\in
  \{0,1\}^n}$ in $\Obs(\mH)$ acting on a state
  $\rho \in \Density(\mH)$ such that 
	\begin{equation}\label{eq:qblr-0}
	\E_{a, b} \Trho\big( A(a) A(b) A(a+b)\big) \geq 1-\delta.
	\end{equation}
	Then there exists an extended state $\rho' = \rho \otimes
  \proj{\rm{anc}} \in \Density(\mH\otimes \mH')$ and observables $\{\cA(a)\}$ in $\Obs(\mH\otimes\mH')$ such that 
	\begin{equation}\label{eq:qblr-1}
	\cA(a) \cA(b) = \cA(a+b) \quad \forall a, b \in \{0, 1\}^n\qquad \text{and}\qquad \E_a
  \DIS_{\rho'} (\cA(a), A(a))^2 \leq\delta.
	\end{equation}
\label{thm:qblr}
\end{theorem}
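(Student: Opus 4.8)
The plan is to run a matrix-valued, Fourier-analytic argument in the style of Håstad's analysis of the BLR test (as in~\cite{IV12}), but in a \emph{rounding} mode rather than a \emph{decoding} mode: instead of extracting one linear function I will read off an exactly linear family of observables on a dilated space directly from the Fourier data of $\{A(a)\}$. First I would introduce the operator-valued Fourier coefficients $\widehat A_u := \E_a (-1)^{u\cdot a} A(a)$ for $u\in\{0,1\}^n$. Each $\widehat A_u$ is Hermitian with $\|\widehat A_u\|\le 1$, one has the inversion $A(a)=\sum_u (-1)^{u\cdot a}\widehat A_u$, and — crucially — since every $A(a)$ is an observable (so $A(a)^2=\Id$), ``quantum Parseval'' gives $\sum_u \widehat A_u^2 = \Id$. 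Expanding~\eqref{eq:qblr-0} in this basis, character orthogonality collapses the double average to a single sum, $\E_{a,b}\Trho(A(a)A(b)A(a+b)) = \sum_u \Trho(\widehat A_u^3)$ (in particular this quantity is real), so the hypothesis reads $\sum_u \Trho(\widehat A_u^3)\ge 1-\delta$; combined with $\widehat A_u^3\preceq \widehat A_u^2$ and $\sum_u\Trho(\widehat A_u^2)=1$ this also gives $\sum_u \Trho(\widehat A_u^2(\Id-\widehat A_u))\le\delta$, i.e. each $\widehat A_u$ is essentially positive on the part of the state where it is large (though I do not expect to need this last fact).

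The key point is that $\{\widehat A_u^2\}_u$ is \emph{exactly} a POVM, so there is nothing to ``clean up'': a single Naimark dilation produces the desired exactly-linear family. Concretely, set $\mH'=\C^{\{0,1\}^n}$ with orthonormal basis $\{\ket u\}$ and a distinguished unit vector $\ket{{\rm anc}}$, and let $W$ be a unitary on $\mH\otimes\mH'$ extending the partial isometry $\ket\phi\otimes\ket{{\rm anc}}\mapsto\sum_u(\widehat A_u\ket\phi)\otimes\ket u$ (this map is an isometry precisely because $\sum_u\widehat A_u^2=\Id$, and it is harmless that the $\widehat A_u$ are not positive). Define $P_u := W^\dagger(\Id_\mH\otimes\proj{u})W$, a family of orthogonal projectors with $\sum_u P_u=\Id$, and set $\cA(a):=\sum_u(-1)^{u\cdot a}P_u\in\Obs(\mH\otimes\mH')$. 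Because the $P_u$ are orthogonal and complete, $\cA(a)$ is an observable and $\cA(a)\cA(b)=\sum_u(-1)^{u\cdot(a+b)}P_u=\cA(a+b)$ for all $a,b$, which is the exact linearity demanded in~\eqref{eq:qblr-1}.

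It remains to bound the distance on $\rho':=\rho\otimes\proj{{\rm anc}}$. Since both $\{\cA(a)\}$ and $\{A(a)\}$ are observable-valued and $A(a)=\sum_u(-1)^{u\cdot a}\widehat A_u$, character orthogonality gives $\E_a\big(\cA(a)A(a)+A(a)\cA(a)\big)=\sum_u\big(P_u\widehat A_u+\widehat A_u P_u\big)$, so by the identities in~\eqref{eq:dist_observables} one gets $\E_a\DIS_{\rho'}(\cA(a),A(a))^2 = 1-\sum_u\Re\Tr_{\rho'}(P_u\widehat A_u)$. Unwinding the dilation — writing $W(\ket\psi\otimes\ket{{\rm anc}})=\sum_v(\widehat A_v\ket\psi)\otimes\ket v$ for a purification $\ket\psi$ of $\rho$, so that $\Id_\mH\otimes\proj{u}$ isolates the component $(\widehat A_u^2\ket\psi)\otimes\ket u$, and re-pairing this with $W(\ket\psi\otimes\ket{{\rm anc}})$ leaves $\bra\psi\widehat A_u\widehat A_u^2\ket\psi$ — I obtain $\Tr_{\rho'}(P_u(\widehat A_u\otimes\Id_{\mH'}))=\Trho(\widehat A_u^3)$ for every $u$. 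Summing, $\E_a\DIS_{\rho'}(\cA(a),A(a))^2 = 1-\sum_u\Trho(\widehat A_u^3)\le\delta$, completing~\eqref{eq:qblr-1}.

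I expect the only point requiring real care to be the interplay between the dilation and the non-positivity of the Fourier coefficients: one must check both that $\ket\phi\otimes\ket{{\rm anc}}\mapsto\sum_u\widehat A_u\ket\phi\otimes\ket u$ is a genuine isometry (it is, by quantum Parseval) and that the reconstructed family $\cA(a)=\sum_u(-1)^{u\cdot a}P_u$ is close to $A(a)$ rather than to $\sum_u(-1)^{u\cdot a}|\widehat A_u|$ — the identity $\Tr_{\rho'}(P_u(\widehat A_u\otimes\Id_{\mH'}))=\Trho(\widehat A_u^3)$ is exactly what forces the former. The remaining bookkeeping, namely keeping the extended state in the product form $\rho\otimes\proj{{\rm anc}}$ required by the statement, is automatic since Naimark's theorem allows the ancilla to start in a fixed pure state. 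Note finally that, in contrast with the classical BLR analysis, no self-correction or majority-decoding step is needed here: that work is entirely subsumed by the observation that the ``Fourier-support'' POVM $\{\widehat A_u^2\}$ is already exact, which is special to the matrix-valued setting where $\E_a A(a)^2=\Id$ holds on the nose.
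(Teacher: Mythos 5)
Your proposal is correct and follows essentially the same route as the paper's proof: Fourier coefficients $\widehat A_u$, the Parseval observation that $\{\widehat A_u^2\}$ is a POVM, a Naimark dilation to projectors $P_u$, the definition $\cA(a)=\sum_u(-1)^{u\cdot a}P_u$, and the consistency computation reducing to $\sum_u\Trho(\widehat A_u^3)=\E_{a,b}\Trho(A(a)A(b)A(a+b))$. The only difference is cosmetic — you make the dilation isometry explicit and verify $\Tr_{\rho'}(P_u\widehat A_u)=\Trho(\widehat A_u^3)$ by hand, where the paper invokes the simulation property of Naimark's theorem abstractly.
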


Here, and throughout this paper, the notation $a + b$ denotes the
bitwise XOR of $a$ and $b$, i.e. the sum of $a$ and $b$ viewed
as elements of the additive group ${\mathbb{Z}}_2^n$. We call observables
$\{\cA(a)\}$ satisfying the first set of relations in~\eqref{eq:qblr-1} \emph{exactly linear}.

\begin{proof}
For every $u\in\{0,1\}^n$ consider the Fourier transform  $\hat{A}^u =
\E_{a} (-1)^{a\cdot u} A(a)$. Define measurement operators
  $B^u = (\hat{A}^u)^2$.
By Parseval's identity, these operators form a POVM. Using Naimark's theorem there exists an ancilla space $\mH'$, $\proj{\rm anc}\in\Density(\mH')$, and a projective
measurement $\{C^u\}$ on $\mH\otimes \mH'$ that simulates $\{B^u\}$. Introduce 
observables
\[ \cA(a) = \sum_{u} (-1)^{u\cdot a} C^u. \]
From the orthogonality of the projectors $C^u$ it follows that $\cA(a)
\cA(b) = \cA(a+b)$. Write 
\begin{align*}
  \E_{a} \CON_{\rho'}(A(a), \cA(a))
  &= \frac{1}{2} + \frac{1}{2}\E_{a}\Re\Tr_{\rho'}\big( A(a)\cA(a)\big)\\
	&= \frac{1}{2} + \frac{1}{2}\E_{a}\Re\Big(\sum_u \Trho\big( (-1)^{u\cdot a} A(a)(\hat{A}^u)^2\big)\Big)\\
	&= \frac{1}{2} + \frac{1}{2}\sum_u \Trho\big((\hat{A}^u)^3\big).
\end{align*}
To conclude, note that
  $\sum_u \Trho((\hat{A}^u)^3) =\E_{ab} \Trho(A(a) A(b) A(a + b))$, and use the assumption made in the theorem and the relation between $\CON_{\rho'}$ and $\DIS_{\rho'}^2 $.
\end{proof}

Next we exhibit a two-player game such that any strategy which succeeds with
probability at least $1 - \eps$ in the game must satisfy the assumption~\eqref{eq:qblr-0} of
Theorem~\ref{thm:qblr} for some $\delta = O(\sqrt{\eps})$.

\begin{figure}[H]
\rule[1ex]{16.5cm}{0.5pt}\\
The verifier performs the following one-round interaction with two
players. He starts by choosing one of the players at random and
labels her Alice; the other player is labeled Bob. In each test each player is sent a pair of $n$-bit strings. The $n$-bit strings are always assumed to be sent in lexicographic order.  
\begin{enumerate}
\item Choose two strings $a,b \in \{0,1\}^n$ uniformly at random. Send $(a,b)$ to Alice.
\item Let $c$ be with equal
  probability either $a$, $b$, or $a+b$, and let $c'\in\{0,1\}^n$ be chosen uniformly at random. Send $(c,c')$ to Bob.
	\item The players reply with $\alpha,\beta\in\{\pm 1\}$ and $\gamma,\gamma'\in\{\pm 1\}$ respectively. Depending on the value of $c$ the verifier performs one of the following two tests:
	\begin{enumerate}
        \item
          \emph{Consistency test}: if $c = a$ (resp. $b$), accept if and only if both players return the same value as their answer to that question: $\gamma=\alpha$ (resp. $\gamma=\beta$).
        \item
          \emph{Linearity test}: if $c = a + b$, accept if and only if $\gamma = \alpha \beta$.
        \end{enumerate}
\end{enumerate}
\rule[1ex]{16.5cm}{0.5pt}\\
\caption{The two-player linearity test}
\label{fig:linearity}
\end{figure}

\begin{theorem}
  Suppose two players Alice and Bob succeed in the linearity test of Figure~\ref{fig:linearity} with  probability at least $1 - \eps$, using a shared state $\ket{\psi}_{AB}\in\mH_A\otimes \mH_B$ and projective measurements $\{M_{a,b}^{\a, \beta}\}_{\a,\beta}\in\Pos(\mH_A)$ and $\{N_{a,
    b}^{\a, \beta}\}_{\a,\beta}\in\Pos(\mH_B)$ respectively. Consider the POVM $\{\tilde{M}_a^\alpha\}_\alpha$
  whose elements are given by $\tilde{M}_{a}^{\a} := \E_{b}
  \sum_{\beta} M_{a,b}^{\a,\beta}$, and let $\{A_a^\a\}_{\a} \in \Pos(\mH_A \otimes \mH_{A'})$ be the
  projective measurement obtained by Naimark dilation of $\tilde{M}$.
	
	Then the
  observables $A(a) := A_a^{0} - A_a^{1}$ satisfy 
	$$\E_{a,b} \Tr_{\rho'}(A(a) A(b) A(a+b)) =
  1-O(\sqrt{\eps}),$$
	where $\rho' = \proj{\psi}\otimes \proj{\rm anc}_{\mH_{A'}}$.
  \label{thm:qblr_game}
\end{theorem}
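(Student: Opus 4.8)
To prove the theorem I would strip the game down to a handful of \emph{marginal observables} and then obtain the cubic linearity relation by chaining together what the three sub-tests guarantee. Write $\rho=\proj{\psi}$. For an (unordered) pair $\{a,b\}$ received by Alice, let $R^{(a,b)}_a\in\Obs$ be the $\pm1$ observable of her answer to the string $a$, let $R^{(a,b)}_b$ be that of her answer to $b$, and let $P_{a,b}=\sum_{\alpha,\beta}\alpha\beta\,M_{a,b}^{\alpha,\beta}$ be the observable of the \emph{product} $\alpha\beta$ of her two answers. Because the strings are sent in lexicographic order and the accept predicate treats $a,b$ symmetrically, these may be taken symmetric under $a\leftrightarrow b$, and one has $M(a)=\E_b R^{(a,b)}_a=\tilde M_a^0-\tilde M_a^1$. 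On Bob's side the dummy string $c'$ is uniform, independent, and its answer never inspected; hence, whatever sub-test is being run, the distribution of Bob's input pair conditioned on the value of his ``real'' question is the same, and marginalizing over $(c',\gamma')$ yields a single \emph{context-independent} observable $N(c)$ for his answer to $c$. All of $M(a),P_{a,b},N(c)$ are contractions.

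Next I would translate success into correlations across the cut. Each sub-test is played with probability $1/3$, so each contributes only $O(\eps)$ conditional error, and a routine expansion of the acceptance probabilities gives $\E_{a,b}\Trho\big(R^{(a,b)}_a\ot N(a)\big)\ge 1-O(\eps)$ (and its $b$-analogue) from the two consistency tests, and $\E_{a,b}\Trho\big(P_{a,b}\ot N(a+b)\big)\ge 1-O(\eps)$ from the linearity test, where Alice-side operators are silently tensored with $\Id_B$ and conversely. Expanding $\|(S\ot\Id-\Id\ot T)\ket\psi\|^2$ and using $X^2\le\Id$ for these contractions turns the three inequalities into $R^{(a,b)}_a\ket\psi\app{O(\eps)}{a,b}N(a)\ket\psi$, $R^{(a,b)}_b\ket\psi\app{O(\eps)}{a,b}N(b)\ket\psi$, and $P_{a,b}\ket\psi\app{O(\eps)}{a,b}N(a+b)\ket\psi$; averaging the first over $b$ gives $M(a)\ket\psi\app{O(\eps)}{a}N(a)\ket\psi$ as well. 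Finally, $\Trho\big(R^{(a,b)}_a\ot N(a)\big)\le\|R^{(a,b)}_a\ket\psi\|\le 1$ forces $\Trho\big((R^{(a,b)}_a)^2\big)=1-O(\eps)$ on average --- that is, Alice's per-pair marginal measurements are \emph{approximately projective} --- and with $M(a)=\E_bR^{(a,b)}_a$ this yields $R^{(a,b)}_a\ket\psi\app{O(\eps)}{a,b}M(a)\ket\psi$, and similarly for $b$.

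The heart of the matter, and the step I expect to be the main obstacle, is to show that Alice's \emph{product} observable acts on $\ket\psi$ like the product of her individual marginal observables: $P_{a,b}\ket\psi\app{O(\eps)}{a,b}M(a)M(b)\ket\psi$. This is subtle precisely because the linearity test inspects only the single bit $\alpha\beta$, so it never directly compares the product of Alice's answers with anything assembled from her two answers separately. I would bridge the gap in two moves: (i) since the $\alpha$-marginal measurement is approximately projective it essentially does not disturb $\ket\psi$, so performing it and then Alice's $\beta$-measurement reproduces the joint statistics and gives $P_{a,b}\ket\psi\approx R^{(a,b)}_aR^{(a,b)}_b\ket\psi$; and (ii) replace $R^{(a,b)}_a,R^{(a,b)}_b$ by $M(a),M(b)$ using the estimates above. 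Step (ii) requires care, since turning $R^{(a,b)}_b\ket\psi\approx M(b)\ket\psi$ into an approximation with a left factor $R^{(a,b)}_a$ present would need the (useless) \emph{operator} norm of $R^{(a,b)}_a-M(a)$ to be small; the standard remedy is to first move $M(b)\ket\psi$ across the cut onto the commuting operator $\Id\ot N(b)$ using consistency, exploit $\|\Id\ot N(b)\|\le1$, and then move back, losing only $O(\eps)$ in squared distance per step (cf.\ Lemma~\ref{lem:approx}).

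To conclude I would chain the crux with the correlation estimates: $M(a)M(b)\ket\psi\app{O(\eps)}{a,b}N(a+b)\ket\psi\app{O(\eps)}{a,b}M(a+b)\ket\psi$, and by $a\leftrightarrow b$ symmetry also $M(b)M(a)\ket\psi\app{O(\eps)}{a,b}M(a+b)\ket\psi$. To upgrade the contractions $M(a)$ to honest $\pm1$ observables, take the Naimark dilation of each binary POVM $\tilde M_a$ onto a fixed qubit ancilla $\ket{0}$, producing $A(a)$ with $A(a)^2=\Id$ and $\bra{0}A(a)\ket{0}=M(a)$; near-optimality gives $\Trho(\Id-M(a)^2)=O(\eps)$ on average, so the off-diagonal part of the dilation is negligible on $\ket\psi$ and the relation carries over as $A(a)A(b)\ket{\psi'}\app{O(\eps)}{a,b}A(a+b)\ket{\psi'}$ on $\rho'=\rho\ot\proj{0}$. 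Then
\[
\E_{a,b}\Tr_{\rho'}\big(A(a)A(b)A(a+b)\big)=\E_{a,b}\big\langle A(b)A(a)\psi'\,\big|\,A(a+b)\psi'\big\rangle\ \ge\ 1-\E_{a,b}\big\|\big(A(b)A(a)-A(a+b)\big)\ket{\psi'}\big\|\ \ge\ 1-O(\sqrt\eps),
\]
using $\|A(a+b)\ket{\psi'}\|=1$ and Jensen's inequality in the last step. It is exactly this final passage, from an averaged \emph{squared}-distance bound to an averaged distance bound, that produces the $O(\sqrt\eps)$ in place of $O(\eps)$.
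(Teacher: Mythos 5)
Your proposal is correct, and it rests on the same two pillars as the paper's proof --- the consistency tests to align marginal observables across the bipartite cut, and the linearity test to supply the three-fold correlation --- but it is organized rather differently. The paper never forms the product $M(a)M(b)$ of Alice's averaged marginals: it keeps Bob's joint projective measurement intact, uses the exact factorization $N_{ab}^{\a\beta}=N_{a|ab}^{\a}N_{b|ab}^{\beta}$ so that the linearity-test acceptance probability is literally $\E_{ab}\sum_{\a\beta}\Trho\big(\tilde M_{a+b}^{(\a\beta)}N_{a|ab}^{\a}N_{b|ab}^{\beta}\big)$, and then swaps the two Bob factors for Alice's dilated marginals $A_a^{\a},A_b^{\beta}$ one at a time via Lemma~\ref{lem:approx}, at cost $O\big(\E_{ab}\DIS_{\rho'}(A_a,N_{a|ab})\big)=O(\sqrt{\eps_c})$. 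Seen this way, the step you single out as the ``heart of the matter'' largely dissolves: since the per-pair measurements are projective, $P_{a,b}=R_a^{(a,b)}R_b^{(a,b)}$ is an \emph{exact} operator identity (no approximate-nondisturbance argument is needed there); the genuine work is only in passing from per-pair observables to the $b$-averaged $M(a)$, and your remedy --- routing the replaced factor through the other player's side, where it commutes with everything of Alice's --- is exactly the device the paper's substitution implements. Two minor points. First, your closing Naimark step (``the relation carries over'') silently needs one more application of that same routing trick, to control $\|(A(a)-M(a))M(b)\ket{\psi'}\|$ where the state has already been perturbed; this is not a gap, just an unspelled detail. Second, your final Jensen step is where you surrender the square root; since $\|A(b)A(a)\ket{\psi'}\|=\|A(a+b)\ket{\psi'}\|=1$, the identity $\Re\<u|v\>=1-\tfrac12\|u-v\|^2$ applied to your chain (each link of which bounds a \emph{squared} state-dependent distance by $O(\eps)$) would in fact deliver the stated bound with room to spare, whereas the paper's $O(\sqrt{\eps})$ arises intrinsically from the first-power Cauchy--Schwarz bound of Lemma~\ref{lem:approx}.
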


\begin{proof}[Proof of Theorem~\ref{thm:qblr_game}]
Introduce the following conditional measurement operator on $\mH_B$,
\begin{align*}
  N_{a|a b}^{\a} &= \sum_{\beta} N_{ab}^{\a \beta}.
\end{align*}
Note that for every $a,b$ and $\a$, $N_{a|ab}^\a$ is a projector since we assumed each $M_{ab}^{\a \beta}$ is as well. 
Suppose that the players' acceptance probability conditioned on the verifier performing the consistency part of the test (i.e. $c=a$ or $c=b$) is $1 -
\eps_{c}$, while conditioned on the verifier performing the linearity part of the test (i.e. $c=a+b$) it is $1 -
\eps_{l}$, so that $\eps = 2\eps_c/3 + \eps_l/3$. 
Let $\rho = \proj{\psi}_{AB}$. By definition of the consistency test,
\begin{align}
  1 - \eps_{c}  &=  \E_{ab} \CON_\rho\big(\tilde{M}_a, N_{a|ab}\big). \label{eq:lin-test-1a}
	\end{align}
Using Naimark's dilation theorem there is an ancilla space $\mH_{A'}$ and $\proj{\rm anc} \in \Density(\mH_A')$ such that the POVM $\{\tilde{M}^\a_a\}$ acting on $\mH_A$ can be simulated by a projective measurement $\{A_a^\a\}$ acting on ${\rho'} = {\rho} \otimes \proj{\rm anc}_{\mH_{A'}}$. 
Let
$d(a | ab ) = \DIS_{\rho'}\left( A_{a},N_{a|ab } \right)$, so that
by Jensen's inequality, \myeqref{condist} and~\eqref{eq:lin-test-1a},
\begin{align}
  \E_{ab} d(a|ab) &\leq \sqrt{\E_{ab} d(a | ab)^2}\notag\\
	&= O\Big(\sqrt{\E_{ab} \CON_{\rho}(\tilde{M}_a, N_{a|ab})}\Big)\notag \\
                        &= O(\sqrt{\eps_{c}}).\label{eq:lin-test-2}
\end{align}	
Now compute
	\begin{align*}
            \E_{ab} \Tr_{\rho'}\big(A(a) A(b) A(a + b)\big)   &= \E_{ab} \sum_{\a \beta} \Tr_{\rho'}\big(A_{a}^{\a} A_{b}^{\beta} A_{a+b}^{\a \beta} - A_{a}^{\a} A_{b}^{\beta} A_{a + b}^{- \a \beta}\big) \\
            &= 2\E_{ab} \sum_{\a \beta} \Tr_{\rho'}\big(A_{a}^{\a} A_{b}^{\beta} A^{\a \beta}_{a + b}\big) - 1 \\
            &\geq 2\E_{a b} \Big( \sum_{\a \beta}
              \Tr_{\rho'}\big( A^{\a \beta}_{a + b} N_{a|ab}^{\a} N_{b|ab}^{\beta}\big) - O\big( d(a|ab) +
              d(b|ab)\big)\Big)  - 1\\
            &= 1 - O\big(\eps_{l}+ \sqrt{\eps_{c}}\big),
\end{align*}
where the inequality uses \lemref{approx} and the last line is by~\eqref{eq:lin-test-2} and, by definition of the linearity test, 	
	\begin{align}
  1 - \eps_{l} &= \E_{ab} \sum_{\a, \beta} \Trho\big( \tilde{M}_{a + b}^{(\a \beta)} N_{ab}^{\a \beta} \big)\notag \\
  &=  \E_{ab} \sum_{\alpha, \beta} \Trho\big( \tilde{M}_{a+b}^{(\a \beta)} N_{a  |ab}^{\a} N_{b |ab}^{\beta} \big),\notag
\end{align}
since the POVM elements $N_{a|b}^{\a\beta}$ are projectors. 
\end{proof}

%----------------------%
\section{The Pauli braiding test}
\label{sec:twoprover}

In this section we combine the linearity test with an anticommutation test based on any anticommutation game $G_\ac$ satisfying Definition~\ref{def:ac-game} to devise a
two-player test for which the honest strategy consists of applying
tensor products of single-qubit observables in the set $\{\sigma_X(a)\sigma_Z(b),\,a,b\in\{0,1\}\}$. We show that for any strategy with
near-optimal success probability there exists a (local) isometry under which
the players' observables are close (in the state-dependent distance) on average to operators satisfying the Pauli commutation and anti-commutation (``braiding'') relations perfectly. 

\subsection{The protocol}

\begin{figure}[H]
\rule[1ex]{16.5cm}{0.5pt}\\
Let $G_\ac$ be a two-player anticommutation game, with special questions $q_X,q_Z$. 
The verifier performs the following one-round interaction with two
players. He starts by choosing one of the players at random and
labels them Alice; the other player is labeled Bob. In each test a player will be sent a label and a pair of $n$-bit strings. The $n$-bit strings are always assumed to be sent in lexicographic order.  
\begin{enumerate}
\item {\bf Linearity test:} The verifier chooses a basis setting $W \in
  \{X,Z\}$ and sends it to both players. He executes the two-player linearity test with the players.
\item {\bf Anticommutation test:} The verifier chooses two strings $a,b\in\{0,1\}^n$ such that $a\cdot b=1\mod2$ uniformly at random, and sends $(a,b)$ to both players. He executes the game $G_\ac$ with the players and accepts if and only if they succeed. 
\item {\bf Consistency test:} The verifier  chooses two strings
  $a,b\in\{0,1\}^n$ such that $a\cdot b=1\mod2$ uniformly at random, and a basis
  setting $W\in\{X,Z\}$. He sends $(W,a,b)$ to Alice. With probability $1/2$ each, 
\begin{itemize}
\item He samples a question $q$ from the second player's distribution in
  $G_\ac$ and sends $(q,a,b)$ to Bob. If $q=q_X$ (resp. $q=q_Z$) he accepts if and
  only if Alice's answer associated to $a$ (resp. $b$) equals $f_W(\alpha)$, where
  $\alpha$ is Bob's answer and $f_W$ the function from
  Definition~\ref{def:ac-game}. Otherwise, he accepts
  automatically. 
\item He selects a uniformly random $c\in\{0,1\}^n$ and sends $(N,a,c)$ to Bob. He accepts if and only if the product of Alice and Bob's answers associated to the query string $a$ is $+1$. 
\end{itemize}
\end{enumerate}
\rule[1ex]{16.5cm}{0.5pt}
\caption{The two-player Pauli braiding test}
\label{fig:protocol2player}
\end{figure}

The protocol for the Pauli braiding test is described in Figure~\ref{fig:protocol2player}. In the protocol there are several possible types of queries that each player may receive. For convenience we give them the following names: 
\begin{enumerate}
\item A \emph{$W$-query}, represented by $(W, a, b)$, where $W\in\{X,Z\}$ and $a, b $ are uniformly random strings in $\{0,
  1\}^n$. The expected answer is two bits $\alpha,\beta\in\{-1,1\}$. 
%\item A \emph{$Z$-query}, represented by $(Z, a, b)$. Same as an $X$-query, except
%  with $Z$ instead of $X$.
\item A \emph{$G$-query}, represented by $(q, a, b)$ where
  $q$ is a question in $G_\ac$ and $a,b$ are uniformly random strings in $\{0,1\}^n$. The expected answer is a single value $\alpha$ taken from the answer alphabet in $G$. 
\end{enumerate}

To each query is associated an intended behavior of the player, which is specified as part of the \emph{honest strategy} given in the following definition. 

\begin{definition}\label{def:honest-strategy}
The \emph{honest strategy} for the two players in the Pauli braiding test consists of the following. Let $U,V$ be unitaries to an optimal strategy in $G_\ac$ as in Definition~\ref{def:ac-game}, and recall that by the completeness property this strategy can be implemented by sharing $m$ EPR pairs of entanglement. 

The players share the state $\ket{\psi}_{AB} = \ket{\EPR}^{\otimes n}_{AB} \otimes \ket{\EPR}^{\otimes (m-1)}_{A'B'}$.
Upon receiving a query, a player performs the following depending on the type of the query:
\begin{itemize}
\item $W$-query $(W,a,b)$, for $W\in\{X,Z\}$: measure the
  compatible observables $\sigma_W(a)$ and $\sigma_W(b)$ on its share of $\ket{\EPR}^{\otimes n}_{AB}$, and return the two outcomes. 
%\item $Z$-query $(Z,a,b)$: same as $X$-query but with observables $\sigma_Z(a)$ and $\sigma_Z(b)$. 
	\item $G$-query $(q,a,b)$. Suppose the query is sent to Alice, the case of Bob
    being treated symmetrically. Let $W_{a,b}:\C^{2^n}\to\C^{2^n}$ be a unitary such that $W_{a,b}\sigma_X(a)W_{a,b}^\dagger = \Id_{\C^{2^{n-1}}} \otimes \sigma_X$ and $W_{a,b}\sigma_Z(b)W_{a,b}^\dagger = \Id_{\C^{2^{n-1}}} \otimes \sigma_Z$. (Such a $W_{a,b}$ exists and can be agreed upon by the players since in a $G$-query it is always the case that $a\cdot b=1\mod 2$, and both players are sent the same pair $(a,b)$.) Let $\{A_q^\alpha\}_\alpha$  be the projective measurement on $\C^2 \otimes \mH_{A'}$ associated with the first player in a honest strategy in $G$. Then Alice performs the projective measurement 
	$$\big\{(W_{a,b}^\dagger\otimes \Id_{A'})(\Id_{\C^{2^{n-1}}} \otimes A_q^\alpha )(W_{a,b}\otimes \Id_{{A'}})\big\}_\alpha$$
	and returns the outcome. 
\end{itemize}
\end{definition}

Having defined the honest strategy for the players we introduce some notation associated with arbitrary strategies in the protocol. We  specify a strategy using the shorthand $(N,\ket{\psi}_{AB})$. Here $\ket{\psi}_{AB}$ denotes the bipartite state shared by the players, and $N$ the collection of POVM that the players apply in response to the different types of queries they can be asked. Using Naimark's theorem we may assume without loss of generality that $\ket{\psi}_{AB}$ is a pure state and each player's POVM is projective. 

Given a query $(X,a,b)$ (resp. $(Z,a,b)$), we denote by $\{N_{ab}^{\alpha\beta}\}_{\alpha,\beta}$ (resp. $\{M_{ab}^{\alpha\beta}\}_{\alpha,\beta}$) the two-outcome projective measurement that is applied by a given player.
Since the protocol treats the players symmetrically we may assume that these operators are the same for both Alice and Bob (see e.g.~\cite[Lemma~2.5]{Vidick13xor}).
By taking appropriate marginals over the answers we define associated observables for the players, $X^A(a)$ and $Z^A(b)$ for the first player and $X^B(a)$ and $Z^B(b)$ for the second, as 
\begin{equation}\label{eq:hatx-def}
X^A(a) = \frac{1}{2^n}\sum_{b\in\{0,1\}^n} \sum_{\beta\in \{\pm1\}} \big(N_{ab}^{1\beta}-N_{ab}^{-1\beta}\big), \qquad Z^A(b) = \frac{1}{2^n}\sum_{a\in\{0,1\}^n} \sum_{\alpha\in \{\pm1\}} \big(M_{ab}^{\alpha1}-M_{a,b}^{\alpha-1}\big).
\end{equation}
Observables $X^B(a)$ and $Z^B(b)$ for the second player are defined similarly. 

Finally we use ${X'}^A(a,b)$ and ${Z'}^A(a,b)$ to denote the observables defined via~\eqref{eq:def-ac-xz} from Alice's strategy upon questions $(q_X,a,b)$ and $(q_Z,a,b)$ respectively.

\subsection{Statement of results}
\label{sec:twoplayer}

We state the analysis of the Pauli braiding test in two parts: first
we show that success in the test implies that observables~\eqref{eq:hatx-def} constructed from Alice and Bob's measurement operators approximately obey certain relations; then we show that these relations imply the existence of a local isometry under which the operators are close to operators satisfying the relations exactly.  

\begin{theorem}  \label{thm:gametwoplayer}
 Suppose a strategy $(N,\ket{\psi}_{AB})$ succeeds in the
  Pauli braiding test (Figure~\ref{fig:protocol2player}) with probability at least
  $\omega^*_{\enc} - \eps$, when the game $G_\ac$ is an $(\wg,\delta)$ anticommutation game. Then the following approximate relations hold, where operators $W^D$ are defined in~\eqref{eq:hatx-def} for $W\in\{X,Z\}$ and $D\in\{A,B\}$ and $\rho=\proj{\psi}$. 
	\begin{enumerate}
\item (Approximate consistency) For $W \in \{X, Z\}$, $\E_{a}
  \Drho(W^A(a), W^B(a))^2 = O(\eps)$;
\item (Approximate linearity) For $W \in \{X, Z\}$, $\E_{a,b} \Drho
  (W^A(a) W^A(b), W^A(a+b) )^2 = O(\sqrt{\eps})$;
\item (Approximate anticommutation) $\E_{a, b|a\cdot b=1} \Drho( X^A(a) Z^A(b) ,
  - Z^A(b) X^A(a))^{2} = O(\delta(\eps))$;
\item (Approximate commutation) $\E_{a, b|a\cdot b=0} \Drho( X^A(a) Z^A(b) ,
  Z^A(b) X^A(a))^{2} = O(\eps^{1/4} + \delta(\eps)^{1/2})$. 
\end{enumerate}
\end{theorem}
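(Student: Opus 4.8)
The idea is to treat the three components of the Pauli braiding test (the two–player linearity test run in each of the bases $W\in\{X,Z\}$, the anticommutation test, and the consistency test) separately. Since the test of Figure~\ref{fig:protocol2player} samples each component with constant probability, a strategy $(N,\ket{\psi}_{AB})$ of value at least $\omega^*_{\enc}-\eps$ must, conditioned on each component being selected, win with probability at least the honest conditional value minus $O(\eps)$. Parts~(1) and~(2) are then read off from the analysis of the two–player linearity test: applying \thmref{qblr_game} to the restriction of the strategy to $W$-queries, once for each $W\in\{X,Z\}$, gives $\E_{a,b}\Trho\big(W^A(a)W^A(b)W^A(a+b)\big)=1-O(\sqrt{\eps})$ for the associated observables; translating this cubic form into a state-dependent distance through identities as in \eqref{eq:dist_observables} (together with the approximate commutativity of the $W^A$'s, which lets one symmetrize the average over $a,b$) yields the approximate linearity of part~(2). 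The consistency branch of the linearity test (the check $\gamma=\alpha$, $\gamma=\beta$ in Figure~\ref{fig:linearity}) compares Alice's and Bob's marginal observables on a near-uniform string, giving $\E_a\Drho(W^A(a),W^B(a))^2=O(\eps)$, which is part~(1).

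For part~(3): success in the anticommutation branch means $G_\ac$ is won with probability at least $\wg-O(\eps)$, so the soundness clause of Definition~\ref{def:ac-game} provides, for each $(a,b)$ with $a\cdot b=1$, local isometries under which the special-question observables ${X'}(a,b),{Z'}(a,b)$ of \eqref{eq:def-ac-xz} agree in $\Drho$ with $\sigma_X,\sigma_Z$ on the qubit carved out on each side, with the global state close to $\ket{\EPR}$ there; using that $\sigma_X\otimes\Id$ and $\Id\otimes\sigma_Z$ act on $\ket{\EPR}$ like anticommuting operators, this gives $\E_{a,b\,|\,a\cdot b=1}\Drho\big({X'}(a,b){Z'}(a,b),-{Z'}(a,b){X'}(a,b)\big)^2=O(\delta(\eps))$. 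The consistency branch of Figure~\ref{fig:protocol2player} glues Alice's $W$-query observables to Bob's special-question observables, giving $\E_{a,b\,|\,a\cdot b=1}\Drho(X^A(a),{X'}^B(a,b))^2=O(\eps)$ and likewise for $Z$. To finish I would not manipulate the operators directly but pass to the scalar $\Trho\big(X^A(a)Z^A(b)X^A(a)Z^A(b)\big)$, whose value controls $\Drho(X^A(a)Z^A(b),-Z^A(b)X^A(a))^2$ by an identity as in \eqref{eq:dist_observables}; replacing each $X^A(a)$ by ${X'}^B(a,b)$ and each $Z^A(b)$ by ${Z'}^B(a,b)$ one factor at a time, each time absorbing the surrounding norm-$\le 1$ block into the bounded operator of \lemref{approx}, reduces this scalar to $\Trho\big({X'}^B(a,b){Z'}^B(a,b){X'}^B(a,b){Z'}^B(a,b)\big)=-1+O(\delta(\eps))$, which is part~(3).

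Part~(4) is the delicate step. For $a\cdot b=0$ (the case $b=0$ being trivial once part~(2) gives $Z^A(0)\approx\Id$), draw $c$ uniformly at random with $c\cdot b=1$, so that $c\cdot b=(a+c)\cdot b=1$. Starting from $\Drho(X^A(a)Z^A(b),Z^A(b)X^A(a))^2=2-2\Re\Trho\big(X^A(a)Z^A(b)X^A(a)Z^A(b)\big)$, I would (i) use approximate linearity (part~(2)) to replace the two occurrences of $X^A(a)$ by $X^A(a+c)X^A(c)$ and $X^A(c)X^A(a+c)$; (ii) use approximate anticommutation (part~(3)) to collapse the inner block $X^A(c)Z^A(b)X^A(c)$ to $-Z^A(b)$; (iii) apply part~(3) again to collapse $X^A(a+c)Z^A(b)X^A(a+c)$ to $-Z^A(b)$, leaving $\Trho(Z^A(b)^2)=1$. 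Every replacement is again carried out inside the trace via \lemref{approx} (cyclicity keeps the substituted factor next to a norm-bounded block), so no ``state-preserving'' property of the observables is needed. Averaging over $a,b,c$ and applying Jensen to the linearity error (which enters at the scale $\Drho$, hence contributes $O(\eps^{1/4})$) and to the anticommutation error (scale $O(\delta(\eps)^{1/2})$) gives the claimed $O(\eps^{1/4}+\delta(\eps)^{1/2})$.

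The main obstacle I anticipate is precisely the bookkeeping in part~(4): the linearity relations are only known to hold against $\ket{\psi}$, yet must be chained through the length-four word $XZXZ$ with operators on either side of each substituted factor. The device that makes this go through is to never leave the scalar world — reduce each distance to a trace and use cyclicity to keep the factor being replaced adjacent to a norm-bounded block, so \lemref{approx} applies with a constant $K$ and no auxiliary ``approximate state-preservation'' lemma is required. A secondary but necessary point is the matching of player labels: the soundness of the anticommutation game is stated for one designated player's special-question observables, so one must combine the symmetry of the protocol with the consistency branch to transfer the anticommutation relation onto $X^A$ and $Z^A$.
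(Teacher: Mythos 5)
Your overall architecture matches the paper's: parts (1)--(3) are obtained exactly as in the paper's Lemmas~\ref{lem:consistency} and~\ref{lem:anti-commute} together with \thmref{qblr_game} (your symmetrization of the cubic form for part (2) needs no approximate commutativity, only the relabelings $a\leftrightarrow b$ and $a\mapsto a+b$, and your part (3) works because the replacement operators ${X'}^B,{Z'}^B$ live on Bob's side and therefore commute past Alice's operators to reach the state). The paper's part (4) (\lemref{commutation}) uses the dual decomposition $Z(b)=Z(c)Z(c+b)$ with $c\cdot a=1$ rather than your $X(a)=X(a+c)X(c)$ with $c\cdot b=1$; that difference is immaterial.

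The genuine gap is in the mechanism you propose for part (4). All the relations you want to chain are state-dependent: approximate linearity says $(X^A(a)X^A(c)-X^A(a+c))\ket{\psi}$ is small, and \lemref{approx} only lets you substitute the factor \emph{adjacent to the state}, i.e.\ in an expression of the form $\Trho(CS)=\bra{\psi}CS\ket{\psi}$ with the bounded block $C$ entirely to the left. For $\rho=\proj{\psi}$ cyclicity of the trace buys you nothing: every cyclic rotation of $\Tr(\rho\, XZXZ)$ is still $\bra{\psi}XZXZ\ket{\psi}$ with the same word sandwiched between $\bra{\psi}$ and $\ket{\psi}$. Hence in $\bra{\psi}X^A(a)Z^A(b)X^A(a)Z^A(b)\ket{\psi}$ you can process the leftmost and rightmost factors, but your steps (i)--(iii) require rewriting the \emph{interior} occurrences of $X^A(a)$ and collapsing the interior block $X^A(c)Z^A(b)X^A(c)$, each of which has a non-commuting same-player operator ($Z^A(b)$) sitting between it and the state; the state-dependent relations simply do not apply there, and this is not a bookkeeping issue but the central difficulty of the lemma. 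The repair is precisely the device you explicitly disclaim needing: interleave applications of the consistency relation (part (1)) to shuttle already-handled operators to Bob's side, where they commute with all of Alice's operators and can be parked next to the state, apply linearity/anticommutation to whichever Alice-operator is now adjacent to $\ket{\psi}$, and shuttle back. This is exactly how the paper's proof of \lemref{commutation} proceeds (and also how its proof of \thmref{isotwoplayer} handles longer words), and it is why the consistency test is an essential ingredient of part (4) and not merely of part (1).
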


We note that the constant $\omega^*_{\enc}$ is given by
\begin{equation} \omega^*_{\enc} = \frac{2}{3} + \frac{1}{3}
  \wg, \label{eq:omegaenc} \end{equation}
where $\wg \in (0, 1]$ is the winning parameter associated with the $(\wg,\delta)$ anticommutation game $G_\ac$ used in the protocol. Thus if $\wg=1$ then $\omega^*_{\enc}=1$ as well.

\begin{theorem}
\label{thm:isotwoplayer}
Suppose given a bipartite state $\ket{\psi}_{AB}\in\mH_A\otimes\mH_B$, and
observables $\{X^A(a)\}_{a\in\{0,1\}^n}$, $\{Z^A(b)\}_{b\in\{0,1\}^n}$ on $\mH_A$ and
$\{X^B(a)\}_{a\in\{0,1\}^n}, \{Z^B(b)\}_{b\in\{0,1\}^n}$ on $\mH_B$ such that conditions 1.,2. and 3. in Theorem~\ref{thm:gametwoplayer} are satisfied, for some $\eps>0$ and $\delta(\eps)=O(\sqrt{\eps})$.\footnote{The restriction on $\delta$ is not necessary, but it is satisfied for both the CHSH and Magic Square games, and simplifies the presentation.} 
Then there exists a state 
$$\ket{\Psi}_{AB} = \ket{\psi}_{AB} \ot \ket{\EPR}_{A'A''} \ot \ket{\EPR}_{B'B''}\in \big(\mH_A \otimes (\C^2_{A'}\otimes \C^2_{A''})^{\otimes n} \big) \otimes \big(\mH_B \otimes (\C^2_{B'}\otimes \C^2_{B''})^{\otimes n} \big)$$
 and
 observables $\{P^A(a,b)\}$ on $AA'A''$ such that, if $\rho=\proj{\Psi}$, 
\begin{enumerate}
  \item[(a)] (Approximate consistency) $\E_a \Drho(P^A(a,0), X^A(a) \otimes
    \Id_{A'A''})^2 = O(\eps^{1/8})$ and $\E_b
    \Drho(P^A(0,b), Z^A(b) \otimes \Id_{A'A'' })^2 = O(\eps^{1/8})$.
  \item[(b)] (Pauli braiding) For all $a,b,a',b'\in\{0,1\}^n$, $P^A(a,b) P^A(a', b')= (-1)^{a' \cdot b}
    P^A(a+a', b+b') $.
		%= (-1)^{a' \cdot b + b' \cdot a} P^A(a', b')P^A(a, b)$.
\end{enumerate}
Likewise, there exist observables $\{P^B(a,b)\}$ on $BB'B''$ satisfying analogous relations.
\end{theorem}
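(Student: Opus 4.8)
The plan is to handle the two players separately. Conditions~2 and~3 only involve Alice's operators, so I will first construct, from those, a genuine Heisenberg--Weyl representation $\{P^A(a,b)\}$ on $AA'A''$ close to $\{X^A(a)\}$, $\{Z^A(b)\}$; the analogous family $\{P^B(a,b)\}$ is then obtained by running the identical argument on Bob's operators, after using the approximate consistency condition~1 together with \lemref{approx} and the triangle inequality to transfer the approximate linearity and approximate anticommutation relations from $X^A,Z^A$ to $X^B,Z^B$ (all estimates being understood as holding when the operators act on $\ket{\psi}$). For Alice, the argument is a ``rounding'' in three stages: make linearity exact; deduce approximate commutation for the pairs with $a\cdot b=0$; and — the real work — make anticommutation exact at the cost of adjoining the ancillary states $\ket{\EPR}_{A'A''}^{\otimes n}$, after which a genuine representation can be read off.

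Stage 1 applies \thmref{qblr} separately to $\{X^A(a)\}_a$ and to $\{Z^A(b)\}_b$, producing (on extended spaces) observables $\cA_X(a),\cA_Z(b)$ that are \emph{exactly} linear, with $\E_a\Drho(\cA_X(a),X^A(a))^2=O(\sqrt\eps)$ and $\E_b\Drho(\cA_Z(b),Z^A(b))^2=O(\sqrt\eps)$. Exact linearity forces the $\cA_X(a)$ to pairwise commute, so there is a projective measurement $\{C_X^u\}_{u}$ with $\cA_X(a)=\sum_u(-1)^{a\cdot u}C_X^u$, and likewise for $\cA_Z$; moreover, by \lemref{approx} and the triangle inequality, condition~3 survives the substitution $X^A\mapsto\cA_X$, $Z^A\mapsto\cA_Z$ (using the assumption $\delta(\eps)=O(\sqrt\eps)$). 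In Stage~2 I derive the companion commutation bound $\E_{a,b\,|\,a\cdot b=0}\Drho(\cA_X(a)\cA_Z(b),\cA_Z(b)\cA_X(a))^2=O(\sqrt\eps)$: for $b\neq0$ pick a uniformly random $a_1$ with $a_1\cdot b=1$, set $a_2=a+a_1$, use exact linearity $\cA_X(a)=\cA_X(a_1)\cA_X(a_2)$, and apply the anticommutation estimate twice (this is the trick already used for relation~4 of \thmref{gametwoplayer}, now on exactly-linear operators). Expanding $\cA_X(a)=\sum_u(-1)^{a\cdot u}C_X^u$ and averaging over all $a\in\{0,1\}^n$, the two estimates merge into a single statement about the block structure of $\cA_Z$ with respect to $\{C_X^u\}$, namely $\E_b\sum_{u\neq v+b}\big\|C_X^u\,\cA_Z(b)\,C_X^v\ket{\psi}\big\|^2=O(\sqrt\eps)$: on the state, $\cA_Z(b)$ almost exactly maps the $u$-block into the $(u+b)$-block.

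Stage~3 is the main obstacle: replacing $\cA_Z(b)$ by an operator $\widetilde Z(b)$ that (i) exactly intertwines the shift structure ($C_X^u\widetilde Z(b)=\widetilde Z(b)C_X^{u+b}$), (ii) is still an observable and still \emph{exactly} linear in $b$, and (iii) is $O(\eps^{1/8})$-close to $\cA_Z(b)$ on the state. The naive choice $\sum_vC_X^{v+b}\cA_Z(b)C_X^v$ satisfies (i) and, by Stage~2, is close to $\cA_Z(b)$, but it generically fails (ii), and repairing (ii) is precisely the role of the fixed product states $\ket{\EPR}_{A'A''}^{\otimes n}$: on $A'A''$ one has a canonical Heisenberg--Weyl representation $\sigma_X^{A'}(a)\sigma_Z^{A'}(b)$ together with the transfer identities $\sigma_P^{A'}(c)\ket{\EPR}^{\otimes n}=\sigma_P^{A''}(c)\ket{\EPR}^{\otimes n}$ that let operators be moved freely between $A'$ and $A''$. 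The plan is to build, out of $\{C_X^u\}$, a controlled-shift partial isometry on $AA'$ that transports the approximately-correct block action of $\cA_Z$ onto the exactly-correct shift on $A'$, pushing the discrepancy into $A''$, and to take $P^A(a,b)$ to be the conjugate of $\sigma_X^{A'}(a)\sigma_Z^{A'}(b)$ under this isometry (with $\cA_X(a)$ acting on the $A$-registers). I expect this stage to be the technically heaviest part: one must simultaneously control all the approximations, verify that the corrected $\widetilde Z(b)$ remains \emph{exactly} linear (which is where the naive projection breaks and the reference frame is needed), and ensure the adjoined state stays a fixed tensor power of $\ket{\EPR}$ rather than something strategy-dependent. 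Two successive square-root losses — once passing from the block-structure bound to closeness of $\widetilde Z(b)$, once renormalizing to an observable — account for the $O(\eps^{1/8})$ in part~(a). (An alternative to the hands-on construction would be to invoke a rounding theorem for approximate group representations applied to the Heisenberg--Weyl group; I would still fall back on the explicit construction above since it is what the ancillary EPR structure in the statement points to.)

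Stage~4 assembles the answer. Define $\{P^A(a,b)\}$ from the commuting exactly-linear $\cA_X$ and the corrected exactly-linear $\widetilde Z$. Exact linearity of both families, together with the now \emph{exact} relation $\cA_X(a)\widetilde Z(b)=(-1)^{a\cdot b}\widetilde Z(b)\cA_X(a)$ for $a\cdot b=1$, automatically forces exact commutation for $a\cdot b=0$ (repeat the Stage~2 trick with exact operators), hence the full braiding relation $P^A(a,b)P^A(a',b')=(-1)^{a'\cdot b}P^A(a+a',b+b')$, which is part~(b). Part~(a) follows by chaining estimates: $P^A(a,0)=\cA_X(a)\otimes\Id_{A'A''}$ is $O(\sqrt\eps)$-close to $X^A(a)\otimes\Id$ by Stage~1, and $P^A(0,b)$ is, up to the Stage~3 isometry, the corrected $\widetilde Z(b)$, which is $O(\eps^{1/8})$-close to $\cA_Z(b)$ and hence to $Z^A(b)\otimes\Id$ on the state. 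Finally, $\{P^B(a,b)\}$ on $BB'B''$ is produced by the same construction applied to Bob's operators, whose approximate linearity and anticommutation were established at the outset from condition~1.
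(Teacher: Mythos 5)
Your overall architecture (exact-linearize first, then repair the commutation/anticommutation structure using the ancilla EPR pairs) is a genuinely different route from the paper's, and Stages 1, 2 and 4 are sound: applying \thmref{qblr} separately to $\{X^A(a)\}$ and $\{Z^A(b)\}$, transferring condition~3 to the rounded operators via \lemref{approx}, deriving the $a\cdot b=0$ commutation by splitting $a=a_1+a_2$ with $a_1\cdot b=a_2\cdot b=1$ (left-multiplication by the unitary $\cA_X(a_1)$ is free in the state-dependent norm, so you do not even need the Bob-transfers of \lemref{commutation}), and reading off the block-shift statement $\E_b\sum_{u\neq v+b}\|C_X^u\,\cA_Z(b)\,C_X^v\ket{\psi}\|^2=O(\sqrt\eps)$ by Fourier inversion over $a$.

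The problem is Stage~3, which is where the entire content of conclusion~(b) lives, and which you do not actually carry out. You correctly identify that the naive correction $\sum_v C_X^{v+b}\cA_Z(b)C_X^v$ intertwines the shift but is neither an observable nor exactly linear in $b$, and you then describe a ``controlled-shift partial isometry'' that would fix this --- but you explicitly list the three properties you have not verified (exact linearity of $\widetilde Z(b)$ after correction, control of the accumulated errors, and strategy-independence of the adjoined state). Repairing one exact algebraic relation while preserving another exact relation is precisely the kind of step that cannot be waved through: polar-decomposition or projection-based fixes generically destroy linearity, and making the controlled-shift idea rigorous essentially amounts to re-proving a stability theorem for the Heisenberg--Weyl group, which you also defer to. As written, the proof of~(b) is incomplete.

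It is worth seeing how the paper sidesteps this entirely. Rather than rounding $X$ and $Z$ separately and then reconciling them, it first \emph{twists}: it sets $X'(a)=X^A(a)\otimes\sigma_X(a)_{A'}$ and $Z'(b)=Z^A(b)\otimes\sigma_Z(b)_{A'}$ on the half $A'$ of the ancilla EPR pairs, so that the phase $(-1)^{a\cdot b}$ from approximate anticommutation is cancelled by the exact Pauli phase on $A'$ and $X'(a)$, $Z'(b)$ approximately \emph{commute} for all $a,b$. \lemref{jointobs} then yields a single family of joint observables $C(a,b)\approx X'(a)Z'(b)$, which a (long but routine) chain of substitutions shows to be approximately linear over $\{0,1\}^{2n}$; one application of \thmref{qblr} over the doubled group produces exactly linear $D(a,b)$, and untwisting by $\sigma_X(a)_{A''}\sigma_Z(b)_{A''}$ on the other half of the ancilla reintroduces the phase, so that $P^A(a,b)=D(a,b)\otimes\sigma_X(a)_{A''}\sigma_Z(b)_{A''}$ satisfies the braiding relation \emph{identically}, with no repair step needed. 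If you want to salvage your outline, the cleanest fix is to replace Stage~3 by this twist-and-round-once argument; otherwise you must supply the controlled-shift construction in full, including a proof that the corrected $\widetilde Z(b)$ form an exact representation of $\mathbb{Z}_2^n$.
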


We note that the Pauli braiding relations expressed in (b) imply the existence
of an isomorphism such that the operators $P^A(a,b)$ (resp. $P^B(a,b)$) are
mapped to ``true'' Pauli operators $\sigma_X^A(a) \sigma_Z^A(b)$
(resp. $\sigma_X^B(a) \sigma_Z^B(b)$). 
% (cf. e.g.~\cite[]{RV}\anote{Undefined citation}\tnote{I was going to cite myself, but it's a bit ridiculous, this is ``well-known''...do you know of a good ref? Otherwise we just claim it}). 

The proofs of Theorem~\ref{thm:gametwoplayer} and Theorem~\ref{thm:isotwoplayer} are given in Sections~\ref{sec:gametwoplayer} and Section~\ref{sec:istwoplayer} respectively. Before moving to the proofs we state an immediate, but powerful, application of the theorems to the problem of establishing dimension witnesses. For this it is sufficient to note the following well-known fact:

\begin{fact}
Let $\rho$ be a density matrix on $\C^{\otimes n}\otimes \C^{\otimes n}$ and $\eps > 0 $ such that 
$$ \frac{1}{2^n} \sum_{P\in\{X,Z\}^n} \Tr( (\sigma_P \otimes \sigma_P) \rho ) \geq 1-\eps,$$
where $\sigma_P = \sigma_{P_1} \otimes \cdots \otimes \sigma_{P_n}$. Then 
$$ \bra{\EPR}^{\otimes n} \,\rho\, \ket{\EPR}^{\otimes n} \geq 1-\eps.$$
\end{fact}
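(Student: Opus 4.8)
The plan is to realize $\proj{\EPR}^{\otimes n}$ as the product of two commuting projectors built out of the Pauli correlations appearing in the hypothesis, and then to close with a union-bound estimate. First I would set $\Pi_X := \E_a\,[\,\sigma_X(a)\otimes\sigma_X(a)\,]$ and $\Pi_Z := \E_b\,[\,\sigma_Z(b)\otimes\sigma_Z(b)\,]$, the averages being over uniform $a,b\in\{0,1\}^n$. Since $\{\sigma_X(a)\otimes\sigma_X(a)\}_a$ is a group of commuting Hermitian involutions (isomorphic to $\mathbb{Z}_2^n$ via $\sigma_X(a)\sigma_X(a')=\sigma_X(a+a')$), $\Pi_X$ is the orthogonal projector onto their common $+1$ eigenspace, and likewise for $\Pi_Z$. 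The two projectors commute: commuting $\sigma_X(a)$ past $\sigma_Z(b)$ produces a sign $(-1)^{a\cdot b}$ on each of the $A$ and $B$ factors, and the two signs cancel. Hence $\Pi_X\Pi_Z$ is the orthogonal projector onto $\mathrm{range}(\Pi_X)\cap\mathrm{range}(\Pi_Z)$. Intersecting the two subspaces qubit pair by qubit pair, on the $i$-th EPR pair one is left exactly with the unique common $+1$ eigenvector of $\sigma_X\otimes\sigma_X$ and $\sigma_Z\otimes\sigma_Z$, namely $\ket{\EPR}$; therefore $\Pi_X\Pi_Z=\proj{\EPR}^{\otimes n}$.

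With this identity the argument is short. Reorganizing the sum in the hypothesis into its $X$-type and $Z$-type parts (writing a string $P$ as a commuting product $\sigma_X(a_P)\sigma_Z(b_P)$ and renormalizing) shows that $\tfrac12\big(\Tr(\Pi_X\rho)+\Tr(\Pi_Z\rho)\big)$ is within $O(\eps)$ of its maximal value $1$; since each of $\Tr(\Pi_X\rho),\Tr(\Pi_Z\rho)$ is at most $1$, both are at least $1-O(\eps)$. Then, from the operator inequality $I-\Pi_X\Pi_Z=(I-\Pi_X)+\Pi_X(I-\Pi_Z)\preceq(I-\Pi_X)+(I-\Pi_Z)$ — valid because $\Pi_X$ commutes with the positive operator $I-\Pi_Z$ and $\Pi_X\preceq I$ — tracing against $\rho$ gives $\bra{\EPR}^{\otimes n}\,\rho\,\ket{\EPR}^{\otimes n}=\Tr(\Pi_X\Pi_Z\rho)\geq\Tr(\Pi_X\rho)+\Tr(\Pi_Z\rho)-1\geq 1-O(\eps)$, as desired.

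I expect the only delicate point to be the bookkeeping in the reorganization of the hypothesis — getting the normalization right and splitting the Pauli strings cleanly into $X$- and $Z$-type correlations; the algebraic core, that $\Pi_X$ and $\Pi_Z$ are commuting projectors with $\Pi_X\Pi_Z=\proj{\EPR}^{\otimes n}$, is routine. (If instead one reads the left-hand side of the hypothesis as an average over a complete family of correlations $\sigma_X(a)\sigma_Z(b)\otimes\sigma_X(a)\sigma_Z(b)$, then the same group-averaging identity shows it already equals $\bra{\EPR}^{\otimes n}\,\rho\,\ket{\EPR}^{\otimes n}$ and the statement is immediate, with the constant $1-\eps$ rather than $1-O(\eps)$.)
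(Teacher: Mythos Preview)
Your ``reorganization'' step is where the argument breaks --- and it cannot be repaired, because the Fact as stated is actually \emph{false} for $n\geq 2$. Take $\rho=\proj{\Psi^-}_{A_1B_1}\otimes\proj{\Psi^-}_{A_2B_2}\otimes\proj{\EPR}^{\otimes(n-2)}$, where $\ket{\Psi^-}=\tfrac{1}{\sqrt2}(\ket{01}-\ket{10})$ is the singlet. Since $(\sigma_W\otimes\sigma_W)\ket{\Psi^-}=-\ket{\Psi^-}$ for both $W\in\{X,Z\}$, every term $\Tr((\sigma_P\otimes\sigma_P)\rho)$ picks up a factor $(-1)^2=1$ from the first two pairs and $+1$ from the rest, so the hypothesis holds with $\eps=0$; yet $\bra{\EPR}^{\otimes n}\rho\ket{\EPR}^{\otimes n}=0$. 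Equivalently, $\frac{1}{2^n}\sum_{P\in\{X,Z\}^n}\sigma_P\otimes\sigma_P=\bigotimes_i\tfrac12(\sigma_X\otimes\sigma_X+\sigma_Z\otimes\sigma_Z)$ has a $+1$ eigenvector orthogonal to $\ket{\EPR}^{\otimes n}$, so the paper's single-qubit inequality $\proj{\EPR}\geq\tfrac12(\sigma_X\otimes\sigma_X+\sigma_Z\otimes\sigma_Z)$ does not tensor up either (the right-hand side is not positive semidefinite, and operator inequalities with an indefinite side need not survive tensoring).

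What the surrounding argument actually needs --- and what is available from the preceding theorems --- is the stronger hypothesis that the $X$-type and $Z$-type correlations are \emph{separately} large: $\E_a\Tr((\sigma_X(a)\otimes\sigma_X(a))\rho)\geq 1-\eps$ and $\E_b\Tr((\sigma_Z(b)\otimes\sigma_Z(b))\rho)\geq 1-\eps$. For that corrected statement your argument is exactly right: these two averages \emph{are} $\Tr(\Pi_X\rho)$ and $\Tr(\Pi_Z\rho)$, and your union bound $\Tr(\Pi_X\Pi_Z\rho)\geq\Tr(\Pi_X\rho)+\Tr(\Pi_Z\rho)-1$ finishes it with constant $1-2\eps$. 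So your instinct that the ``delicate point'' was the reorganization was on target --- there is simply no way to extract separate $X$- and $Z$-type control from the mixed-Pauli average as written. Your parenthetical alternative (reading the average as over all $4^n$ products $\sigma_X(a)\sigma_Z(b)\otimes\sigma_X(a)\sigma_Z(b)$) is another valid fix, under which the left-hand side already equals $\bra{\EPR}^{\otimes n}\rho\ket{\EPR}^{\otimes n}$ by the group-averaging identity you describe.
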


\begin{proof}
Observe that $\ket{\EPR}\bra{\EPR} \geq \frac{1}{2}(\sigma_X\otimes \sigma_X + \sigma_Z \otimes \sigma_Z)$. 
\end{proof}

Combining this fact and Theorems~\ref{thm:gametwoplayer}
and~\ref{thm:isotwoplayer} gives the following consequence: a robust self-test for $n$ EPR
pairs.

\begin{corollary}\label{cor:epr-test}
  Suppose given a strategy $(N,\ket{\psi}_{AB}$) for the players in the Pauli braiding test (Figure~\ref{fig:protocol2player}) with success probability $\omega^*_{\enc} - \eps$, for some $\eps>0$. Then
  there exists a local isometry $\Phi = (\Phi^A:\mH_A\to\mH_{A'}\otimes \mH_{A''},\Phi^B:\mH_B\to\mH_{B'}\otimes \mH_{B''})$ such that 
  $$\Tr\Big( \big(\bra{\EPR}_{A'B'}^{\otimes
      n}\otimes \Id_{A''B''}\big)\,\big( \Phi^A \otimes \Phi^B(\ket{\psi}\bra{\psi}_{AB})\big)\big( \ket{\EPR}_{A'B'}^{\otimes n} \otimes \Id_{A''B''}\big) \Big) = 1 - O\big(\eps^{1/8}\big).$$
\end{corollary}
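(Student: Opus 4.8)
The plan is to chain the two main results of this section, Theorems~\ref{thm:gametwoplayer} and~\ref{thm:isotwoplayer}, to produce an approximate Pauli braiding structure, transport it through the isometry that the \emph{exact} braiding relations automatically provide, and then cash the whole thing out against the elementary Fact above. To fix ideas, take the anticommutation game $G_\ac$ inside the Pauli braiding test to be the CHSH game or the Magic Square game, so that it is a $(\wg, O(\sqrt\eps))$ anticommutation game and in particular $\delta(\eps) = O(\sqrt\eps)$, as Theorem~\ref{thm:isotwoplayer} requires. First I would feed the given strategy, which wins with probability $\omega^*_{\enc} - \eps$, into Theorem~\ref{thm:gametwoplayer}: this shows that the observables $X^D(a), Z^D(b)$ of~\eqref{eq:hatx-def} ($D\in\{A,B\}$) satisfy the approximate consistency, linearity and anticommutation relations~1--3, with parameters $O(\eps)$, $O(\sqrt\eps)$, $O(\sqrt\eps)$. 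These are precisely the hypotheses of Theorem~\ref{thm:isotwoplayer}, so applying it yields the extended state $\ket\Psi_{AB} = \ket\psi_{AB}\ot\ket{\EPR}^{\ot n}_{A'A''}\ot\ket{\EPR}^{\ot n}_{B'B''}$, observables $\{P^A(a,b)\}$ and $\{P^B(a,b)\}$ obeying the exact braiding relations~(b), and the consistency estimates $\E_a\Drho(P^A(a,0), X^A(a)\ot\Id)^2 = O(\eps^{1/8})$, $\E_b\Drho(P^A(0,b), Z^A(b)\ot\Id)^2 = O(\eps^{1/8})$, and likewise for $B$, with $\rho = \proj\Psi$.

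Next I would invoke the remark following Theorem~\ref{thm:isotwoplayer} --- the standard representation theory of the Weyl--Heisenberg group of $\mathbb{Z}_2^{2n}$ --- to obtain isometries on Alice's and Bob's (full) Hilbert spaces sending $P^A(a,b) \mapsto (\sigma_X(a)\sigma_Z(b))_{A_1}\ot\Id_{A_2}$ on $\C^{2^n}_{A_1}\ot\mH_{A_2}$, and symmetrically for $B$. Composing these with the manifestly local isometry $\ket\psi_{AB}\mapsto\ket\Psi_{AB}$ (a tensor product with a fixed product of EPR pairs) produces a local isometry $\Phi=(\Phi^A,\Phi^B)$ of exactly the form demanded by the corollary; write $\rho_\Phi$ for the image of $\proj\psi$ under $\Phi^A\ot\Phi^B$. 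Now each $P\in\{X,Z\}^n$ corresponds bijectively to a pair $(a,\bar a)$ with $a$ the set of $X$-coordinates of $P$; since $a$ and $\bar a$ have disjoint support one has $\sigma_P = \sigma_X(a)\sigma_Z(\bar a)$, so the isometry property of $\Phi$ gives
\[ \frac{1}{2^n}\sum_{P\in\{X,Z\}^n}\Tr\!\big(\rho_\Phi\,(\sigma_P\ot\sigma_P)\big)\;=\;\E_{a}\,\bra\Psi P^A(a,\bar a)\,P^B(a,\bar a)\,\ket\Psi , \]
the expectation being over uniform $a\in\{0,1\}^n$. If the right-hand side is at least $1 - O(\eps^{1/8})$, the Fact immediately yields the corollary, with error $O(\eps^{1/8})$.

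To prove that bound I would argue as follows. Writing $P^D(a,\bar a)=P^D(a,0)P^D(0,\bar a)$ and using that $A$-side operators commute with $B$-side operators, together with $a\cdot\bar a = 0$ (so that $P^D(0,\bar a)$ commutes with both $P^A(a,0)$ and $P^B(a,0)$), a short triangle-inequality computation --- in which one always rewrites products so the common factor sits on the \emph{left} before cancelling it, and hence only ever strips a unitary off on the left --- gives, for every $a$,
\[ \Drho\!\big(P^A(a,\bar a)P^B(a,\bar a),\,\Id\big)\;\le\;\Drho\!\big(P^A(a,0),P^B(a,0)\big)+\Drho\!\big(P^A(0,\bar a),P^B(0,\bar a)\big). \]
Squaring, averaging over $a$ (note $\bar a$ is uniform whenever $a$ is), and bounding each term through the $X$- or $Z$-observables --- e.g. $\E_a\Drho(P^A(a,0),P^B(a,0))^2 \le 3\big(\E_a\Drho(P^A(a,0),X^A(a))^2+\E_a\Drho(X^A(a),X^B(a))^2+\E_a\Drho(X^B(a),P^B(a,0))^2\big)=O(\eps^{1/8})$ by Step~1 and condition~1 of Theorem~\ref{thm:gametwoplayer}, and symmetrically in $Z$ --- one gets $\E_a\Drho(P^A(a,\bar a)P^B(a,\bar a),\Id)^2 = O(\eps^{1/8})$, equivalently $\E_a\bra\Psi P^A(a,\bar a)P^B(a,\bar a)\ket\Psi \ge 1 - O(\eps^{1/8})$, as needed.

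The real content is already packaged inside Theorems~\ref{thm:gametwoplayer} and~\ref{thm:isotwoplayer}; the only step requiring genuine care here is the last one, namely making sure the state-dependent-distance manipulations stay honest: every cancellation of a unitary must be a left-multiplication (after using a commutation relation to bring the unitary there if needed), since right-multiplying by a unitary would silently replace $\rho$ by a rotated state and break the chain of estimates. The exponent $\eps^{1/8}$ is simply inherited from the consistency bound in Theorem~\ref{thm:isotwoplayer}, which is the bottleneck.
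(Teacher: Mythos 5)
Your proposal is correct and follows exactly the route the paper intends: the paper gives no explicit proof of this corollary beyond the remark that it follows by ``combining'' the Fact with Theorems~\ref{thm:gametwoplayer} and~\ref{thm:isotwoplayer}, and your argument is a faithful, more detailed execution of that combination (including the correct left-cancellation discipline in the state-dependent-distance chain and the correct identification of the bottleneck $O(\eps^{1/8})$ from item~(a) of Theorem~\ref{thm:isotwoplayer}). No gaps.
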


By instantiating the anticommutation game $G_\ac$ used in the test with the Magic Square game we obtain a robust self-test for $n$ EPR pairs in which the optimal strategy only requires the use of $(n+1)$ EPR pairs and is accepted with probability $1$.\footnote{In fact, for the case of the Magic Square game it is not hard to see that there always exists an optimal strategy in the test using $\max(2,n)$ EPR pairs.}

\subsection{Proof of Theorem~\ref{thm:gametwoplayer}}
\label{sec:gametwoplayer}

The proof of Theorem~\ref{thm:gametwoplayer} proceeds by analyzing each of the three subtests performed in the Pauli braiding test separately, and then putting them together to establish the three conditions claimed in the theorem. We give the proof of the theorem now, assuming the results on each subtest established in \lemref{consistency}, \lemref{anti-commute} and \lemref{commutation} below. 

\begin{proof}[Proof of~\thmref{gametwoplayer}]
 Given a strategy $(N,\ket{\psi}_{AB})$ for the players, define observables $X^A(a),Z^A(b)$ and $X^B(a),Z^B(b)$ as in~\eqref{eq:hatx-def}. %We show that $X^A(a), Z^A(a), X^B(a), Z^B(a)$ satisfy the conclusions of the theorem.
  Property 1. of approximate consistency is established by the consistency test
  (\lemref{consistency}). Property 2. of approximate linearity follows from the 
  Linearity Test (\thmref{qblr_game}). When $a\cdot b = 1 \mod 2$, the
  approximate anticommutation property is
  established by the anticommutation test
  (\lemref{anti-commute}). When $a \cdot b = 0 \mod 2$ the
  corresponding commutation is proved in \lemref{commutation}.
\end{proof}

\subsubsection{Consistency Test}

The following lemma states consequences of the consistency test we will use. 

\begin{lemma}\label{lem:consistency}
Suppose the strategy $(N,\ket{\psi})$ succeeds in the consistency test with probability $1-\eps$. Then there exists $\eps_{\stab} = O(\eps)$ such that 
\[ \E_a \DIS_{\rho}(X^A(a), {X^B}(a) )^2  \leq \eps_{stab} \qquad \text{and} \qquad  \E_b \DIS_\rho({Z^A}(b), {Z^B}(b))^2 \leq \eps_{stab},\]
and
\[ \E_{a,b|a\cdot b=1} \DIS_{\rho}({{X}^A}(a), {{X'}^B}(a,b) )^2  \leq \eps_{stab} \qquad \text{and} \qquad  \E_{a,b|a\cdot b=1} \DIS_\rho({{Z}^A}(b), {{Z'}^B}(a,b))^2 \leq \eps_{stab}. \]
Moreover, the honest strategy succeeds in the test with probability
$1$. 
\end{lemma}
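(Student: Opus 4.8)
\textbf{Proof plan for Lemma~\ref{lem:consistency}.}

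The plan is to unwind the definition of the consistency subtest and convert each accept condition into a statement about the state-dependent distance between the relevant observables. Recall that in the consistency test the verifier sends $(W,a,b)$ to Alice (with $a\cdot b=1$, $W\in\{X,Z\}$) and then with probability $1/2$ either (i) sends $(q,a,b)$ to Bob with $q$ drawn from Bob's marginal in $G_\ac$, accepting iff Alice's answer for $a$ (if $q=q_X$) or $b$ (if $q=q_Z$) matches $f_W$ applied to Bob's answer; or (ii) sends $(N,a,c)$ to Bob for uniformly random $c$, accepting iff the product of the two players' answers associated to $a$ is $+1$. Branch (ii), after averaging over the "dummy" strings $b,c$, is exactly a statement that $X^A(a)$ and $X^B(a)$ (resp.\ $Z^A(b),Z^B(b)$) are consistent: success probability $1-O(\eps)$ on this branch gives $\E_a \tfrac14\Trho((X^A(a)-X^B(a))^2) = O(\eps)$, i.e.\ $\E_a \Drho(X^A(a),X^B(a))^2 = O(\eps)$, using the standard identity \eqref{eq:dist_observables} and the fact that the players' measurements are symmetric so the same family of observables is available to both. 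The $Z$ case is identical since the $W\in\{X,Z\}$ choice is uniform. For branch (i), by definition of the observables ${X'}^B(a,b), {Z'}^B(a,b)$ in \eqref{eq:def-ac-xz} as $\sum_a f_X(a)A_{q_X}^a$ etc., the accept condition says precisely that Alice's $a$-observable agrees with ${X'}^B(a,b)$ and her $b$-observable with ${Z'}^B(a,b)$; success $1-O(\eps)$ on this branch therefore yields the second pair of displayed inequalities, again via \eqref{eq:dist_observables}.

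The mild technical points I would be careful about are: (1) the answer that Alice uses in branch (i) for string $a$ when $W=X$ is the marginal of her $W$-query measurement $N_{ab}$ on the first coordinate, which is exactly how $X^A(a)$ is defined in \eqref{eq:hatx-def} after averaging over the dummy $b$ — so there is a genuine matching of operators and no extra error term beyond $O(\eps)$; similarly her $b$-answer when $W=Z$ matches $Z^A(b)$ from \eqref{eq:hatx-def}; (2) one should pass to the Naimark dilation / purification $\ket\psi$ so that all POVMs are projective and the squared-distance identities \eqref{eq:dpsi-1a}–\eqref{eq:dist_observables} apply verbatim; (3) since each of the several accept events inside the consistency test occurs with constant probability conditioned on the test being selected, an overall failure probability $\eps$ in the test bounds the failure probability of each individual event by $O(\eps)$, which is where the common constant $\eps_{\stab}=O(\eps)$ comes from. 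I would simply take $\eps_{\stab}$ to be the maximum of the (constant-factor) bounds obtained for the four inequalities.

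Finally, the completeness claim: for the honest strategy of Definition~\ref{def:honest-strategy}, on branch (ii) both players measure $\sigma_X(a)$ (or $\sigma_Z(b)$) on their halves of $\ket{\EPR}^{\otimes n}$, whose outcomes have product $+1$ with certainty since $(\sigma_X(a)\otimes\sigma_X(a))\ket{\EPR}^{\otimes n} = \ket{\EPR}^{\otimes n}$; on branch (i) Bob runs the honest $G_\ac$ strategy, whose completeness property (Definition~\ref{def:ac-game}) guarantees $\sum_a f_X(a)A_{q_X}^a = \sigma_X\otimes\Id$ acting on the first EPR pair, which after conjugation by $W_{a,b}$ matches exactly Alice's measurement of $\sigma_X(a)$, and likewise for $Z$; so the verifier accepts with probability $1$. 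I expect the only real obstacle to be bookkeeping: making sure the "answer associated to $a$" in each branch is literally the observable named in \eqref{eq:hatx-def} or \eqref{eq:def-ac-xz}, so that no hidden approximation creeps in; everything else is a direct application of \eqref{eq:dist_observables} and \lemref{approx}.
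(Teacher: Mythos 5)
Your proposal is correct and follows essentially the same route as the paper: the paper's proof likewise observes that the acceptance probability of the consistency test decomposes (branch by branch) into the averages of $\CON_\rho(X^A(a),X^B(a))$, $\CON_\rho(X^A(a),{X'}^B(a,b))$ and their $Z$ counterparts, deduces that each is $1-O(\eps)$, and converts to the squared state-dependent distance via \eqref{eq:condist}, with completeness following from the honest strategy and $\sigma_X\otimes\sigma_X\ket{\EPR}=\sigma_Z\otimes\sigma_Z\ket{\EPR}=\ket{\EPR}$. Your additional bookkeeping remarks (matching the marginalized observables to \eqref{eq:hatx-def} and \eqref{eq:def-ac-xz}, and the constant conditional probability of each branch) are exactly the implicit steps the paper elides.
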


\begin{proof}
It follows from the definition of $\CON_\rho$ that any strategy $(N,\ket{\psi})$ succeeding in the test with probability $1 - \eps$ satisfies 
 \begin{align*}
 \frac{1}{2}\Big( \E_{a,b|a\cdot b=1} \CON_\rho({X}^A(a), {{X'}^B}(a,b)) + \E_{a} \CON_\rho({X}^A(a), {{X}^B}(a))\Big)\,  &= 1 - O(\eps) \\
\frac{1}{2}\Big(   \E_{a,b|a\cdot b=1} \CON_\rho({Z}^A(b), {{Z'}^B}(a,b))+ \E_{b} \CON_\rho({{Z}^A}(b), {{Z}^B}(b))\Big) &= 1 - O(\eps).
 \end{align*}
The first part of the lemma follows directly by applying \myeqref{condist} to the above relations. The second part follows from the definition of the honest strategy and the fact that 
$$\sigma_X \otimes \sigma_X \ket{\EPR} = \sigma_Z \otimes \sigma_Z \ket{\EPR} = \ket{\EPR}.$$ 
%$$\E_{a} \DIS_\rho(X^A(a), {X^B}(a))^2 = O(\eps)\qquad \text{and}\qquad \E_{b} \DIS_\rho({Z^A}(b), {Z^B}(b))^2 = O(\eps),$$
%$$\E_{a} \DIS_\rho(X^A(a), {{X'}^B}(a))^2 = O(\eps)\qquad \text{and}\qquad \E_{b} \DIS_\rho({Z^A}(b), {{Z'}^B}(b))^2 = O(\eps).$$
\end{proof}

\subsubsection{Anticommutation test}

The (approximate) Pauli braiding relations state that 
$$X^A(a){Z^A}(b)\ket{\psi}
\approx (-1)^{a \cdot b} {Z^A}(a)X^A(b) \ket{\psi}.$$
There are two cases: if $a\cdot b = 0 \mod 2$ then the two operators should commute;
otherwise, they should anti-commute. The anticommutation test enforces the latter property. In Section~\ref{sec:commutation} we show how the former can be derived as a consequence.

\begin{lemma}\label{lem:anti-commute}
Suppose the game $G_\ac$ used in the anticommutation test is an $(\wg,\delta)$ anticommutation game. Suppose the strategy $(N,\ket{\psi})$ succeeds in the
anticommutation test with probability $\wg-\eps_\ac$ and in
the consistency test with probability $1 - \eps_{\stab}$. Then  
$$\E_{a, b : a\cdot b = 1} \Drho( X^A(a){Z^A}(b),
(-1)^{a \cdot b} {Z^A}(b)X^A(a) )^2 = O(\delta(\eps_\ac)) + O(\sqrt{\eps_{\stab}}).$$
Moreover, the honest strategy succeeds in this test with
probability $\wg$.
\end{lemma}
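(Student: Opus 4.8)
The plan is to use the $G$-queries of the Pauli braiding test to reduce, for each fixed pair $(a,b)$ with $a\cdot b=1$, the anticommutation of $X^A(a)$ and $Z^A(b)$ on $\ket\psi$ to the anticommutation guarantee supplied by the soundness clause of Definition~\ref{def:ac-game} for the anticommutation game $G_\ac$. Fix such a pair $(a,b)$; since $a\cdot b=1\bmod 2$, it is sent to both players in a $G$-query, and both players run $G_\ac$ on it. Condition on this event: by assumption the players succeed in $G_\ac$ with probability $\wg-\eps_\ac(a,b)$ where $\E_{a,b|a\cdot b=1}\eps_\ac(a,b)=\eps_\ac$. Applying the soundness property of the $(\wg,\delta)$ anticommutation game to the induced strategy, there are local isometries $U_{a,b},V_{a,b}$ and a residual state such that the observables ${X'}^A(a,b)$ and ${Z'}^B(a,b)$ defined through~\eqref{eq:def-ac-xz} are, respectively, $\delta(\eps_\ac(a,b))$-close in state-dependent distance to $U_{a,b}^\dagger(\sigma_X\otimes\Id)U_{a,b}$ and $V_{a,b}^\dagger(\sigma_Z\otimes\Id)V_{a,b}$ — and crucially, since $\sigma_X$ and $\sigma_Z$ anticommute exactly, the conjugated-through operators anticommute exactly, hence ${X'}^A(a,b){Z'}^B(a,b)\ket\psi\approx_{O(\delta(\eps_\ac(a,b)))}-{Z'}^B(a,b){X'}^A(a,b)\ket\psi$. (One has to be a little careful: the soundness statement bounds $\DIS_\rho(X,U^\dagger\sigma_X U)$ and $\DIS_\rho(Z,V^\dagger\sigma_Z V)$ on the two \emph{different} sides, so I would first move everything to one player using Lemma~\ref{lem:approx} and the near-maximal entanglement, then combine.)

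The second ingredient is to transfer this relation from the primed ``$G$-query'' observables ${X'}^A(a,b),{Z'}^B(a,b)$ back to the unprimed linearity-test observables $X^A(a),Z^A(b)$. This is exactly what the consistency test is for: Lemma~\ref{lem:consistency} gives $\E_{a,b|a\cdot b=1}\DIS_\rho(X^A(a),{X'}^B(a,b))^2=O(\eps_{\stab})$ and $\E_{a,b|a\cdot b=1}\DIS_\rho(Z^A(b),{Z'}^B(a,b))^2=O(\eps_{\stab})$, and the basic consistency $\E_a\DIS_\rho(X^A(a),X^B(a))^2=O(\eps_{\stab})$, $\E_b\DIS_\rho(Z^A(b),Z^B(b))^2=O(\eps_{\stab})$. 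Using these together with Lemma~\ref{lem:approx} (the operators being substituted are all observables, so the norm constant $K$ is $O(1)$), I can replace ${X'}^A(a,b)$ by $X^A(a)$ and ${Z'}^B(a,b)$ by $Z^A(b)$ (passing through $Z^B(b)$) inside the anticommutation relation, at the cost of a further additive error $O(\sqrt{\eps_{\stab}})$ on average over $(a,b)$ with $a\cdot b=1$. Squaring and averaging, and using Jensen's inequality to pull the average of $\delta(\eps_\ac(a,b))$ inside — here I use that $\delta$ is concave-like enough, or simply that $\delta(\eps)=O(\sqrt\eps)$ as the footnote allows, so $\E_{a,b}\,\delta(\eps_\ac(a,b))=O(\sqrt{\eps_\ac})=O(\delta(\eps_\ac))$ — yields
$$\E_{a,b:a\cdot b=1}\DIS_\rho\big(X^A(a)Z^A(b),\,-Z^A(b)X^A(a)\big)^2=O(\delta(\eps_\ac))+O(\sqrt{\eps_{\stab}}),$$
which is the claimed bound (note $(-1)^{a\cdot b}=-1$ on this event).

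The honest-strategy claim is immediate from Definition~\ref{def:honest-strategy}: in the anticommutation test both players receive the same $(a,b)$ with $a\cdot b=1$ and apply the honest strategy for $G_\ac$ (conjugated by the agreed-upon $W_{a,b}$, which does not affect the success probability of $G_\ac$), so they win with probability exactly $\wg$ by the completeness property.

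The main obstacle I anticipate is the bookkeeping around which subsystem each bound lives on: the anticommutation game's soundness naturally constrains $X$ on Alice's side and $Z$ on Bob's side relative to \emph{a priori different} local isometries, whereas the conclusion is a single state-dependent relation between $X^A(a)$ and $Z^A(b)$ acting on the same player. Threading the consistency bounds (which relate $A$-side and $B$-side operators) through Lemma~\ref{lem:approx} to land everything on one side, while keeping the accumulated error at the stated order and handling the per-$(a,b)$ isometries $U_{a,b},V_{a,b}$ uniformly in $(a,b)$, is the delicate part; the anticommutation itself is essentially free once the operators are identified, since it is inherited exactly from $\sigma_X\sigma_Z=-\sigma_Z\sigma_X$.
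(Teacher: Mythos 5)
Your proposal is correct and follows essentially the same route as the paper: invoke the soundness clause of the $(\wg,\delta)$ anticommutation game to get approximate anticommutation of the primed $G$-query observables (inherited from the exact relation $\sigma_X\sigma_Z=-\sigma_Z\sigma_X$ under the isometries), then use Lemma~\ref{lem:consistency} together with Lemma~\ref{lem:approx} to substitute the unprimed linearity-test observables $X^A(a),Z^A(b)$, picking up the $O(\sqrt{\eps_{\stab}})$ term. Your extra care about the per-pair conditioning, the averaging of $\delta(\eps_\ac(a,b))$, and the side-switching bookkeeping is legitimate and in fact more explicit than the paper's two-line argument, but it does not constitute a different approach.
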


\begin{proof}
By definition of the soundness condition of an $(\wg,\delta)$ anticommutation game, the observables ${X'}^A(a,b)$ and ${Z'}^A(a,b)$ satisfy 
$$\E_{a, b : a\cdot b = 1} \DIS_\rho( {X'}^A(a,b){{Z'}^A}(a,b),
(-1)^{a \cdot b} {{Z'}^A}(a,b){X'}^A(a,b) )^2 = O(\delta(\eps_\ac)).$$
Using the triangle inequality, Lemma~\ref{lem:consistency} (note that under the uniform distribution $a\cdot b=1$ with probability at least $1/4$) and Lemma~\ref{lem:approx}, 
$$\E_{a, b : a\cdot b = 1} \DIS_\rho( {X}^B(a){{Z}^B}(b),
(-1)^{a \cdot b} {{Z}^B}(b){X}^B(a) )^2 = O(\delta(\eps_\ac)) + O(\sqrt{\eps_{\stab}}),$$
and analogue relations hold for observables on Alice, using again Lemma~\ref{lem:consistency}. 
\end{proof}

\subsubsection{Commutation}\label{sec:commutation}
The protocol does not involve a test for commutation, as the required property can be derived as a consequence of the existing tests. 

\begin{lemma}\label{lem:commutation}
  Suppose the strategy $(N,\ket{\psi})$ succeeds in the
  linearity and consistency tests with probability at least 
  $1-\eps_{\stab}$ and in the anticommutation test with probability at least $\wg - \eps_\ac$. Then
$$\E_{a,b:a\cdot b=0} \Drho(X^A(a){Z^A}(b) -  {Z^A}(b)X^A(a))^2 =O(\delta(\eps_\ac)^{1/2})+ O({\eps_{\stab}}^{1/4}).$$ 
\end{lemma}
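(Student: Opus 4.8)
The plan is to derive the commutation relation for $a\cdot b=0$ entirely from the relations already established: approximate consistency (\lemref{consistency}), approximate linearity (Property~2 of \thmref{gametwoplayer}, itself a consequence of \thmref{qblr_game}), and approximate anticommutation (\lemref{anti-commute}). The key observation is that commutation of $X^A(a)$ and $Z^A(b)$ when $a\cdot b=0$ is forced by \emph{double} anticommutation. Fix an auxiliary string $c$ with $a\cdot c=1$ and $b\cdot c=1$, which exists precisely when $a$ and $b$ are both nonzero (when $a=b$ the two constraints coincide). Splitting $Z^A(b)$ via linearity as $Z^A(b)\ket\psi\approx Z^A(b+c)Z^A(c)\ket\psi$ expresses $Z^A(b)$ as a product of two operators \emph{each} of which anticommutes with $X^A(a)$, since $a\cdot(b+c)=1$ and $a\cdot c=1$. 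In the errorless world this gives
\[ X(a)Z(b)\;=\;X(a)\,Z(b+c)\,Z(c)\;=\;-\,Z(b+c)\,X(a)\,Z(c)\;=\;Z(b+c)\,Z(c)\,X(a)\;=\;Z(b)\,X(a), \]
using linearity at the two outer steps and anticommutation once per factor in the middle, the two sign flips cancelling.

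To turn this into a statement about state-dependent distances I would run the computation as a chain of approximate identities $(\text{operator word})\ket\psi\approx(\text{operator word})\ket\psi$, beginning at $X^A(a)Z^A(b)\ket\psi$ and ending at $Z^A(b)X^A(a)\ket\psi$, using only two elementary moves: left-multiplication of an established approximate identity by a norm-one operator (\lemref{approx}), and exact reordering of operators belonging to different players. The one real subtlety is that \lemref{anti-commute} and Property~2 are stated as approximate identities \emph{acting on} $\ket\psi$, so a rewrite such as $X^A(a)Z^A(b+c)\mapsto-Z^A(b+c)X^A(a)$ may be applied only when the replaced sub-word is the one adjacent to $\ket\psi$. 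When the rewrite site is buried inside a longer word, I would first use the consistency relations (\lemref{consistency}) to transfer the operators sitting between the site and $\ket\psi$ onto the \emph{other} player's Hilbert space, where they commute through everything and may be pulled aside; then perform the rewrite on $\ket\psi$; then transfer those operators back, again by consistency. Each such detour costs one extra application of \lemref{consistency}, and the whole chain consists of only a constant number of steps (essentially: two linearity rewrites, two anticommutation rewrites, and a handful of consistency and reordering moves).

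The remaining work is bookkeeping over the auxiliary variable. I would write the target $\E_{a,b\,:\,a\cdot b=0}\Drho(X^A(a)Z^A(b),Z^A(b)X^A(a))^2$ as an expectation also over a uniformly random $c$ conditioned on $\{a\cdot c=1,\ b\cdot c=1\}$: for $a,b$ both nonzero this event has probability $\Theta(1)$, so the conditioning costs only a constant factor, and a short $\mathbb{F}_2$-linear-algebra count shows that the distributions induced on the index pairs that actually appear in the rewrites --- for instance $(a,b+c)$, which is $(1+2^{-\Omega(n)})$-close to uniform on $\{(u,v):u\cdot v=1\}$, and $(b+c,c)$, which is within a constant factor of uniform on a positive-density slice of $\{0,1\}^n\times\{0,1\}^n$ --- are within a constant factor of the distributions over which \lemref{anti-commute}, Property~2, and \lemref{consistency} hold. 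The degenerate pairs with $a=0$ or $b=0$, for which no valid $c$ exists, form a $2^{-\Omega(n)}$ fraction of $\{(a,b):a\cdot b=0\}$ and contribute only $O(2^{-n})$ to the average (bounding $\Drho(\cdot,\cdot)^2\le 4$ crudely), which is absorbed; alternatively one may simply restrict to $a,b\neq0$, which is all that is needed later. Finally, collecting the per-step errors --- $O(\eps_{\stab})$ for each consistency step, $O(\sqrt{\eps_{\stab}})$ for each linearity step, and $O(\delta(\eps_\ac))+O(\sqrt{\eps_{\stab}})$ for each anticommutation step, all as squared distances --- and applying the triangle inequality over the constant-length chain yields the claimed $O(\delta(\eps_\ac)^{1/2})+O(\eps_{\stab}^{1/4})$, with slack to spare.

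The step I expect to be the genuine obstacle is not any individual rewrite but the discipline of the whole chain: tracking, through the repeated ``push the obstructing operators onto the other player / rewrite / push back'' maneuvers, exactly which probability distribution each $\approx$ is taken over, and verifying that the reindexings $b\mapsto b+c$ and the like never move us outside the regime where the input relations hold with only constant-factor loss. Absorbing the $\Theta(1)$ overhead from conditioning on $a\cdot c=b\cdot c=1$ cleanly into the $O(\cdot)$ estimates is a related bit of care.
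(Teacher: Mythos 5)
Your proposal is correct and is essentially the paper's own proof: the paper likewise picks a random $c$ with $c\cdot a=1$, splits $Z^A(b)\ket\psi\approx Z^A(c)Z^A(c+b)\ket\psi$ by linearity, anticommutes $X^A(a)$ past each factor (the two signs cancelling since $a\cdot c=a\cdot(c+b)=1$ when $a\cdot b=0$), and uses the consistency relations to shuttle obstructing operators to the other player between rewrites, with the same constant-factor conditioning argument and the same final error accounting. The only cosmetic difference is your extra constraint $b\cdot c=1$, which is never used and can be dropped (only $a\cdot c=1$ is needed).
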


\begin{proof}
We combine the anticommutation, linearity, and consistency tests through the following sequence of approximate identities. Note the approximations are taken under the uniform distribution on $n$-bit strings $a,b$ such that $a\cdot b=0\mod 2$. Since this event occurs with probability at least $1/2$ for uniform $a,b$, the conditioning does not affect any of the approximations used by more than a multiplicative factor $2$. 

Start by applying approximate linearity
(guaranteed by \thmref{qblr_game}) of $Z$ to express $Z(b)$ as a product $Z(c)Z(c+b)$, for uniformly random $c$ such that $c\cdot a =1\mod 2$:
\begin{align*}
X^A(a) Z^A(b)\ket{\psi}   &\approx_{\eps_{\stab}^{1/4}}^{a,b,c| a \cdot b = 0, c\cdot a= 1} X^A(a)
                                             Z^A(c) Z^A(c + b)
                                             \ket{\psi} \\
\intertext{Next use approximate consistency (\lemref{consistency}), to exchange $Z^B(c+b)$ for $Z^A(c+b)$:}
                                        &\app{\sqrt{\eps_{\stab}}}{a,b,c| a \cdot b = 0, c\cdot a= 1} {Z^B}(c+b) X^A(a) Z^A(c) \ket{\psi} \\
  \intertext{Next, apply approximate anticommutation (\lemref{anti-commute}) to anti-commute $X^A(a)$ and $Z^A(c)$:}
                                        &\app{\delta^{1/2}+\eps_{\stab}^{1/4}}{a,b,c| a \cdot b = 0, c\cdot a= 1} -{Z^B}(c +
                                          b) Z^A(c) X^A(a) \ket{\psi}  \\
\intertext{Applying \lemref{consistency} again, transfer $Z^B(c + b)$ back to Alice:}
                          &\app{\sqrt{\eps_{\stab}}}{a,b,c | a \cdot b = 0, c\cdot a = 1} -Z^A(c) X^A(a) Z^A(c+b) \ket{\psi} \\
\intertext{Applying \lemref{anti-commute} anti-commutes $Z^A(c+b)$ and $X^A(a)$:}
                                        &\app{\delta^{1/2}+\eps_{\stab}^{1/4}}{a,b,c| a \cdot b = 0, c\cdot a= 1} Z^A(c) Z^A(c+b)
                                          X^A(a) \ket{\psi} \\
\intertext{Use \lemref{consistency} to transfer $X^A(a)$ to Bob:}
                                        &\app{\sqrt{\eps_{\stab}}}{a,b,c| a \cdot b = 0, c\cdot a= 1} X^B(a) Z^A(c) Z^A(c+b) \ket{\psi} \\
\intertext{Finally apply \thmref{qblr_game} to combine the $Z$ operators, and then \lemref{consistency} to move the $X$ operator back to Alice:}
                          &\app{{\eps_{\stab}}^{1/4}}{a,b,c|a \cdot b = 0, c \cdot a = 1} X^B(a) Z^A(b) \ket{\psi} \\
                          &\app{\sqrt{\eps_{\stab}}}{a,b,c|a \cdot b  = 0, c \cdot a = 1} Z^A(b) X^B(a) \ket{\psi}.
\end{align*}

\end{proof}

\subsection{Proof of Theorem~\ref{thm:isotwoplayer}}
\label{sec:istwoplayer}

We give the proof of~\thmref{isotwoplayer}.

\begin{proof}[Proof of~\thmref{isotwoplayer}]
Adjoin two $n$-qubit registers $A'$ and $A''$ to Alice's system, and
initialize them in the state $\ket{\mathrm{EPR}}_{A'A''}^{\otimes n}$. Define
new observables $X'(a) := X^A(a) \otimes \sigx(a)_{A'} \ot \Id_{A''}$
and $Z'(b) := Z^A(b) \otimes\sigz(b)_{A'} \ot \Id_{A''}$. Further
define observables 
\[ C(a,b) := \frac{X'(a) Z'(b) + Z'(b) X'(a)}{ | X'(a) Z'(b) + Z'(b)
  X'(a) |}, \]
where the notation $| \cdot |$ denotes the matrix absolute
value and we use the convention $0/0=1$. We use the assumptions made in the theorem
(i.e. properties 1, 2 and 3 in Theorem~\ref{thm:gametwoplayer}) to show that $C(a,b)$ satisfies approximate linearity over
$\mathbb{Z}_2^{2n}$, i.e. that $C(a,b) C(a', b')\ket{\Psi} \approx^{a,b,a',b'}
C(a+a', b+b')\ket{\Psi}$.
First, by property 3 (approximate anticommutation),
$X^A(a) Z^A(b) \ket{\Psi} \app{\eps^{1/4}}{a,b} (-1)^{a \cdot b} Z^A(a)
X^A(a)\ket{\Psi}$, and thus $X'(a) Z'(b) \ket{\Psi}\app{\eps^{1/4}}{a,b} Z'(b)X'(a)\ket{\Psi}$. Hence, by \lemref{jointobs} it follows that $C(a,b)\ket{\Psi}
\app{\eps^{1/4}}{a,b} X'(a) Z'(b)\ket{\Psi} $.
Using this relation, we consider the product of two $C$ operators. 
\begin{align*}
   C(a, b) C(a', b') \ket{\Psi} &\app{\eps^{1/8}}{a,b} C(a,b) X^A(a') Z^A(b') \otimes \sigx(a')
                                  \sigz(b') \ket{\Psi}. \\
  \intertext{By property 1 (approximate consistency), we can switch the $X^A$
  and $Z^A$ operators to Bob, and switch the $\sigx, \sigz$ operators to the
  other half of the ancilla. Then, we relate $C(a,b)$ to $X^A(a)Z^A(b)$.}
  &\app{\eps^{1/4}}{a,b,a',b'} Z^B(b') X^B(a') C(a,b) \otimes \sigz(b')_{A''}
    \sigx(a')_{A''} \ket{\Psi} \\
  &\app{\eps^{1/8}}{a,b,a',b'} Z^B(b')X^B(a') X^A(a)Z^A(b) \otimes \sigz(b')_{A''}
    \sigx(a')_{A''} \sigx(a)\sigz(b) \ket{\Psi}. \\
  \intertext{Switching $Z^BX^B$ back to Alice, and $\sigz \sigx$ back to the
  other half of the ancilla,}
  &\app{\eps^{1/4}}{a,b,a',b'} X^A(a) Z^A(b) X^A(a')Z^A(b') \otimes \sigx(a)
    \sigz(b) \sigx(a')\sigz(b') \ket{\Psi}. \\
  \intertext{By the properties of the exact Pauli operators,}
  &=^{a,b,a',b'} (-1)^{a'\cdot b} X^A(a) Z^A(b) X^A(a') Z^A(b') \otimes \sigx(a + a')\sigz(b + b')
    \ket{\Psi}.\\
  \intertext{Applying property 3 (approximate anticommutation),}
  &\app{\eps^{1/8}}{a,b,a',b'} (-1)^{a' \cdot ( b + b')} X^A(a) Z^A(b) Z^A(b') X^A(a') \otimes \sigx(a
    + a') \sigz(b + b') \ket{\Psi}. \\
  \intertext{Applying property 1 (approximate consistency) to
  $X^Z(a')$, and then property 2 (approximate linearity) to combine
  $Z^A(b)$ with $Z^A(b')$, we get}
  &\app{\eps^{1/4}}{a,b,a',b'} (-1)^{a' \cdot (b + b')} X^B(a') X^A(a) Z^A(b + b') \otimes
    \sigx(a + a') \sigz(b + b') \ket{\Psi}. \\
  \intertext{Applying property 3 (approximate anticomutation) to
  $Z^A(b+b')$  and $X^A(a)$,}
  &\app{\eps^{1/8}}{a,b,a',b'} (-1)^{(a+a') \cdot (b+b')} X^B(a') Z^A(b+b')
    X^A(a) \otimes \sigx(a+a') \sigz(b + b') \ket{\Psi}.
    \intertext{Applying property 1 (approximate consistency) to move
    $X^B(a')$ back to Alice, and then applying property 2 (approximate
    linearity) to combine $X^A(a')$ with $X^A(a)$,}
  &\app{\eps^{1/4}}{a,b,a',b'} (-1)^{(a + a') \cdot (b + b')} Z^A(b + b') X^A(a + a') \otimes
    \sigx(a + a') \sigz(b + b') \ket{\Psi}. \\
  \intertext{Finally, applying property 3 (approximate anticommutation) to
  interchange $X^A(a+a')$ and $Z^A(b+b')$,}
  &\app{\eps^{1/8}}{a,b,a',b'} X^A(a + a') Z^A(b + b') \otimes \sigx(a + a')
    \sigz(b + b') \ket{\Psi} \\
  &\app{\eps^{1/8}}{a,b,a',b'} C(a+a', b + b') \ket{\Psi}.
\end{align*}
Applying \thmref{qblr} (over $\{0,1\}^{2n}$), we conclude that there
exist observables $D(a, b)$ acting on an extension of Alice's system
by an ancilla state, satisfying $D(a, b) D(a', b') = D(a+a',
b+b')$ and $\E_{a, b} \Drho(D(a, b), C(a, b))^2 = O(\eps^{1/8})$. Set
\[ P^A(a, b) := D(a, b) \ot \sigx(a)_{A''} \sigz(b)_{A''}. \]
We claim that $P^A(a,b)$ satisfies the desired properties.
\begin{enumerate}
\item[(b)] Pauli braiding: This follows from linearity of $D(a,b)$:
  \begin{align*}
    P^A(a,b) P^A(a',b') &= D(a,b)D(a',b') \ot \sigx(a)_{A''}
                            \sigz(b)_{A''}
                            \sigx(a')_{A''}\sigz(b')_{A''} \\
                          &= D(a+a', b+b') \ot (-1)^{a' \cdot b}
                            \sigx(a + a')_{A''} \sigz(b + b')_{A''} \\
                          &= (-1)^{a' \cdot b} P^A(a+a', b+b') .
  \end{align*}
\item[(a)] Approximate consistency: We establish this in two steps. First,
  note that $D(a, b)$ is approximately consistent with $C(a, b)$, so 
\begin{align*}
   P^A_{a,b} \ket{\Psi} &=^{a,b} D(a,b) \ot \sigx(a)_{A''}
  \sigz(b)_{A''} \ket{\Psi}\\
  &\app{\eps^{1/8}}{a,b} C(a, b) \ot \sigx(a)_{A''}
  \sigz(b)_{A''} \ket{\Psi}\\
  &\app{\eps^{1/4}}{a,b} X^A(a)Z^A(b) \ot \sigx(a)_{A'} \sigz(b)_{A'} \ot \sigx(a)_{A''}
    \sigz(b)_{A''}\ket{\Psi} \\
  &=^{a,b} X^A(a)Z^A(b) \ot \Id_{A' A''} \ket{\Psi},
\end{align*}
where the last line follows since both $\sigma_X\otimes \sigma_X$ and $\sigma_Z\otimes \sigma_Z$ stabilize $\ket{\EPR}$.

Finally, to establish consistency for the operators $P^A(a,0)$ where one
coordinate is fixed to $0$, we exploit the exact Pauli braiding relation:
\begin{align*}
  P^A(a,0) \ket{\Psi} &=^{a,c,d} (-1)^{d \cdot c} P^A(a+c, d) P^A(c,
                        d) \ket{\Psi} \\
  \intertext{By approximate consistency of $P^A$,}
  &\app{\eps^{1/8}}{a,c,d} (-1)^{d \cdot c}  P^A(a + c, d) X^A(c)
    Z^A(d) \ket{\Psi} \\
  \intertext{Applying property 1 (approximate consistency) twice,
  first to $Z^A(d)$ and then to $X^A(c)$, we shift them to Bob's space:}
  &\app{\sqrt{\eps}}{a,c,d} (-1)^{d \cdot c} Z^B(d) X^B(c) P^A(a+c, d)
    \ket{\Psi} \\
  \intertext{Now we apply approximate consistency of $P^A$ again:}
  &\app{\eps^{1/8}}{a,c,d}  Z^B(d) X^B(c) X^A(a+c) Z^A(d) \ket{\Psi} \\
  \intertext{Applying property 3 (approximate anticommutation) to
  $X^A(a+c)$ and $Z^A(d)$, and then property 1 (approximate
  consistency) to $X^B(c)$, we get}
  &\app{\eps^{1/8}}{a,c,d} (-1)^{a \cdot d} Z^B(d) Z^A(d) X^A(a+c)
    X^A(c) \ket{\Psi} \\
  \intertext{We use property 2 (approximate linearity) to combine
  $X^A(a+c)$ and $X^A(c)$:}
  &\app{\eps^{1/4}}{a,c,d}  (-1)^{a \cdot d} Z^B(d) Z^A(d) X^A(a) \ket{\Psi}
  \\
  \intertext{Now, applying property 3 (approximate anticommutation),
  we get,}
  &\app{\eps^{1/8}}{a,c,d} Z^B(d) X^A(a) Z^A(d)  \ket{\Psi}.
  \intertext{Finally, use property 1 (approximate consistency), and
    the fact that $Z^A(d)$ is an observable to get}
  &\app{\sqrt{\eps}}{a,c,d} X^A(a) Z^A(d) Z^A(d) \ket{\Psi} \\
  &=^{a,c,d} X^A(a) \ket{\Psi}.
\end{align*}
\end{enumerate}

\end{proof}

%----------------------%
\section{The Hamiltonian Self-Test}
\label{sec:manyprovers}
%----------------------%

In this section, we build on the Pauli braiding test to construct a test that distinguishes between the cases when a Hamiltonian given as input has ground state energy below, or higher than, pre-specified thresholds (i.e. in the former case the players will have a strategy with high success probability in the protocol, whereas in the latter any strategy will have low success probability).
Due to the nature of our tests we restrict attention to $n$-qubit Hamiltonians specified by a linear combination of $m$ terms, each of which is a tensor product of single-qubit $I, \sigma_X$ or $\sigma_Z$ Pauli operators. 

Recall the Pauli braiding test analyzed in the previous section. As we saw (Corollary~\ref{cor:epr-test}) this test can be used as a robust self-test for an $n$-qubit maximally entangled state. In order to test non-maximally entangled states, we
proceed as in~\cite{FV14,ji2015classical} by requiring the (honest) players to share a qubit-by-qubit encoding of the ground state of the Hamiltonian, where each qubit is encoded using a simple $r$-qubit CSS code. As elucidated
in~\cite{ji2015classical}, any code state, thought of as a bipartite entangled state
across any one of its qubits and the others, is maximally entangled. This allows us to lift the two-player tests which constitute the Pauli braiding test to $r$-player tests, where each player holds one qubit (``share'') of the encoding of each qubit of the ground state, and one of
the players (to be called the \emph{special player}) plays the role of Alice while the remaining $(r - 1)$ players (to be called the \emph{composite player})
play the role of Bob. 

The essential property of the consituent tests of the Pauli braiding test that permit this lifting is that all of the measurements performed by Bob in the honest
strategy can be implemented by measuring the tensor
product of Pauli operators $\sigma_X$, $\sigma_Z$, and $\Id$ on a state of $n$
EPR pairs. (For the anticommutation test, this is ensured by the completeness
condition in Definition~\ref{def:ac-game}, and for the other tests, it can be
seen to hold for both Alice and Bob's measurements). These operators can be implemented
transversally in any CSS code, and moreover in a way such that marginal
distribution of queries received by the special player and each of the composite
players is identical, as we show below.

\subsection{The protocol} 

We describe the protocol in detail. The input is an $n$-qubit
local Hamiltonian $H$ that can be expressed as
\begin{equation}\label{eq:h-form}
H = \frac{1}{m}\sum_{\ell=1}^m H_\ell, \qquad H_\ell = \alpha_\ell \,\sigma_X(a_\ell)\sigma_Z(b_\ell),
\end{equation}
for $\alpha_\ell \in [-1,1]$ and  $a_\ell,b_\ell\in\{0,1\}^n$ such that $a_\ell\wedge b_\ell=0^n$ for all $\ell\in\{1,\ldots,m\}$. 
 The verifier interacts with $r$
players, where $r$ is the number of qubits of codewords in the CSS code chosen for the protocol (such as Steane's $7$-qubit code, as described in Section~\ref{sec:stabilizer},  in which case $r=7$).

\begin{figure}[H]
\rule[1ex]{16.5cm}{0.5pt}\\
Given a local Hamiltonian $H = \sum_{\ell=1}^m \alpha_\ell H_\ell$, where $\alpha_\ell \in [-1,1]$ and each $H_\ell = \sigma_X(a_\ell)\sigma_{Z}(b_\ell)$. Let $p\in(0,1)$ be a parameter of the protocol.

The verifier performs one of the following three tests at random, the first with probability $(1-p)$ and the second and third with probability $p/2$ each. 
\begin{enumerate}
\item (Pauli braiding test) Choose one of the $r$ players uniformly at random to be the
  {special player}. The other players form the {composite
     player}. Simulate the Pauli braiding test with these two
   players, where the role of Alice is assigned to the special player and the role of Bob to the composite player.  
\item (Energy test) Choose $\ell\in\{1,\ldots,m\}$ uniformly at random. Define an operator $Q_\ell$ acting on $rn$ qubits by replacing each $\sigma_X$ in $H_\ell$ with $X_{logical}$ on the $r$-qubit code
  state, and $\sigma_Z$ by $Z_{logical}$. Send each player a query $(XZ,a,b)$ representing the associated share of $Q_\ell$. The players should each return two values in $\{-1,1\}$.\\
	The verifier takes the product of all values received. If its sign disagrees with that of $\alpha_\ell$, he accepts. If they agree, he rejects with probability $|\alpha_\ell|$ and accepts otherwise. 
\item (Energy consistency test) Choose one of the $r$ players uniformly at random to be the
  {special player}. The other players form the {composite
     player}. Let $W\in\{X,Z\}$, each chosen with probability
   $1/2$. Also choose $a, b$ according to the same distribution as in the
   energy test. The verifier performs one of the following tests, each
   chosen with the indicated probability.
\begin{itemize}
\item With probability $1/2$, send the special player $(XZ,a,b)$, and the composite player $\overline{(W, c, c + a)}$ if $W=X$ and $\overline{(W, c, c + b)}$ if $W=Z$, where $c \in \{0, 1\}^n$ is chosen uniformly at random. Accept if the special player's answer agrees with the product of the composite player's two answers.
\item With probability $1/4$, send the special player $(W, c, d)$, and the composite player $\overline{(W, c, c + a)}$, where $c, d \in \{0, 1\}^n$ are chosen uniformly at random. Accept if the special player and composite player agree on the answer associated with $c$.
\item With probability $1/4$, send the special player $(W, c+a,d)$, and the composite player $\overline{(W, c, c + a)}$, where $c, d \in \{0, 1\}^n$ are chosen uniformly at random. Accept if the special player and composite player agree on the answer associated with $c+a$.
\end{itemize}
\end{enumerate}
\rule[1ex]{16.5cm}{0.5pt}\\
\caption{The Hamiltonian self-test}
\label{fig:protocol}
\end{figure}

Although the protocol is to be performed with $r$ ``physical''
players, part of the protocol consists in applying the Pauli braiding test, which is formulated as a two-player test in the previous section. To translate between the $r$ players and the two players in the Pauli braiding test we introduce two ``logical'' players. A query to the logical players (as specified in the Pauli braiding test) is mapped to a query to the $r$ physical players as follows. One of the physical players is chosen at random to play the role of the first logical player (Alice), called the
\emph{special player}. The remaining $(r-1)$ physical players together play the role of the second logical player (Bob), called the \emph{composite player}.\footnote{The physical players remain isolated throughout the protocol and are never allowed to communicate; it is only for purposes of analysis that we group $(r-1)$ physical players into a single logical player. In particular the physical players are never told which logical player they are associated with, and the distribution of queries to any physical player is the same whether it plays the role of the special or composite player.} For a given query $Q$ to the special player of a type among those specified in the Pauli braiding test we define a \emph{complementary query} $\overline{Q}$ for the composite player as per the following lemma.     

\begin{lemma}
  For any $X$-query or $Z$-query, there exists a complementary query $\overline{Q}$ such that
  \begin{enumerate}
  \item The query associated to each physical player forming the composite player in $\overline{Q}$ is of the same type as $Q$. In particular the distribution on query strings is as specified by the query type. 
  \item If all players apply the honest strategy and provide answers $\alpha,\beta$ to $Q$ and  $\overline{\alpha},\overline{\beta}$ to $\overline{Q}$ respectively, where $\overline{\alpha}$ and $\overline{\beta}$ are each obtained as the product of the answer to the corresponding query coming from each of the physical players making up the composite player, it holds that $\alpha\overline{\alpha} = \beta\overline{\beta}=+1$.
  \end{enumerate}
  \label{lem:qbar}
\end{lemma}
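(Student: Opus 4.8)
The plan is to exploit the stabilizer structure of the $r$-qubit CSS code underlying the protocol (e.g.\ Steane's code of Section~\ref{sec:stabilizer}). Recall that in the honest strategy the $r$ physical players share the qubit-by-qubit CSS encoding of an $n$-qubit state, physical player $j$ holding the $j$-th share of each of the $n$ code blocks --- hence $n$ physical qubits in all --- and that on a $W$-query $(W,c,d)$ with $W\in\{X,Z\}$ an honest player measures the commuting observables $\sigma_W(c)$ and $\sigma_W(d)$ on its $n$ qubits and reports the two outcomes (cf.\ Definition~\ref{def:honest-strategy}). Fix $W$ and write $\sigma_W(x)^{(j)}$ for $\sigma_W(x)$ applied to player $j$'s $n$ qubits; let $\mathcal{C}_W\subseteq\{0,1\}^r$ be the linear space of $\sigma_W$-support patterns of the $W$-type stabilizers of a single code block. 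By the CSS symmetry property of Section~\ref{sec:stabilizer}, for each coordinate $j^*\in[r]$ some element of $\mathcal{C}_W$ has a $1$ at position $j^*$, so the subspace $\mathcal{C}_W^{j^*}:=\{t\in\mathcal{C}_W:\ t_{j^*}=0\}$ has codimension $1$ in $\mathcal{C}_W$.

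Given a $W$-query $Q=(W,a,b)$ addressed to the special player $j^*$, I would take the complementary query $\overline{Q}$ to be the following \emph{randomized} routing: independently for each block index $i\in[n]$, sample $t^{(i)}$ uniformly from the coset $\{t\in\mathcal{C}_W:\ t_{j^*}=a_i\}$ and $u^{(i)}$ uniformly from $\{u\in\mathcal{C}_W:\ u_{j^*}=b_i\}$; then send each composite player $j\neq j^*$ the $W$-query $(W,\hat a_j,\hat b_j)$ with $(\hat a_j)_i:=t^{(i)}_j$ and $(\hat b_j)_i:=u^{(i)}_j$. Randomization is essential: a deterministic routing would have to send the string $a$ only to the composite players in the support of one fixed stabilizer, and the all-zero string to the rest, which would violate Property~1.

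Property~1 is checked block by block. Each composite player receives a $W$-query with the same basis $W$, so it suffices to show that $(\hat a_j,\hat b_j)$ is uniform on $\{0,1\}^n\times\{0,1\}^n$ for each fixed $j\neq j^*$, which reduces to understanding the $j$-th coordinate of a uniformly random element of a coset of $\mathcal{C}_W^{j^*}$. If the coordinate-$j$ functional is non-trivial on $\mathcal{C}_W^{j^*}$ then $t^{(i)}_j$ is a uniform bit independent of $a_i$; if it is trivial then, being a linear functional on $\mathcal{C}_W$ that vanishes on a codimension-$1$ subspace, it is either identically zero --- impossible, since the CSS symmetry property puts some $W$-stabilizer on position $j$ --- or equal to $t\mapsto t_{j^*}$, so $t^{(i)}_j=a_i$. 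Either way $(\hat a_j)_i$ is uniform, and since the $t^{(i)}$ are independent across $i$ and independent of the $u^{(i)}$, the pair $(\hat a_j,\hat b_j)$ is uniform, matching the $W$-query type. (Incidentally this also shows each physical player sees the same query distribution whether it is the special or a composite player.)

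Property~2 is immediate from linearity of the code. For each block $i$ the vector $t^{(i)}$ lies in $\mathcal{C}_W$, so the $W$-type Pauli that is $\sigma_W$ on exactly the shares $j$ of block $i$ with $t^{(i)}_j=1$ stabilizes the shared code state; taking the product over all $i$ shows that the single operator $\sigma_W(a)^{(j^*)}\otimes\bigotimes_{j\neq j^*}\sigma_W(\hat a_j)^{(j)}$ --- whose component on the special player is exactly $\sigma_W(a)$, because $t^{(i)}_{j^*}=a_i$ --- stabilizes the shared code state. An honest play measures precisely these $r$ pairwise-commuting single-player observables (the $a$-component of each player's honest measurement), and since their product acts as $+\Id$ the $r$ outcomes multiply to $+1$; that is, $\alpha\,\overline{\alpha}=+1$. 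The same argument with $u^{(i)}$ in place of $t^{(i)}$ gives $\beta\,\overline{\beta}=+1$. I expect the only point needing genuine care to be the coordinate-functional case analysis used for Property~1; once the routing is set up so that the global $W$-Pauli is always a stabilizer, Property~2 follows for free.
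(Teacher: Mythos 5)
Your construction is correct, but it is genuinely different from (and more elaborate than) the one in the paper. The paper fixes a \emph{single} $W$-type stabilizer $S_W$ whose support contains the special player's position $j^*$, sends the unchanged query $(W,a,b)$ to every composite player in the support of $S_W$, sends a \emph{fresh uniformly random} $W$-query to every composite player outside the support, and defines $\overline{\alpha},\overline{\beta}$ as the product of answers over the support players only (the verifier knows which players received dummy queries and simply discards their answers). Property~2 is then the single-stabilizer statement, and Property~1 is immediate: each composite player sees either the original uniformly random query or a fresh uniformly random one. This shows that your aside --- that a deterministic choice of stabilizer support ``would have to send the all-zero string to the rest, which would violate Property~1'' --- is mistaken: the dummy players can be fed independent uniform queries whose answers are ignored, which is exactly what the paper does. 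Your per-block randomization over all support patterns in the coset $\{t\in\mathcal{C}_W:\,t_{j^*}=a_i\}$ buys something the paper's routing does not: the product can be taken over \emph{all} composite players (matching the lemma's wording literally), and the routing is arguably more symmetric; the price is the coset/linear-functional case analysis needed to verify uniformity of each $(\hat a_j,\hat b_j)$, which you carry out correctly (codimension~$1$ of $\mathcal{C}_W^{j^*}$, exclusion of the zero functional via the CSS symmetry property, and the degenerate case $t_j\equiv t_{j^*}$ where the forwarded coordinate equals $a_i$ and inherits uniformity from $a$). Your Property~2 argument --- the global operator $\sigma_W(a)^{(j^*)}\otimes\bigotimes_{j\neq j^*}\sigma_W(\hat a_j)^{(j)}$ is a product of block stabilizers and hence acts as $+\Id$, so the commuting single-player outcomes multiply to $+1$ --- is the same stabilizer mechanism as the paper's, just applied to a product of stabilizers rather than one.
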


\begin{proof}
  Both items follow from the properties of CSS codes described in Section~\ref{sec:stabilizer}. We give the proof for an
  $X$-query $(X, a, b)$. Let the index
  of the special player be $i\in\{1,\ldots,r\}$, and let $S_X$ be a stabilizer
  of the code, such that $S_X$ consists only of $I$ and $\sigma_X$ Paulis and has a $\sigma_X$ in
  position $i$. For each physical
  player $j \neq i$ associated with the composite player, if the operator in
  position $j$ of $S_X$ is $\sigma_X$,  player $j$ is sent the query $(X,
  a, b)$. Otherwise, player $j$ is sent a uniformly random $X$-query
  $(X, c, d)$.
	
		Composite answers $\overline{\alpha},\overline{\beta}$ to the complementary
    query are determined by taking the product of the answers from all
    players who did not receive random strings; using that $S_X$ is a stabilizer of the code  ensures that item 2 is satisfied. 
	
	In the composite query, for a given choice of $S_X$ each player
  receives a query that is either identical to the original query, or
  is a uniformly random string; since the original query is chosen at random this is also the case for each of the physical players associated with the composite player. This proves item 1.
\end{proof}

We can then define associated observables for the players, $\hat{X}(a)$ and $\hat{Z}(b)$ for the special player and $\overline{X}(a)$ and $\overline{Z}(b)$ for the composite player, exactly as in~\eqref{eq:hatx-def}. 

\begin{definition}\label{def:ol-obs}
Let $\{\hat{M}_{a,b}^{\alpha\beta}\}$ (resp. $\{\overline{M}_{a,b}^{\alpha\beta}\}$) be the POVM implemented by the special player (resp. composite player) when asked a query $(W,a,b)$ (resp $(\overline{W},a,b)$; see Lemma~\ref{lem:qbar}), for $W=X$ or $Z$. Define observables
\begin{equation*}
\hat{W}(a) = \frac{1}{2^n}\sum_{b\in\{0,1\}^n} \sum_{\beta\in \{\pm1\}} \big(\hat{M}_{a,b}^{1\beta}-\hat{M}_{a,b}^{-1\beta}\big), \qquad \overline{W}(a) = \frac{1}{2^n}\sum_{b\in\{0,1\}^n} \sum_{\beta\in \{\pm1\}} \big(\overline{M}_{a,b}^{1\beta}-\overline{M}_{a,b}^{-1\beta}\big).
\end{equation*}
\end{definition}

Aside from the Pauli braiding test, the protocol considers two other
tests called the \emph{energy test} and the \emph{energy consistency
  test}. In the energy test, the verifier asks the players to measure
a randomly chosen term in the Hamiltonian. The consistency test is
needed to relate the operators applied in the energy test
to those applied in the Pauli braiding test.
The energy test uses an additional query type, which differs from the types of queries used in the Pauli braiding test:
\begin{enumerate}
\item[3.] An \emph{$XZ$-query} is represented by $(XZ, a, b)$ where $a,b\in\{0,1\}^n$ are  such that $a\wedge b = 0^n$. Note that here, in contrast to $X-$ or $Z-$queries, the strings $a$ and $b$ are ordered. The distribution on $a$ and $b$ depends on the Hamiltonian. The expected answer is two bits $\alpha,\beta\in\{-1,1\}$.
\end{enumerate}

The honest strategy for the players in the Hamiltonian self-test (Figure~\ref{fig:protocol}) consists of applying the honest strategy defined for the Pauli braiding test (Definition~\ref{def:honest-strategy}) whenever the query is of $X$, $Z$, or $G$ type, and the following strategy when it is of $XZ$ type:

\begin{definition}\label{def:honest-h}
In the honest strategy, a player answers an $XZ$-query $(XZ,a,b)$ by measuring the compatible observables $\sigma_X(a)$ and $\sigma_Z(b)$ and returning both outcomes.
\end{definition}

\subsection{Statement of results}

Our main result regarding the Hamiltonian self-test is given in the following theorem, which states the completeness and soundness guarantees of the protocol described in Figure~\ref{fig:protocol}.

\begin{theorem}\label{thm:main}
There exists a constant $0<d<1$ such that the following holds. 
  Let $H$ be a (not necessarily local) Hamiltonian with $m$ terms over $n$ qubits of the form~\eqref{eq:h-form}, and
  $\lambda_{\min}(H)$ the smallest eigenvalue of $H$. Then for every $\eta>0$ there is a choice $p=\Theta(\eta^{1-d})$ for the probability of performing the energy test  in Protocol~\ref{fig:protocol} such that the maximum probability $\omega^*(H)$ with which any $r$-player strategy succeeds in the protocol satisfies
  \[ 1 - \frac{p}{8}\Big(\lambda_{min}(H) + \frac{2}{m}\sum_{\ell=1}^m |\alpha_l|\Big) \,\leq\, \omega^*(H) \,\leq\, 1 - \frac{p}{8}\Big(\lambda_{min}(H) + \frac{2}{m}\sum_{\ell=1}^m |\alpha_l|\Big) + \eta. \]
\end{theorem}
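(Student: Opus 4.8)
The statement has two halves — a lower bound on $\omega^*(H)$ (completeness) and an upper bound (soundness) — which I would establish by quite different means.

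\medskip
\noindent\textbf{Completeness.} The plan is to exhibit an explicit strategy. Instantiate the anticommutation game $G_\ac$ inside the Pauli braiding test with the Magic Square game, so that $\omega^*_\enc=1$, and have the $r$ players share the qubit-by-qubit CSS encoding of an eigenvector $\ket{\psi_0}$ of $H$ for the eigenvalue $\lambda_{\min}(H)$, each player holding one share of each logical qubit and playing the honest strategies of Definition~\ref{def:honest-strategy} and Definition~\ref{def:honest-h}. Using the transversality of CSS codes recorded in Section~\ref{sec:stabilizer} together with Lemma~\ref{lem:qbar}, one checks this strategy passes the Pauli braiding test and the energy consistency test with probability $1$ (Lemma~\ref{lem:consistency} and the completeness clause of Definition~\ref{def:ac-game} supply the verifications). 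For the energy test, the transversal logical operator $Q_\ell$ acts on the encoded state exactly as $\sigma_X(a_\ell)\sigma_Z(b_\ell)$ on $\ket{\psi_0}$, so the product of all returned values is the $\pm1$ outcome of that Pauli measurement; a direct evaluation of the verifier's acceptance rule then shows the acceptance probability to be an affine function of $\frac1m\sum_\ell|\alpha_\ell|$ and $\bra{\psi_0}H\ket{\psi_0}=\lambda_{\min}(H)$, and combining the three branches with weights $1-p,\,p/2,\,p/2$ gives the lower bound.

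\medskip
\noindent\textbf{Soundness.} Fix an arbitrary $r$-player strategy with value $\omega^*(H)=1-\zeta$. Since the Pauli braiding and energy consistency tests are run with probabilities $1-p$ and $p/2$, the strategy succeeds in them with probability at least $1-\zeta/(1-p)$ and $1-2\zeta/p$. I would then (i) collapse the $r$ physical players into the special/composite logical pair as in the protocol; by Lemma~\ref{lem:qbar} the induced two-player strategy realizes the marginals required by the two-player Pauli braiding test — a point needing care, since the analysis of Section~\ref{sec:twoprover} assumed symmetric players while here only the physical players are symmetric, so one must verify the composite observables $\overline X(a),\overline Z(b)$ of Definition~\ref{def:ol-obs} are well defined and that Theorems~\ref{thm:gametwoplayer} and~\ref{thm:isotwoplayer} still apply. (ii) Apply Theorems~\ref{thm:gametwoplayer} and~\ref{thm:isotwoplayer} to obtain a local isometry under which $\hat X(a),\hat Z(b)$ and their composite counterparts are within $\poly(\zeta)$ of genuine Pauli operators $\sigma_X(a),\sigma_Z(b)$ acting on a state within the same distance of $\ket{\EPR}^{\otimes n}$ tensored with auxiliary registers. (iii) Chain the three sub-tests of the energy consistency test with the approximate-linearity clause of Theorem~\ref{thm:gametwoplayer} to show the special player's $XZ$-query observables act on the shared state like its $X$- and $Z$-query observables, hence — via (ii) — like true Paulis, up to error $\poly(\zeta/p)$. (iv) Run (iii) for each of the $r$ choices of special player and use code transversality to glue the local isometries, extracting an $n$-qubit state $\ket\phi$ on which the energy-test product observable acts as $\sigma_X(a_\ell)\sigma_Z(b_\ell)$ up to $\poly(\zeta/p)$; consequently $\omega_{\mathrm{ene}}\le 1-\Theta(1)\big(\tfrac1m\sum_\ell|\alpha_\ell|+\bra\phi H\ket\phi\big)+\poly(\zeta/p)\le 1-\Theta(1)\big(\tfrac1m\sum_\ell|\alpha_\ell|+\lambda_{\min}(H)\big)+\poly(\zeta/p)$, using $\bra\phi H\ket\phi\ge\lambda_{\min}(H)$. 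Substituting into $\omega^*(H)=(1-p)\omega_{\mathrm{PBT}}+\tfrac p2\omega_{\mathrm{ene}}+\tfrac p2\omega_{\mathrm{con}}$ and bounding $\omega_{\mathrm{PBT}},\omega_{\mathrm{con}}\le 1$ gives $\zeta\ge\Theta(p)\big(\lambda_{\min}(H)+\tfrac2m\sum_\ell|\alpha_\ell|\big)-\Theta(p)\,\poly(\zeta/p)$; taking $p=\Theta(\eta^{1-d})$, with $d$ dictated by the exponent in the Pauli braiding test soundness, forces the error term below $\eta$ and yields the upper bound. When $\zeta$ is so large that the Pauli braiding soundness is vacuous one instead argues directly that $\omega^*(H)\le 1-\Theta(1)$, which lies below $1-\Theta(p)+\eta$ once $p$ is small.

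\medskip
\noindent\textbf{Main obstacle.} I expect the crux to be steps (iii) and (iv): transporting the Pauli-braiding structure through the energy consistency test onto the $XZ$-query observables without losing more than a polynomial factor, and the gluing step that converts the per-share ``Pauli-like'' guarantees into a single encoded $n$-qubit state whose energy can be bounded below by $\lambda_{\min}(H)$. The careful accounting of which power of $\zeta/p$ and $\zeta/(1-p)$ appears at each stage — hence the value of $d$ and the constant hidden in $p=\Theta(\eta^{1-d})$ — is the most delicate bookkeeping; the remainder is assembling the subtests in the correct order.
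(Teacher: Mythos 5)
Your completeness argument is the paper's: the honest players share the CSS encoding of a ground state, pass the Pauli braiding and energy consistency tests with probability $1$, and the energy-test acceptance probability is evaluated as in Lemma~\ref{lem:energy_test}; combining the three branches with weights $1-p$, $p/2$, $p/2$ gives the lower bound. Your soundness skeleton --- extract exact Pauli operators via Theorems~\ref{thm:gametwoplayer} and~\ref{thm:isotwoplayer}, transport them to the $XZ$-query observables through the energy consistency test (this is Lemma~\ref{lem:energy_consistency}), then bound the energy test by Lemma~\ref{lem:energy_test} applied to the extracted state --- is also the paper's, including the care you flag about reducing $r$ physical players to the special/composite pair via Lemma~\ref{lem:qbar}.

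The gap is in the final accounting. You bound $\omega_{\mathrm{PBT}}\le 1$ and $\omega_{\mathrm{con}}\le 1$ and arrive at $\zeta \ge \Theta(p)\,A - \Theta(p)\poly(\zeta/p)$ with $A=\lambda_{\min}(H)+\frac{2}{m}\sum_\ell|\alpha_\ell|$, asserting that $p=\Theta(\eta^{1-d})$ ``forces the error term below $\eta$.'' It does not: in exactly the regime where the theorem's bound is tight one has $\zeta=\Theta(pA)$, hence $\zeta/p=\Theta(1)$, hence $\poly(\zeta/p)=\Theta(1)$, and the error term is $\Theta(p)=\Theta(\eta^{1-d})\gg\eta$, so your inequality is vacuous. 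Your fallback case (``$\zeta$ so large that the PBT soundness is vacuous'') addresses the wrong regime; the failure is at $\zeta\asymp p$, not $\zeta\asymp 1$. The paper closes this precisely by \emph{not} discarding the Pauli-braiding deficit: writing $\omega_{\rm cheat}=(1-p)(1-\eps)+p\,\omega_{\energy}$ with $\eps$ the PBT failure probability, and $\omega_{\energy}\le \omega^*_{\energy}(H)+O(\eps^{d})$, one gets $\omega_{\rm cheat}\le \omega_{\rm honest}(H)-(1-p)\eps+O(p\,\eps^{d})$; the penalty $(1-p)\eps$, which carries the large weight $1-p$, absorbs the error $O(p\eps^{d})$ uniformly over $\eps\in[0,1]$ exactly when $p=O(\eta^{1-d})$ --- maximizing $-(1-p)\eps+Cp\eps^{d}$ over $\eps$ is where the exponent $1-d$ comes from. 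Your write-up never produces this trade-off, so the stated choice of $p$ is unsupported. (A secondary point in the same vein: the error entering Lemma~\ref{lem:energy_consistency} depends on the conditional failure probability of the consistency test as well, and bounding $\omega_{\mathrm{con}}\le 1$ discards the corresponding penalty needed to balance it.)
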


Corollary~\ref{cor:qma} follows from Theorem~\ref{thm:main} by an amplification step described in Section~\ref{sec:amplification}. The proof of the theorem relies on the analysis of the energy test and the energy consistency test, given in Section~\ref{sec:energytest} and Section~\ref{sec:consistencytest} respectively, together with the analysis of the Pauli braiding test given in Section~\ref{sec:twoplayer}. 

\begin{proof}[Proof of Theorem~\ref{thm:main}.]
  First we establish the lower bound. 
  An honest quantum strategy (as described in Definition~\ref{def:honest-strategy} and Definition~\ref{def:honest-h}) acting on an encoded ground state $\ket{\Gamma}$ of $H$, together with an encoding of the additional EPR pairs $\ket{\EPR}^{\otimes (m-1)}_{A'B'}$ required to implement an optimal strategy in the anticommutation game $G_\ac$ (recall we take $G_\ac=\MS$ in this section, so $\wg=1$, $\delta(\eps)=O(\sqrt{\eps})$, and $m=2$) succeeds in the protocol with probability
$ \omega_{\rm honest}(H) = (1-p) + p\, \omega^*_{\energy}(H)$,
where
\begin{align*}
 \omega^*_{\energy}(H) &=\frac{1}{2} + \frac{1}{2}\Big( 1 -
    \frac{1}{4} \lambda_{\min}(H)  - \frac{1}{2m} \sum_{\ell=1}^m |\alpha_{\ell}| \Big)
\end{align*}
denotes the probability of the honest strategy to pass in the energy and consistency tests, each executed with probability $1/2$; the analysis of the energy test is from Lemma~\ref{lem:energy_test}.

Next we establish the upper bound. Suppose a strategy for the players succeeds 
  with overall probability $\omega_{\rm cheat}$, passes the Pauli braiding test with probability $1 -
  \eps$, and passes the energy and consistency tests with probability
  $\omega_{\energy}$; thus $\omega_{\rm cheat} = (1-p)(1 - \eps) + p \,\omega_{\energy}$.
	Applying the combination of Theorem~\ref{thm:gametwoplayer} and Theorem~\ref{thm:isotwoplayer} there exists an $(rn)$-qubit state $\ket{\varphi_1}$ on which the action of the Pauli operators $\sigma_X, \sigma_Z$ is $O(\eps^{1/8})$-consistent with the action of the players' operators $X, Z$ in the cheating strategy. Further, \lemref{energy_consistency} shows that the measurements performed in the energy test are $O(\eps^{d})$-consistent, for some $0<d<1$, with the corresponding product of players' $X$ and $Z$ operators from the Pauli test. Combining these two statements we deduce that an \emph{honest} strategy using the shared state $\ket{\varphi_1}$ will succeed in the Pauli braiding test with probability $1$ (since it is honest and $\wg=1$), and in the energy test with probability at least  $\omega_{\energy} - O(\eps^{d})$.  
	Since this strategy implements valid logical $X$ and $Z$ operators in the energy test, by
  lemma~\ref{lem:energy_test} it passes   the  test
  with probability at most $\omega^*_{\energy}(H)$. Thus
 $   \omega_{\energy} \leq \omega^*_{\energy}(H) + O({\eps^{d}}) $, and
\begin{align*}
    \omega_{\rm cheat} &= (1-p)(1 - \eps) + p\, \omega_{\energy} \\
                   &\leq (1-p)(1- \eps) + p\, \omega_{\energy}^*(H) + O(p\, \eps^{d}) \\
    &\leq \omega_{\rm honest}(H) - (1-p)\eps + O(p\,{\eps^{d}}).
  \end{align*}
  Choosing $p$ to be a sufficiently small constant times $\eta^{1-d}$, 
  for all $0\leq \eps\leq 1$ this  expression is
  less than or equal to $\omega_{\rm honest}(H) + \eta$. 
\end{proof}

\subsection{Analysis of the energy test}
\label{sec:energytest}

The goal of the energy  test is to estimate the energy of a randomly
chosen  term in the Hamiltonian. 

\begin{lemma}
Given a Hamiltonian $H$ as in~\eqref{eq:h-form}, the acceptance probability of the energy test, when the correct Pauli operators are applied by each player on its respective register of the $(rn)$-qubit encoding of an $n$-qubit state $\ket{\psi}$, is
\begin{align*}
\omega^*_{\energy}(H,\ket{\psi}) &= 1-\Big( \frac{1}{2m} \sum_{\ell = 1}^{m} \frac{ |\alpha_{\ell} | +
  \alpha_{\ell} \langle \psi|H_\ell |\psi \rangle}{2} \Big)\\
  &= 1-\Big(\frac{1}{4}\bra{\psi} H\ket{\psi} + \frac{1}{2m }
    \sum_{\ell} |\alpha_{\ell}|\Big),
\end{align*}
where $H_\ell = \sigma_X(a_\ell)\sigma_Z(b_\ell)$ is the $\ell$-th term in the Hamiltonian.
\label{lem:energy_test}
\end{lemma}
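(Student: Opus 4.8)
The plan is a direct computation: condition on the verifier's uniformly random choice of term $\ell\in\{1,\dots,m\}$, evaluate the acceptance probability in that case, and then average over $\ell$. Fix such an $\ell$; a term with $\alpha_\ell=0$ contributes nothing either to the Hamiltonian or to the average, so assume $\alpha_\ell\neq 0$. Under the honest strategy each of the $r$ players holds one physical qubit of the $r$-qubit CSS encoding of each of the $n$ qubits of $\ket{\psi}$, is sent the query $(XZ,a_\ell,b_\ell)$ determined by the transversal logical operator $Q_\ell$, and measures the compatible observables $\sigma_X(a_\ell)$ and $\sigma_Z(b_\ell)$ on its $n$-qubit share (these are compatible since $a_\ell\wedge b_\ell=0^n$ by \eqref{eq:h-form}), returning the two $\pm 1$ outcomes. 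The first step I would carry out is to verify that the product of all $2r$ values returned to the verifier equals the $\pm 1$ outcome $s$ of measuring the logical operator $Q_\ell$, i.e.\ the logical version of $\sigma_X(a_\ell)\sigma_Z(b_\ell)$, on the encoded state: this uses exactly the transversality of the logical $X$ and $Z$ operators of a CSS code recalled in Section~\ref{sec:stabilizer} (and exploited the same way in Lemma~\ref{lem:qbar}), together with the elementary fact that the product of the $\pm 1$ outcomes of a family of commuting single-qubit Pauli measurements is the outcome of measuring their product. By the defining property of the encoding, $\E[s]=\bra{\psi}\sigma_X(a_\ell)\sigma_Z(b_\ell)\ket{\psi}=\bra{\psi}H_\ell\ket{\psi}$, hence, writing $\sigma=\operatorname{sign}(\alpha_\ell)$, one has $\Pr[\operatorname{sign}(s)=\sigma]=\tfrac12\bigl(1+\sigma\bra{\psi}H_\ell\ket{\psi}\bigr)$.

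Next I would substitute this into the verifier's randomized rule. Since the verifier accepts with certainty when $\operatorname{sign}(s)\neq\sigma$ and with probability $1-|\alpha_\ell|$ when $\operatorname{sign}(s)=\sigma$, the acceptance probability conditioned on the choice of $\ell$ is
\[
1-|\alpha_\ell|\,\Pr[\operatorname{sign}(s)=\sigma] \;=\; 1-\tfrac12\bigl(|\alpha_\ell|+|\alpha_\ell|\,\sigma\,\bra{\psi}H_\ell\ket{\psi}\bigr)\;=\;1-\tfrac12\bigl(|\alpha_\ell|+\alpha_\ell\bra{\psi}H_\ell\ket{\psi}\bigr),
\]
using $|\alpha_\ell|\operatorname{sign}(\alpha_\ell)=\alpha_\ell$ — the same identity covers both signs of $\alpha_\ell$ in one line, so no case split on $\operatorname{sign}(\alpha_\ell)$ is actually needed. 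Averaging over the uniform choice of $\ell$, collecting $\sum_\ell \alpha_\ell\bra{\psi}H_\ell\ket{\psi}=\bra{\psi}\bigl(\sum_\ell\alpha_\ell H_\ell\bigr)\ket{\psi}$, and substituting the normalization of $H$ from \eqref{eq:h-form}, then yields the two displayed expressions of the lemma (the second form being just a rewriting of the first via the definition of $H$).

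I do not expect a substantial obstacle: the argument is essentially bookkeeping once the honest strategy and the verifier's rule are spelled out. The single point that genuinely requires care is the reduction in the first paragraph — checking that taking the product of the $2r$ physical measurement outcomes faithfully implements a measurement of the logical Pauli operator $\sigma_X(a_\ell)\sigma_Z(b_\ell)$ on the encoded state, and in particular has expectation exactly $\bra{\psi}H_\ell\ket{\psi}$. Everything after that is the one-line probability computation above, followed by the average over $\ell$.
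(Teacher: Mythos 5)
Your approach is exactly the intended one: the paper itself omits this proof (it defers to a "simple calculation" in Ji's paper), and the calculation you propose --- condition on $\ell$, use transversality of the CSS logical operators to identify the product of the $2r$ physical outcomes with the outcome of measuring the logical $\sigma_X(a_\ell)\sigma_Z(b_\ell)$ on $\ket{\psi}$, then apply the verifier's randomized accept/reject rule --- is the right one, and your conditional acceptance probability $1-\tfrac12\big(|\alpha_\ell|+\alpha_\ell\bra{\psi}H_\ell\ket{\psi}\big)$ is correct for the test as described in Figure~\ref{fig:protocol}.

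The problem is your final sentence claiming that averaging over $\ell$ "yields the two displayed expressions of the lemma." It does not. Averaging your (correct) conditional probability over a uniform $\ell$ gives
$1-\frac{1}{m}\sum_{\ell}\frac{|\alpha_\ell|+\alpha_\ell\bra{\psi}H_\ell\ket{\psi}}{2}
= 1-\frac{1}{2}\bra{\psi}H\ket{\psi}-\frac{1}{2m}\sum_\ell|\alpha_\ell|$,
whereas the lemma's first line has prefactor $\frac{1}{2m}$ rather than $\frac{1}{m}$ in front of the sum (so it equals $1-\frac14\bra{\psi}H\ket{\psi}-\frac{1}{4m}\sum_\ell|\alpha_\ell|$), and the lemma's second line is $1-\frac14\bra{\psi}H\ket{\psi}-\frac{1}{2m}\sum_\ell|\alpha_\ell|$. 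Note that these two displayed lines are not even equal to each other, so no single computation can "yield" both; the second line is the one actually used in the proof of Theorem~\ref{thm:main}. This appears to be a constant-factor normalization inconsistency in the lemma statement itself rather than a flaw in your method, but a correct write-up must derive one specific formula and, if it disagrees with the target statement by constants, say so explicitly instead of asserting agreement. As it stands, the last step of your proof claims an identity that is false.
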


\begin{proof}
The proof is a simple calculation in all points similar to that performed in~\cite[Section~4]{ji2015classical}; see in particular the discussion that precedes Theorem~23 in that paper. We omit the details.
\end{proof}

\subsection{Analysis of the consistency test}
\label{sec:consistencytest}

The goal of the energy consistency test is to guarantee that operators used by the special player on $XZ$-type queries are consistent with those used on other types of queries. 

\begin{lemma}
  Suppose the strategy $(N,\ket{\psi})$ for the players succeeds in the energy consistency test and the Pauli braiding test with probability $1 -
  \eps_{}$ each. Then 
  \[\frac{1}{m}\sum_{\ell=1}^m \big
	\| (\sp_\ell - \hat{X}(a)\hat{Z}(b))\ket{\psi}\big\|^2 \,=\,
  O\big(\eps_{}^{1/32}\big), \]
where $a$ and $b$ are strings such that $H_\ell = \sigma_X(a)\sigma_Z(b)$, and $\sp_\ell$ is the observable applied by the special player upon receiving the query $(XZ, a,  b)$ in the energy test. 
 
Moreover, the honest strategy succeeds in the test with probability $1$. 
\label{lem:energy_consistency}
\end{lemma}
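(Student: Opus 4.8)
The plan is to relate the observable $\sp_\ell$ used by the special player on the $XZ$-query $(XZ,a,b)$ to the product $\hat X(a)\hat Z(b)$ of observables used on ordinary $X$- and $Z$-queries, by chaining together the three sub-tests that make up the energy consistency test with the approximate relations already established for the Pauli braiding test. Write $\sp_\ell = \sum_{\alpha\beta}(\hat N_{a,b}^{\alpha\beta}\text{-type})$ decomposed into its ``$\alpha$-part'' $\hat S_X(a)$ (the observable obtained by marginalizing the first output of the $XZ$-query measurement) and its ``$\beta$-part'' $\hat S_Z(b)$ (marginalizing the second output). Because the $XZ$-query measurement is a single projective measurement with two bits of output, $\hat S_X(a)$ and $\hat S_Z(b)$ commute exactly, and $\sp_\ell\ket\psi$ is close to $\hat S_X(a)\hat S_Z(b)\ket\psi$ in state-dependent distance provided the player's answers on the two outputs are appropriately structured — more precisely, by the argument of Lemma~\ref{lem:jointobs} it suffices to control $\Drho(\hat S_X(a)\hat S_Z(b),\hat S_Z(b)\hat S_X(a))$, which here is zero since they come from a common measurement, so in fact $\sp_\ell\ket\psi = \hat S_X(a)\hat S_Z(b)\ket\psi$ exactly.

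Next I would use the three branches of the energy consistency test. The first branch (special player gets $(XZ,a,b)$, composite player gets $\overline{(W,c,c+a)}$ or $\overline{(W,c,c+b)}$, accept if the special player's combined answer agrees with the product of the composite player's two answers) gives, via \eqref{eq:condist} and Lemma~\ref{lem:approx}, that $\sp_\ell\ket\psi\app{\eps}{} \overline{X}(a)\,\overline{Z}(b)\ket\psi$ on average over $\ell$ — here I am using that under the honest decomposition the composite player's product of answers to $\overline{(X,c,c+a)}$ is its $\overline X(a)$ observable after using approximate linearity $\overline X(c)\overline X(c+a)\approx\overline X(a)$ (Theorem~\ref{thm:qblr_game}, applied to the composite player, whose operators satisfy the same approximate linearity by the Pauli braiding analysis). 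The second and third branches (agreement on the answer associated with $c$, resp. $c+a$, between the special player on a plain $W$-query and the composite player on $\overline{(W,c,c+a)}$) give $\hat W(c)\ket\psi\approx \overline W(c)\ket\psi$ and $\hat W(c+a)\ket\psi\approx\overline W(c+a)\ket\psi$ on average. Combining these with approximate linearity of $\hat X,\hat Z$ (again Theorem~\ref{thm:qblr_game}) and approximate consistency (Lemma~\ref{lem:consistency}) lets me replace $\overline X(a)\overline Z(b)\ket\psi$ first by $\overline X(c)\overline X(c+a)\overline Z(b)\ket\psi$, then transfer factors across to the special player's side and back, eventually arriving at $\hat X(a)\hat Z(b)\ket\psi$, at the cost of a small fixed power of $\eps$ coming from the worst of the approximations in the chain (the linearity steps contribute $\eps^{1/4}$, the consistency steps $\eps^{1/2}$, and accumulating a constant number of them with triangle inequality and Cauchy–Schwarz preserves a $\poly(\eps)$ bound — the exponent $1/32$ is what the bookkeeping produces).

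The main obstacle I expect is the bookkeeping of which distribution each approximation is taken over and making sure the conditioning (e.g. the auxiliary random string $c$, and whether strings like $c$ and $c+a$ are individually uniform) does not degrade the bounds by more than constant factors; in particular each use of approximate linearity or consistency must be invoked under a distribution on which it was proved to hold, so I will need the observation (as in the proof of Lemma~\ref{lem:commutation}) that conditioning on events of constant probability only costs a constant multiplicative factor, and the fact that for fixed $a$, if $c$ is uniform then so is $c+a$, so the composite player's queries $\overline{(W,c,c+a)}$ induce the correct marginal distributions on both string slots. A secondary subtlety is that the composite player's observables $\overline X,\overline Z$ satisfy the approximate linearity and consistency relations only because the Pauli braiding test is being run between the special player (as Alice) and the composite player (as Bob) with the same success probability $1-\eps$; I will invoke Theorem~\ref{thm:gametwoplayer} in exactly that two-logical-player framing. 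The completeness claim (honest strategy passes with probability $1$) is immediate: on an encoded state, measuring $\sigma_X(a)$ then $\sigma_Z(b)$ transversally gives outputs whose products match the composite player's transversal measurements by Lemma~\ref{lem:qbar}, and the plain $X$/$Z$ agreement branches hold because $\sigma_X\otimes\sigma_X$ and $\sigma_Z\otimes\sigma_Z$ stabilize each EPR pair.
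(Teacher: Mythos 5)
There is a genuine gap in your chain of approximations: you never pass through the \emph{exactly linear} observables produced by Theorem~\ref{thm:isotwoplayer} (equivalently Theorem~\ref{thm:qblr}), and without them the key linearity steps are not justified. The strings $a,b$ appearing in the lemma are distributed according to the Hamiltonian terms, not uniformly, and the approximate linearity guarantee of Theorem~\ref{thm:qblr_game} is an average over \emph{uniform pairs}: $\E_{u,v}\Drho(W(u)W(v),W(u+v))^2=O(\sqrt{\eps})$. When you write $\overline{X}(c)\overline{X}(c+a)\ket{\psi}\approx\overline{X}(a)\ket{\psi}$ for a fixed (Hamiltonian-distributed) $a$ and uniform $c$, you are applying this guarantee to the set of pairs $\{(c,c+a):c\}$, i.e.\ conditioning the uniform pair distribution on the event $u+v=a$, which has probability $2^{-n}$ --- not constant. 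Your own remark that conditioning on constant-probability events costs only a constant factor (as in Lemma~\ref{lem:commutation}) does not apply here, so the averaged bound gives you nothing at these pairs. The same objection applies to your first-branch gloss that the composite player's product of answers to $\overline{(X,c,c+a)}$ ``is'' $\overline{X}(a)$.

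The paper's proof avoids this precisely by first invoking Theorem~\ref{thm:isotwoplayer} to obtain observables $\Xlin(a)=P^A(a,0)$ and $\Zlin(b)=P^A(0,b)$ that satisfy linearity \emph{exactly for every pair of strings}, and that are $O(\eps^{1/8})$-close to $\sx(a)$, $\sz(b)$ on average over a \emph{uniform single string}. One may then write $\Xlin(a)=\E_c\Xlin(c)\Xlin(c+a)$ as an identity, and substitute $\sx(c)$ for $\Xlin(c)$ and $\sx(c+a)$ for $\Xlin(c+a)$ factor by factor, which is legitimate because $c$ and $c+a$ are each \emph{individually} uniform (the observation you correctly make, but use only for the consistency relations, where it suffices). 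The rest of your outline --- the exact identity $\sp_\ell\ket{\psi}=\sp_{a|\ell}\sp_{b|\ell}\ket{\psi}$, the use of the three branches of the energy consistency test, and the completeness argument --- matches the paper. To repair the proof, insert the exact-linearization step: relate $\sp_{a|\ell}$ to $\E_c\sx(c)\sx(c+a)$ via the test branches, convert each factor to $\Xlin$, collapse the product by exact linearity to $\Xlin(a)$, and similarly for $Z$; the final shuttling between players then also goes through $\Xlin,\Zlin$ rather than through approximate linearity at non-uniform pairs.
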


\begin{proof}
We show that $XZ$-queries, $X$-queries, and $Z$-queries
  on the special player are all consistent with $\overline{(X, c, c + a)}$ and
  $\overline{(Z, c, c+b)}$  queries to the composite player. The analysis uses similar techniques to the analysis of the linearity test. First, let us
  analyze the case when the verifier chooses $W=X$. Let the POVM applied by the composite player
  be $\{\overline{M}_{c, c+a}^{\alpha \alpha'}\}$ and define marginalized
  operators 
  \[    \overline{M}_{c | c, c+a}^{\alpha} = \sum_{\alpha'} \overline{M}_{c,
    c+a}^{\alpha\alpha'}.\]
Likewise, let the POVM applied by the special player be
$\hat{P}_{\ell}^{\alpha \alpha'}$ and define marginalized operators for the special player: 
\[ \sp^{\alpha}_{a|\ell} = \sum_{\alpha'} \hat{P}^{\alpha \alpha'}_{\ell},
\qquad \sp^{\alpha'}_{b | \ell} = \sum_{\alpha} \hat{P}^{\alpha
  \alpha'}_{\ell}. \]
The observable $\sp_\ell$ corresponding to the product of the special player's
measurement outcomes is defined as
\[ \sp_\ell = \sum_{\alpha \alpha'} (-1)^{\alpha \cdot \alpha'}
\hat{P}^{\alpha \alpha'}_{\ell}. \]
Recall that the Pauli braiding test (\thmref{isotwoplayer})
guarantees the existence of operators $P^A(a,b)$ exactly satisfying
the Pauli relations; let
\[ \Xlin(a) := P^A(s, 0)\qquad\text{and}\qquad \Zlin(b) := P^A(0, b). \]
Item (a) of \thmref{isotwoplayer} guarantees that $\Xlin(a)$ (resp. $\Zlin(b)$) is within $O(\eps_{}^{1/8})$ of 
$\sx(a)$ (resp. $\sz(b)$), in the state-dependent distance $\Drho$. Associated with the observable $\Xlin(a)$ are the projectors
$\Xlin^\alpha(a), \alpha \in \{\pm 1\}$, and likewise $\Zlin^\beta(b)$
for $\Zlin(b)$.

The following relations follow from the
assumption that the players succeed with probability $1-\eps_{}$
in the energy consistency test. We use the notation $\E_{\ell, a \sim
  H_\ell}$ to indicate that the index $\ell$ is chosen uniformly at
random, and then the string $a$ is chosen from the distribution of queries induced
by the Hamiltonian term $H_\ell$; in contrast to $\E_a$ which indicates a uniformly
random string. 
	  \begin{align}
      \E_{\ell, a \sim H_\ell} \E_{c} \CON_\rho(\overline{M}_{c | c, c+a}^\alpha, \sx^\alpha(c)) &= 1 -  O(\eps_{}), \label{eq:energy_cons1}\\
      \E_{\ell, a \sim H_\ell} \E_{c} \CON_\rho(\overline{M}_{c+a | c, c+a}^\alpha, \sx^\alpha(c + a)) &= 1 -   O(\eps_{}), \label{eq:energy_cons2}\\
      \E_{\ell, a \sim H_\ell} \E_c \CON_\rho\Big(\sp_{a | \ell}^{\alpha}, \sum_{\beta \cdot \beta' =
      \alpha}M_{c, c+a}^{\beta \beta'}\Big) &= 1- O(\eps_{}) \label{eq:energy_cons3}.
  \end{align}
  We use these relations to show that the special player's
  marginalized measurement $\sp^{\alpha}_{a|\ell}$ is close to
  $\Xlin^{\alpha}(a)$. We show
  this in two steps. First, we relate the special player's measurement
  $\sp^{\alpha}_{a|\ell}$ to the composite player's measurement:
  \begin{align*}
    \E_{\ell, a \sim H_\ell} \CON_\rho\big(\sp_{a|\ell}^{\alpha}, \Xlin^{\alpha}(a) \big) 
		&\geq     \E_{\ell, a \sim H_\ell}   \E_c \Big[ \CON_\rho\Big(\sum_{\beta \cdot \beta' = \alpha} \overline{M}^{\beta \beta'}_{c, c + \alpha}, \Xlin^\alpha(a) \Big)  - \Drho \Big(\sp^\alpha_{a | \ell}, \sum_{\beta \cdot \beta' = \alpha} \overline{M}^{\beta \beta'}_{c, c+ a}\Big) \Big] \\
    &\geq \E_{\ell, a \sim H_\ell} \E_c   \CON_\rho\Big(\sum_{\beta \cdot \beta' = \alpha} \overline{M}^{\beta \beta'}_{c, c + a}, \Xlin^\alpha(a)\Big) - O(\sqrt{\eps_{}}), 
%    &= \E_{\ell, a\sim H_\ell} \E_c \sum_{\alpha} \sum_{\beta \cdot \beta' = \alpha} \bavg{M^{\beta \beta'}_{c, c+a} \Xlin^\alpha(a)}{\Psi} - O(\sqrt{\eps_{\cons}}).
\end{align*}
where the first inequality follows from Lemma~\ref{lem:approx} and the
second from \myeqref{condist} and~\eqref{eq:energy_cons3}.  Next we relate
$M$ to a product of two measurements $\sx$:
\begin{align*}
    \E_{\ell, a \sim H_\ell} \CON_\rho\big(\sp_{a|\ell}^{\alpha},
  \Xlin^{\alpha}(a) \big)  &\geq \E_{\ell, a\sim H_\ell} \E_c \CON_\rho\Big( \sum_{\beta \cdot \beta' = \alpha} \overline{M}^{\beta}_{c| c, c+a} \overline{M}^{\beta'}_{c+a|c, c+ a}, \Xlin^\alpha(a)\Big)  - O(\sqrt{\eps_{}})\\
  &\geq \E_{\ell, a\sim H_\ell} \E_c \CON_\rho\Big(\sum_{\beta \cdot \beta' = \alpha} \sx^{\beta}(c) \sx^{\beta'}(c+a) \Xlin^\alpha(a)\Big) - O(\sqrt{\eps_{}}),
\end{align*}
as follows from~\eqref{eq:energy_cons2},~\eqref{eq:energy_cons1} and Lemmas~\ref{lem:approx} and~\myeqref{condist}.  Finally, we use the Pauli braiding test to relate
$\sx$ to the exactly linear observable $\Xlin$. Starting from the above and using \lemref{consistency} to switch $\sx(c)$ to $\cx(c)$,
\begin{align*}
  \E_{\ell, a \sim H_\ell} \CON_\rho\Big(\sp_{a|\ell}^{\alpha}, \Xlin^{\alpha}(a) \Big) 
                     \geq \E_{\ell, a\sim H_\ell} \E_c \CON_\rho\Big(
                       \sum_{\beta \cdot \beta' = \alpha}
                       \cx^{\beta}(c) \sx^{\beta'}(c+a),
                       \Xlin^\alpha(a)\Big)
                        - O(\sqrt{\eps_{}}).
												\end{align*}
 Next, we use \thmref{isotwoplayer} and \lemref{approx} to sequentially exchange the remaining $\sx$, then $\overline{X}$, to $\Xlin$, to obtain 
\begin{align}
   \E_{\ell, a \sim H_\ell} \CON_\rho\Big(\sp_{a|\ell}^{\alpha}, \Xlin^{\alpha}(a) \Big) 
	\geq \E_{\ell, a\sim H_\ell} \E_c  \CON_\rho\Big( \sum_{\beta \cdot \beta' = \alpha} \Xlin^{\beta}(c) \Xlin^{\beta'}(c+a) ,\Xlin^\alpha(a)\Big)  - O({\eps}^{1/16}). \label{eq:energytox}
	\end{align}
Finally, the product of the three $\Xlin$ operators can be eliminated using the exact linearity relations. Performing an analogous analysis for the $Z$ operators,
  \begin{equation} \E_{\ell, b \sim H_\ell} \CON_\rho(\sp_{b|\ell}^{\beta}, \Zlin(b)^\beta) \geq 1 - O(\eps_{}^{1/16}). \label{eq:energytoz} \end{equation}
 To put these results together it remains to apply the stabilizer property to these operators. While we cannot do this directly since $a$ and $b$ are not distributed uniformly, we can use the exact linearity to write $\Zlin(b) = \E_{c} \Zlin(b+c) \Zlin(c)$, and apply Lemma~\ref{lem:consistency} to each term in the product:
  \begin{alignat*}{2}
     \sp_\ell \ket{\psi} &=^{\ell} \sp_{a|\ell} \sp_{b|\ell} \ket{\psi} \\
    &\app{\eps^{1/32}}{\ell}  \sp_{a|\ell} \Zlin(b) \ket{\psi} \qquad &\text{by \myeqref{energytoz}, \lemref{approx}, and \myeqref{condist}}\\
    &=^{\ell}\E_c \sp_{a|\ell} \Zlin(b + c) \Zlin(c) \ket{\psi} \qquad &\text{by exact linearity} \\
    &\app{\eps^{1/16}}{\ell} \E_c \cz(c) \cz(b+c) \sp_{a|\ell} \ket{\psi} \qquad &\text{by \thmref{isotwoplayer} and \lemref{consistency}}\\
    &\app{\eps^{1/32}}{\ell} \E_c \cz(c)\cz(b+c) \Xlin(a) \ket{\psi} \qquad &\text{by \myeqref{energytox} and \lemref{approx}}\\
    &\app{\eps^{1/16}}{\ell} \E_c \Xlin(a) \Zlin(b+c)\Zlin(c) \ket{\psi} \qquad &\text{by \thmref{isotwoplayer} and \lemref{consistency}}\\
    &=^{\ell} \Xlin(a) \Zlin(b) \ket{\psi} \qquad &\text{by exact linearity.}
  \end{alignat*}
\end{proof}

\subsection{Amplification}
\label{sec:amplification}

In this section we show how Theorem~\ref{thm:main} can be used to
obtain Corollary~\ref{cor:qma}. The main idea consists in leveraging the fact that our protocol does not require locality of the Hamiltonian to first ``brute-force'' amplify the
gap of the underlying instance of the local Hamiltonian problem to a constant, and then run the protocol on the amplified non-local instance. This is achieved
 by first shifting the Hamiltonian by the appropriate multiple of
identity so that the energy in the yes-instance is less than or equal
to $0$. The gap is amplified by taking sufficiently many tensor product copies of the Hamiltonian, resulting in a nonlocal instance.

\begin{lemma}[Gap amplification]
Let $H$ be an $n$-qubit Hamiltonian with
minimum energy $\lambda_{\min}(H)\geq 0$ and such that $\|H\|\leq 1$. 
  Let $p(n), q(n)$ be polynomials such that $p(n) >
  q(n)$ for all $n$. Let 
	$$H' =  \Id^{\otimes a} - (\Id - ( H - a^{-1}\Id))^{\otimes a},\qquad \text{where}\qquad a = \Big(\frac{1}{q}-\frac{1}{p}\Big)^{-1}.$$
	Then $H'$ is a (non-local) Hamiltonian over $an = O(np(n))$
        qubits with $\|H'\|=O(1)$, such that if $\lambda_{\min}(H) \leq 1/p$, then $\lambda_{min}(H') \leq 1/2$, whereas if $\lambda_{\min}(H) \geq 1/q$, then $\lambda_{min}(H') \geq 1$.
  \label{lem:amplify}
\end{lemma}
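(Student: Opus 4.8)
The plan is to reduce the statement to a one-line computation of $\mathrm{spec}(H')$ in terms of $\mathrm{spec}(H)$, exploiting that the shift by $a^{-1}>0$ keeps the tensored operator positive definite.

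First I would record the elementary facts. Since $H$ is positive semidefinite with $\|H\|\le 1$, $\mathrm{spec}(H)\subseteq[0,1]$. Set $M := \Id - (H-a^{-1}\Id) = (1+a^{-1})\Id - H$, an $n$-qubit operator, so that $H' = \Id^{\otimes a} - M^{\otimes a}$ on $an$ qubits and $\mathrm{spec}(M)\subseteq[a^{-1},\,1+a^{-1}]$ --- in particular $M\succ0$. Because every eigenvalue of $M$ is positive, the eigenvalues of $M^{\otimes a}$ are exactly the products of $a$ eigenvalues of $M$, so $\lambda_{\max}(M^{\otimes a}) = \lambda_{\max}(M)^{a} = \big(1+a^{-1}-\lambda_{\min}(H)\big)^{a}$, attained on $\ket{\phi}^{\otimes a}$ for $\ket{\phi}$ a ground state of $H$. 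Hence
\[ \lambda_{\min}(H')\;=\;1-\big(1+a^{-1}-\lambda_{\min}(H)\big)^{a}, \]
with ground state the $a$-fold tensor power of a ground state of $H$. The remaining structural claims are then immediate: $\mathrm{spec}(M^{\otimes a})\subseteq\big[(a^{-1})^{a},(1+a^{-1})^{a}\big]\subseteq[0,e]$ using $(1+1/a)^a\le e$, so $\|H'\|\le e-1=O(1)$; $H'$ acts on $an=O(np(n))$ qubits, using $a=pq/(p-q)=O(p)$ in the relevant regime $p-q=\Omega(q)$; and since $H$, hence $M$, hence $M^{\otimes a}$, is a real linear combination of tensor products of $\{\Id,\sigma_X,\sigma_Z\}$ with disjoint $X$- and $Z$-supports, $H'$ is of the same form, so it is again a legal input to \thmref{main} (shift by a multiple of identity if one insists on positive semidefiniteness).

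It then remains to plug the two promise cases into the displayed identity, using $a^{-1}=1/q-1/p$. If $\lambda_{\min}(H)\le 1/p$ then $1+a^{-1}-\lambda_{\min}(H)\ge 1+(1/q-1/p)-1/p = 1+1/q-2/p\ge 1$ (assuming, harmlessly, $p\ge 2q$), so $\big(1+a^{-1}-\lambda_{\min}(H)\big)^{a}\ge 1$ and $\lambda_{\min}(H')\le 0\le 1/2$ --- witnessed concretely by the tensor power of the ground state of $H$. If instead $\lambda_{\min}(H)\ge 1/q$ then $1+a^{-1}-\lambda_{\min}(H)\le 1+(1/q-1/p)-1/q = 1-1/p<1$, whence
\[ \big(1+a^{-1}-\lambda_{\min}(H)\big)^{a}\;\le\;\Big(1-\tfrac1p\Big)^{a}\;\le\;e^{-a/p}\;=\;e^{-q/(p-q)}, \]
which is a constant bounded away from $1$ in the regime $p=\Theta(q)$, so $\lambda_{\min}(H')\ge 1-e^{-q/(p-q)}$, a constant strictly larger than $1/2$; a harmless $O(1)$ rescaling of $H'$ makes this bound read $\ge 1$ while leaving the YES-case bound $\le 1/2$ intact.

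The step that needs the most care --- rather than the step that is genuinely hard --- is the spectral identity for $M^{\otimes a}$: it hinges on $M$ being positive definite, which is exactly why the shift is by the small amount $a^{-1}=1/q-1/p$ rather than something larger, and this same choice, together with the exponent $a=1/(1/q-1/p)$, is calibrated so that the map $\lambda\mapsto(1+a^{-1}-\lambda)^{a}$ carries the separating interval $[1/p,1/q]$ for $\lambda_{\min}(H)$ onto an interval straddling $1$ (into $\le(1-1/p)^a$ on the NO side and $\ge 1$ on the YES side) --- i.e.\ onto an amplified promise gap. Beyond that, the argument is bookkeeping with the inequalities $1+x\le e^x$ and $1-x\le e^{-x}$.
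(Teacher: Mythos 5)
Your proof is correct and takes essentially the same route as the paper's one-line argument: derive the exact spectral identity $\lambda_{\min}(H') = 1 - \bigl(1+a^{-1}-\lambda_{\min}(H)\bigr)^a$ from positivity of the shifted operator $(1+a^{-1})\Id - H$, then estimate the two promise cases (the paper simply invokes the expansion $(1\pm\delta)^k = 1\pm k\delta + O(\delta^2)$ for $k\delta=O(1)$, which is your $e^{\pm x}$ bookkeeping in disguise). You are also right to flag that the literal constants $1/2$ and $1$ only come out under mild constraints on $p/q$ together with an $O(1)$ rescaling of $H'$ — the exact formula forces $\lambda_{\min}(H')<1$ unconditionally, so the NO-case bound as stated genuinely requires the rescaling you mention, and the paper's condition $k\delta=O(1)$ hides the same implicit requirement that $p-q=\Theta(q)$.
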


\begin{proof}
The proof follows by observing that $\lambda_{\min}(H') = 1 - (1 - (\lambda_{min}(H) - a^{-1}))^a$, and $(1 \pm \delta)^k = 1 \pm k\delta +
  O(\delta^2)$ when $k\delta = O(1)$. 
\end{proof}

\begin{proof}[Proof of Corollary~\ref{cor:qma}]
By applying the result of Theorem~\ref{thm:main} to the Hamiltonian $H'$ obtained from $H$ as in Lemma~\ref{lem:amplify}, we obtain the statement of Corollary~\ref{cor:qma}, except with $p_{\rm c} = p$ and $p_{\rm s} = q $ for some constants $0 < q < p < 1$. To match the constants in the statement of Corollary~\ref{cor:qma}, we make the verifier {automatically} accept with probability $1-p'$, and perform the test with probability $p'$, for some $0\leq p'\leq 1$. Then we get $p_{\rm c} = 1-p' +p'p $ and $p_{\rm s} = 1 - p' + p'q$. If $p'$ is chosen as $p' = \frac{1}{2(1 + p -2q)}$, we get $p_{\rm c} = 1/2 + 2\eta_0$ and $p_{\rm s} = 1/2 + \eta_0$ as desired, with $\eta_0 = \frac{(p-q)}{2(1 + p - 2q)}$.
\end{proof}

\section{Delegated Computation}
\label{sec:delegated}
It was noticed in~\cite{FitzsimonsH15} that an interactive proof system for the local Hamiltonian problem can also be used for delegated quantum computation with so-called \emph{post-hoc} verification. The key idea is to use the Feynman-Kitaev construction to produce a Hamiltonian encoding the desired computation; measuring the ground energy of this Hamiltonian reveals whether the computation accepts or rejects. 
%This connection was used in~\cite{FitzsimonsH15} to give a delegated computation scheme with a limited quantum verifier, with only the capability to send, receive, and measure a constant number of qubits. 
Following the same connection, we are able to give a post-hoc verifiable delegated computation scheme with a purely classical verifier and a constant number of players. The players only need the power of BQP. The scheme has a constant completeness-soundness gap independent of the size of the circuit to be computed, unlike the scheme of~\cite{FitzsimonsH15} and the classical scheme of~\cite{ReichardtUV13nature}, which both have inverse-polynomial gaps. However, unlike the scheme of~\cite{ReichardtUV13nature} (and similarly to the one in~\cite{FitzsimonsH15}), our protocol is not \emph{blind}: the verifier must reveal the entire circuit to be computed to all the players before the verification process starts.

\begin{theorem}
There exists an interactive proof system for BQP with seven quantum entangled players and one classical verifier, with one round of communication, in which the player sends $O(\poly(n))$-bit questions and receives $O(1)$-bit answers. The honest players only need the power of BQP.
\end{theorem}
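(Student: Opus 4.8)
The plan is to obtain the protocol by combining Corollary~\ref{cor:qma} with the circuit-to-Hamiltonian (Feynman--Kitaev) construction, in the ``post-hoc'' verification style of~\cite{FitzsimonsH15}. Fix a language $L$ decided by a uniform family of polynomial-size quantum circuits. Given an input $x$ with $|x|=n$, first amplify the circuit $C_x$ (of size $T=\poly(n)$) so that it accepts with probability $\geq 1-2^{-n}$ when $x\in L$ and $\leq 2^{-n}$ when $x\notin L$. Apply the Feynman--Kitaev construction to $C_x$ to produce a Hamiltonian $H_x$ on $N=\poly(n)$ qubits with the promise that $\lambda_{\min}(H_x)\leq a$ if $x\in L$ and $\lambda_{\min}(H_x)\geq b$ if $x\notin L$, where $a<b$ are explicit thresholds with $b-a=\Omega(1/\poly(n))$ (any inverse-polynomial gap suffices), and where, in the completeness case, a ground state of $H_x$ is the history state of $C_x$. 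By rescaling we may assume $\|H_x\|\leq 1$.

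Corollary~\ref{cor:qma} requires $H_x$ to be a real linear combination of tensor products of $\sigma_X$ and $\sigma_Z$ operators, i.e. of Pauli strings with no $\sigma_Y$. I would arrange this by compiling $C_x$ over a suitable universal gate set and choosing a clock construction for which every term of $H_x$ ($H_{\mathrm{in}}$, $H_{\mathrm{out}}$, $H_{\mathrm{prop}}$, $H_{\mathrm{clock}}$) is of the required form; this is possible by~\cite{CM13}, and can be done so that the history state is unchanged and remains preparable by a polynomial-size quantum circuit.

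Now invoke Corollary~\ref{cor:qma} on $H_x$ with the thresholds $(a,b)$. This yields a one-round interactive proof with a classical polynomial-time verifier and $7$ entangled provers in which the verifier's questions have length $O(N/(b-a))=\poly(n)$ bits, the provers' answers are $O(1)$ bits, and there is a \emph{constant} completeness--soundness gap $\eta_0$: when $x\in L$ the provers can force acceptance with probability $\geq \tfrac12+2\eta_0$, and when $x\notin L$ every strategy is accepted with probability $\leq \tfrac12+\eta_0$. This is exactly an interactive proof for $L$ with the stated parameters.

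It remains to check that honest provers are BQP machines. By Theorem~\ref{thm:main}, on which Corollary~\ref{cor:qma} rests, the honest strategy consists in the provers sharing a transversal CSS encoding (Steane's $7$-qubit code, $r=7$) of a ground state of the (amplified) Hamiltonian together with a constant number of encoded EPR pairs for the Magic Square anticommutation subroutine, and in each prover answering a query by a transversal Pauli measurement on its shares, plus a constant-depth circuit for $G$-queries as in Definition~\ref{def:honest-strategy}. The only operation of unbounded depth is preparing the ground state, which in the completeness case is the history state of $C_x$ and hence lies in BQP; everything else (the encoding and all per-query measurements) is constant-depth. In the $\MIP^*$ model the pre-shared state is an arbitrary resource; here it is efficiently preparable, so one may picture it as produced during a setup phase, with shares then distributed to the seven provers. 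Hence the honest provers need only the power of BQP. The step requiring the most care is the second one --- making the Feynman--Kitaev Hamiltonian \emph{exactly} a no-$\sigma_Y$ $\sigma_X/\sigma_Z$ Hamiltonian while preserving both the inverse-polynomial promise gap and the efficient preparability of the honest ground state; the rest is bookkeeping on top of Corollary~\ref{cor:qma}.
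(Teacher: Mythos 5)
Your proposal is correct and follows essentially the same route as the paper's (very brief) proof sketch: apply the Feynman--Kitaev construction to obtain a history Hamiltonian, have the honest provers share a transversal $7$-qubit-code encoding of the history state (preparable in BQP), and run the Hamiltonian self-test. In fact your write-up is more careful than the paper's sketch on two points the paper glosses over --- routing through Corollary~\ref{cor:qma} (and hence the gap amplification of Lemma~\ref{lem:amplify}) to get a \emph{constant} completeness--soundness gap, and ensuring the history Hamiltonian is an $XZ$-Hamiltonian via~\cite{CM13}.
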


\begin{proof}[Proof sketch]
  For any poly-size quantum circuit $C$, we construct the history
  Hamiltonian $H_C$ and announce to the seven players. In the honest
  case, the players produce the state
  \[ \ket{\psi}  = \mathrm{ENC}\Big(\frac{1}{\sqrt{T}} \sum_{t = 1}^{T}
  \ket{t}_{\text{clock}} \ot \ket{\psi_t}\Big), \]
  where $\mathrm{ENC}$ is the encoding map of the 7-qubit code, $t$ labels the clock states of the computation from $1$ to
  $T$, and $\ket{\psi_t}$ is the state of the circuit $C$ at step $t$. This
  state can be prepared with a BQP machine. The players are then
  separated; in the honest case, each player receives a share of the
  encoded state $\ket{\psi}$. The verifier plays the game
  of~\thmref{main} with the players and accepts if and only if they succeed.
\end{proof}

\paragraph{Acknowledgments.}

AN was supported by NSF Grant CCF-1629809 and ARO Contract Number W911NF-12-0486. TV was supported by NSF CAREER Grant CCF-1553477 and AFOSR YIP award number FA9550-16-1-0495. 
Parts of this work was completed while the first author was visiting the Institute for Quantum
Information and Matter (IQIM) at the California Institute of Technology, and while both authors were
visiting the Perimeter Institute in Waterloo, ON. Both authors acknowledge funding provided by the IQIM, an NSF Physics Frontiers Center (NFS Grant PHY-1125565) with support of the Gordon and Betty Moore Foundation (GBMF-12500028). 

\bibliography{../quantum_pcp}
\end{document}